\documentclass[aps, pra, superscriptaddress, 12pt, longbibliography, nofootinbib, tightenlines, amsfonts]{revtex4-2}

\usepackage{microtype}

\renewcommand{\thesection}{\arabic{section}}
\renewcommand{\thesubsection}{\thesection.\arabic{subsection}}
\renewcommand{\thesubsubsection}{\thesubsection.\arabic{subsubsection}}

\makeatletter
\renewcommand{\p@subsection}{}
\renewcommand{\p@subsubsection}{}
\makeatother

\usepackage{amsmath,amsfonts,amssymb,amsthm}
\usepackage{mathtools}
\usepackage{setspace}
\usepackage[normalem]{ulem} 

\usepackage{graphicx}
\usepackage[dvipsnames]{xcolor}
\usepackage{url}

\usepackage{verbatim}
\usepackage{alltt}
\usepackage{moreverb}
\usepackage{bm}
\usepackage{bbold}

\usepackage{xr}

\setlength{\unitlength}{1in}

\usepackage{tikz}
\usetikzlibrary{quantikz}




\newcommand{\listb}[1]{\{#1\}}



\newcommand{\ketbra}[1]{\KBop{#1}{#1}}


\usepackage[normalem]{ulem} 
\providecommand{\ignore}[1]{}
\providecommand{\auedit}[1]{#1}

\newif\ifcmnt
\cmnttrue
\ifdefined\cmntsoff\cmntfalse\fi

\ifcmnt
\providecommand{\aucmnt}[1]{#1}

\providecommand{\SGcolor}{\color{Magenta}}

\else
\providecommand{\aucmnt}{}
\renewcommand{\aucmnt}[1]{}
\renewcommand{\auedit}[1]{}

\providecommand{\SGcolor}{}
\renewcommand{\SGcolor}[1]{}

\fi









\usepackage[dvipsnames]{xcolor}
\usepackage[colorlinks=true, urlcolor=violet, linkcolor=blue, citecolor=red, hyperindex=true, linktocpage=true]{hyperref}
\usepackage{graphicx}
\usepackage{enumitem}
\usepackage{stmaryrd} 
\usepackage{xpatch} 

\usepackage{amsmath,amssymb,amsfonts,mathtools,dsfont,relsize,scalerel}
\usepackage{suffix} 

\usepackage{comment}


\newcommand{\vpd}[0]{\vphantom{\dagger}}

\newcommand{\vpp}[0]{\vphantom{\prime}}

\setcounter{tocdepth}{2}

\numberwithin{equation}{section}

\makeatletter
\patchcmd{\@ssect@ltx}
    {\addcontentsline{toc}{#1}{\protect\numberline{}#8}}
    {}
    {}
    {}
\makeatother


\renewcommand{\thesection}{\arabic{section}}
\renewcommand{\thesubsection}{\thesection.\arabic{subsection}}
\renewcommand{\thesubsubsection}{\thesubsection.\arabic{subsubsection}}
\makeatletter
\renewcommand{\p@subsection}{}
\renewcommand{\p@subsubsection}{}
\def\l@subsubsection#1#2{}
\makeatother

\makeatletter
\g@addto@macro\bfseries{\boldmath}
\makeatother

\usepackage{tikz}
\usetikzlibrary{quantikz}


\usepackage{amsthm}
\newtheorem{thm}{Theorem}
\newtheorem{cor}[thm]{Corollary}
\newtheorem{lem}[thm]{Lemma}
\newtheorem{prop}[thm]{Proposition}

\newtheorem{defn}[thm]{Definition}

\newtheorem{mainthm}{Theorem}


\DeclarePairedDelimiter{\norm}{\lVert}{\rVert}
\DeclarePairedDelimiter{\abs}{\lvert}{\rvert}
\DeclarePairedDelimiter{\expval}{\langle}{\rangle}




\newcommand{\ident}[0]{\mathds{1}}

\newcommand{\ii}[0]{\mathrm{i}}


\newcommand{\bvec}[1]{\boldsymbol{#1}}




\newcommand{\kron}[1]{\delta^{\vpp}_{#1}}\WithSuffix\newcommand\kron*[1]{\delta_{#1}}

\newcommand{\Reals}{\mathbb{R}}
\newcommand{\Comps}{\mathbb{C}}
\newcommand{\Ints}{\mathbb{Z}}
\newcommand{\Nats}{\mathbb{N}}



\newcommand{\hilbert}[0]{\mathcal{H}}


\newcommand{\KBop}[2]{\left| #1 \middle\rangle \hspace{-0.4mm} \middle\langle #2 \right|}

\newcommand{\inprod}[2]{ \left\langle #1 \middle| #2 \right\rangle}

\newcommand{\matel}[3]{\left\langle #1 \middle| #2 \middle| #3 \right\rangle}
\WithSuffix\newcommand\matel*[3]{\langle #1 | #2 | #3 \rangle}

\DeclareMathOperator*{\trace}{tr}
\newcommand{\tr}[1]{ \trace \left(  #1  \right)}\WithSuffix\newcommand\tr*[1]{\trace ( #1 )}
\newcommand{\Tr}[1]{ \trace \left[  #1  \right]}\WithSuffix\newcommand\Tr*[1]{\trace [ #1 ]}


\newcommand{\cliff}[0]{\mathbf{C}}
\newcommand{\Cliff}[1]{\cliff^{\vpp}_{\hspace{-0.1mm} #1}}\WithSuffix\newcommand\Cliff*[1]{\cliff_{\hspace{-0.3mm} #1}}

\newcommand{\pauliset}[0]{\mathbf{P}}
\newcommand{\PauliSet}[1]{\pauliset^{\vpp}_{\hspace{-0.7mm} #1}}\WithSuffix\newcommand\PauliSet*[1]{\pauliset_{\hspace{-0.7mm}  #1}}

\newcommand{\pauligroup}[0]{\mathbb{P}}
\newcommand{\PauliGroup}[1]{\pauligroup^{\vpp}_{\hspace{-0.7mm} #1}}\WithSuffix\newcommand\PauliGroup*[1]{\pauligroup_{\hspace{-0.7mm}  #1}}

\newcommand{\Pauli}[1]{P^{\vpp}_{\hspace{-0.2mm} #1}}\WithSuffix\newcommand\Pauli*[1]{P^{\,}_{\hspace{-0.2mm} #1}}

\newcommand{\PauliChan}[1]{\mathcal{P}^{\vpp}_{\hspace{-0.7mm} #1}}\WithSuffix\newcommand\PauliChan*[1]{\mathcal{P}_{\hspace{-0.7mm} #1}}



\newcommand{\Logicals}[0]{\mathcal{L}}

\newcommand{\Stabs}[0]{\mathcal{S}}

\newcommand{\stab}[0]{\mathsf{S}}



\newcommand{\nlog}[0]{k}

\newcommand{\nphys}[0]{n}

\newcommand{\POVM}[0]{\Pi}


\newcommand{\sinput}{s_{\text{in}}}
\newcommand{\soutput}{s_{\text{out}}}
\newcommand{\smeas}{s_{\text{meas}}}
\newcommand{\serr}{s_{\text{err}}}

\newcommand{\ssp}{syndrome decoding symmetry property}
\newcommand{\Ssp}{Syndrome decoding symmetry property}

\newcommand{\smip}{syndrome marginal independence property} 
\newcommand{\Smip}{Syndrome marginal independence property}
\newcommand{\ssig}{\sqrt{\Sigma}}

\newcommand{\prepvec}[0]{r_P}

\newcommand{\measvec}[0]{m_P}

\newcommand{\elt}[0]{i}

\newcommand{\ncyc}[0]{K} 

\newcommand{\cycind}[0]{k} 


\newcommand{\CFrame}[0]{U}

\newcommand{\Cor}[0]{\mathcal{C}}

\newcommand{\ErrChan}[0]{\mathcal{E}}


\newcommand{\zproj}[0]{\pi}
\newcommand{\ZProj}[1]{\zproj (#1)}\WithSuffix\newcommand\ZProj*[1]{\zproj \left( #1 \right)}


\newcommand{\pweight}[0]{c}

\newcommand{\PWeight}[1]{\pweight^{({#1})}_{\ell,\bvec{s}_{#1}}}\WithSuffix\newcommand\PWeight*[2]{\pweight^{({#1})}_{#2}}


\newcommand{\pprob}[0]{\lambda}

\newcommand{\PProb}[1]{\pprob^{({#1})}_{\Pauli*{\ell_{#1}} \otimes Q^{\,}_s \otimes Q^{\,}_a \otimes Q^{\,}_o}}\WithSuffix\newcommand\PProb*[2]{\pprob^{({#1})}_{#2}}

\newcommand{\cprob}[0]{p}
\newcommand{\ocprob}[0]{\overline{\cprob}}

\newcommand{\CProb}[2]{\cprob^{({#1})}_{#2}}\WithSuffix\newcommand\CProb*[1]{\cprob^{(#1)}_{\bvec{\ell}_{#1},\bvec{s}_{#1}}}

\newcommand{\OCProb}[2]{\ocprob^{({#1})}_{#2}}\WithSuffix\newcommand\OCProb*[1]{\ocprob^{(#1)}_{\bvec{\ell}_{#1},\bvec{s}_{#1}}}


\newcommand{\tmat}[0]{W}

\newcommand{\TMat}[1]{\tmat^{(#1)}_{\ell_{#1},\bvec{s}_{#1};\ell_{{#1} - 1},\bvec{s}_{{#1}-1}}}\WithSuffix\newcommand\TMat*[1]{\tmat^{(#1)}}
\newcommand{\TMatrix}[1]{\tmat^{\vpp}_{\ell_{#1},\bvec{s}_{#1};\ell_{{#1} - 1},\bvec{s}_{{#1}-1}}}\WithSuffix\newcommand\TMatrix*[1]{\tmat_{\ell_{#1},\bvec{s}_{#1};\ell_{{#1} - 1},\bvec{s}_{{#1}-1}}}

\newcommand{\stmat}[0]{\gamma}

\newcommand{\STMat}[1]{\stmat^{(#1)}_{\bvec{s}_{#1},\bvec{s}_{{#1}-1}}}\WithSuffix\newcommand\STMat*[1]{\stmat^{(#1)}}
\newcommand{\STMatrix}[1]{\stmat^{\vpp}_{\bvec{s}_{#1},\bvec{s}_{{#1}-1}}}\WithSuffix\newcommand\STMatrix*[1]{\stmat_{\bvec{s}_{#1},\bvec{s}_{{#1}-1}}}

\newcommand{\ltmat}[0]{\Lambda}

\newcommand{\LTMat}[1]{\ltmat^{(#1)}_{\bvec{s}_{#1},\bvec{s}_{{#1}-1}}}\WithSuffix\newcommand\LTMat*[1]{\ltmat^{(#1)}}
\newcommand{\LTMatrix}[1]{\ltmat^{\vpp}_{\bvec{s}_{#1},\bvec{s}_{{#1}-1}}}\WithSuffix\newcommand\LTMatrix*[1]{\ltmat_{\bvec{s}_{#1},\bvec{s}_{{#1}-1}}}


\begin{document}
\title{Constructing an approximate logical Markovian model of consecutive QEC cycles of a stabilizer code}
\author{Alex Kwiatkowski}
\affiliation{National Institute of Standards and Technology, Boulder, Colorado 80305, USA}

\author{Aaron J. Friedman}
\affiliation{National Institute of Standards and Technology, Boulder, Colorado 80305, USA}
\affiliation{Department of Electrical Engineering, University of Colorado, Denver, Colorado 80204, USA}

\author{Shawn Geller}
\affiliation{National Institute of Standards and Technology, Boulder, Colorado 80305, USA}

\author{Jalan A. \surname{Ziyad}}
\affiliation{Quantum Performance Laboratory, Sandia National Laboratories, Albuquerque, NM, 87185, USA}
\affiliation{Center for Quantum Information and Control, University of New Mexico, Albuquerque, NM, 87131, USA}
\affiliation{Department of Physics and Astronomy, University of New Mexico, Albuquerque, NM, 87131, USA}

\author{Scott Glancy}
\affiliation{National Institute of Standards and Technology, Boulder, Colorado 80305, USA}

\author{Emanuel Knill}
\affiliation{National Institute of Standards and Technology, Boulder, Colorado 80305, USA}
\affiliation{Center for Theory of Quantum Matter, University of Colorado, Boulder, Colorado 80309, USA}

\date{\today}

\begin{abstract}
  As quantum error correction (QEC) experiments continue to make rapid progress,
  there is increased interest in designing experiments with guarantees of logical performance.
  At present, one difficulty is the lack of a clear connection between logical performance 
  and the low-level error models.
  In this work, we take an important step toward addressing this issue by proving that 
  consecutive QEC cycles of a stabilizer code with Pauli stochastic noise and with
  a single-cycle infidelity $\epsilon_1 \leq 1/64$
  admit an approximate logical Markovian model, meaning that consecutive noisy QEC cycles
  can be modeled by a memoryless error process acting only on the logical subsystem.
  The approximate logical Markovian model can be computed from the low-level error model,
  and the deviations from the true behavior are exponentially suppressed in the number of QEC cycles.
  Consequently, we expect that the approximate logical Markovian model will be both a useful tool for logical characterization 
  and an aid for designing stabilizer-code implementations with guarantees of logical performance.
\end{abstract}

\maketitle

\section{Introduction}

Recent progress in the development of experimental quantum computing devices has centered around early 
demonstrations of quantum error-correction capabilities~\cite{acharyaQuantumErrorCorrection2025, bluvsteinLogicalQuantumProcessor2024, paetznickDemonstrationLogicalQubits2024,reichardtDemonstrationQuantumComputation2024,postler_monz_24_steane,lacroix2024scalinglogiccolorcode}.
Many of these experiments focus on demonstrating error correction in stabilizer codes~\cite{gottesman2016surviving, QC_book, gottesman1997stabilizercodesquantumerror}, 
including preparing encoded stabilizer states, performing logical Clifford operations, 
and making logical stabilizer measurements.
Other experiments focus on quantum memory, which involves applying many consecutive QEC cycles in order
to protect logical information for an extended period of time.
There is substantial interest both in predicting logical behavior in such experiments and 
in designing experiments that minimize the logical error rate,
given a low-level error model.
However,  the connection between low-level error models and logical performance is not always clear. 
For a single QEC cycle, logical performance depends on the complex interplay between the code, noise, and decoder~\cite{iyer2018small,rudinger2023probing, beale_quantum_2018, rahn2002exact}. 
For multiple QEC cycles, creating models of logical performance is further complicated by the phenomenon of emergent logical non-Markovianity~\cite{caesura2021non,Ziyad2025Emergent}. 
For example, forthcoming work announced in Ref.~\citenum{Ziyad2025Emergent} investigates a model of repeated QEC cycles in a stabilizer code, subject to Markovian noise on the qubits of the code, 
and finds that the performance of a given QEC cycle depends on which syndrome-measurement errors occurred in previous cycles.
Essentially, the syndrome degrees of freedom in an error-correcting code act as a memory for the logical register, with which they become correlated. 
Consequently, performance metrics derived from the behavior of a single QEC cycle may not quantitatively capture the behavior of many consecutive QEC cycles.

In this paper, we take an important step toward addressing these issues by studying 
the logical behavior of many consecutive QEC cycles of a stabilizer code. 
We show that,  under broad circumstances, 
the behavior of many consecutive QEC cycles is well approximated by modeling 
the errors in each individual QEC cycle  by a memoryless logical error process, 
which we refer to as an approximate logical Markovian model.
The precise statement of this result is given in Theorem~\ref{thm_main}, which we view as the main contribution of our work.
Several assumptions enter Theorem~\ref{thm_main}, 
which we discuss in detail in Sec.~\ref{main_thm_statement} (see Fig.~\ref{fig_assumptions})
and briefly summarize here.
In particular, we require (\emph{i}) Pauli-stochastic noise (see  Secs.~\ref{sec_Pauli_stochastic}--\ref{sec_Pauli_noise_meas});
(\emph{ii}) a particular property of the syndrome-extraction procedure (see Def.~\ref{def_ssp}) 
or the ability to randomize the syndrome subsystem prior to the QEC cycle (see Def.~\ref{defn_rand_synd}); 
and 
(\emph{iii}) that the single-cycle logical entanglement fidelity (see Def.~\ref{defn_log_fid}) is sufficiently high ($f_1 \geq 1 - 1/64$).
The first two assumptions can be enforced via Pauli twirling, 
and we expect the third assumption to hold for any reasonably high-quality QEC experiment.
Altogether, we expect the vast majority of quantum-memory experiments to satisfy these assumptions.
Furthermore, the proof of Theorem~\ref{thm_main} is constructive, 
in that it provides a method to calculate the approximate logical Markovian model given a particular low-level error model.
In addition, our definition of approximate logical Markovian model requires that the deviations between true probabilities of outcomes 
in logical experiments and the probabilities of outcomes predicted by the approximate logical Markovian model are exponentially suppressed as $\ncyc$ increases.
Thus, if an approximate logical Markovian can be constructed, it has a strong guarantee of accuracy.

In total, we expect that the approximate logical Markovian model provides a practical and informative 
benchmark for quantum-memory experiments that is independent of 
the particular platform, the stabilizer code, or the implementation thereof. 
Furthermore, the approximate logical Markovian model can be used as a metric for designing 
stabilizer codes and fault-tolerant implementations, 
based on the relevant low-level error model.

The rest of the paper is organized as follows.
In Sec.~\ref{sec_prelim}, we provide a preliminary discussion of notation and basic concepts, 
including a brief overview of stabilizer codes.
In Sec.~\ref{sec_prelim_noiseless}, we provide a standardized form of a QEC cycle in a stabilizer code which we use for our analysis.
In Secs.~\ref{sec_Pauli_stochastic}--\ref{sec_Pauli_noise_meas}, 
we define Pauli-stochastic noise, along with particular noise models in the contexts of logical state preparation, QEC cycles, and logical measurement.
In Sec.~\ref{sec_approx_model}, we define a notion of total logical error channel for any QEC experiment,
which depends on the number of QEC cycles $\ncyc$.
We then introduce the concept of an approximate logical Markovian model, which is \emph{independent} of $\ncyc$, but approximates 
the behavior of the total logical error channel. 
In Sec.~\ref{sec_Pauli_eigs}, we show that the ability to approximate all Pauli eigenvalues of the total logical error channel 
by exponentially decaying functions of $\ncyc$ implies an approximate logical Markovian model.
In Sec.~\ref{sec_smip}, we introduce the \smip{}, which is a property of the syndrome-extraction procedure and is a key requirement of Theorem~\ref{thm_main}.
We explain that this property holds for many realizations of stabilizer codes and can also be enforced via Pauli twirling if needed.
In Sec.~\ref{sec_eff_markovian}, we introduce the notion of a ``high quality'' QEC cycle, 
which is formally captured by the single-cycle logical entanglement fidelity $f_1$ (see Def.~\ref{defn_log_fid}) and is
another key requirement for Theorem~\ref{thm_main}. We then state and prove Theorem~\ref{thm_main}.

\section{Preliminaries}
\label{sec_prelim}

In this work, we assume 
familiarity with standard concepts related to quantum error correction with qubits 
(see Ref.~\citenum{gottesman2016surviving} for an introduction).
We now introduce the notational and terminological conventions used in the remainder.

\subsection{Pauli operators and related notation}
\label{sec_prelim_Pauli}

Consider a system $A$ with $\nphys$ qubits. For each $j \in \left\{ 1, \ldots, \nphys \right\}$, 
we denote by $\PauliSet*{j} = \{\ident_j, X_j, Y_j, Z_j\}$ the set of single-qubit Pauli operators on qubit $j$. 
A basis for the operators on $A$ is given by the $4^{\nphys}$ Pauli operators, 
which make up the set
\begin{equation}
    \label{eq:Pauli set}
    \PauliSet{A} = \left\{ \, Q_1 \otimes \cdots \otimes Q_\nphys \, \middle|~Q_j \in \PauliSet*{j} \, \right\} \, , ~~ 
\end{equation}
which are tensor products of Pauli matrices on each of the $\nphys$ qubits in $A$.
Relatedly, the Pauli group $\PauliGroup*{A}$ is the group generated by the Pauli operators~\eqref{eq:Pauli set} under multiplication,
\begin{align}
  \label{eq:Pauli group}
  \PauliGroup{A} = \expval{ \PauliSet{A} } &= \left\{\, \ii^\nu Q_1\otimes\cdots\otimes Q_{\nphys}\,\middle|\;\nu \in \left\{ 0, 1,2,3 \right\},\,Q_j \in \mathbf{P}_j \right\} \, ,~
\end{align}
where $\expval{S}$ denotes the subgroup of the 
matrix group $\mathrm{GL}_{2^{\nphys}}(\Comps)$ generated by a subset $S$.

We frequently work in the computational basis and we label computational basis elements by bitstrings. 
For a system $A$ with $\nphys$ qubits, we define a ``bitstring of $A$'' as an element of $\Ints_2^{\nphys}$ 
(i.e., $\nphys$-digit bitstrings). These bitstrings label computational basis states, 
which are simultaneous eigenstates of all Pauli $Z$ operators. For a bitstring $s \in \Ints_2^{\nphys}$,
we have that $Z_j \ket{s} = (-1)^{s_j} \ket{s}$ with $s_j$ the $j$th bit of $s$.
We denote by $\ZProj{s} = \KBop{s}{s}$ the projector onto the computational basis element corresponding to the bitstring $s \in \Ints_2^\nphys$.

We often consider dephased qubits, 
for which the density matrix can be expressed as a convex combination of projectors $\ZProj{s}$.
We also denote bit flips acting on bitstrings $s \in \Ints_2^\nphys$ using $+$, which acts as bitwise XOR. 
Using the shorthand $X^{s} = \prod_{j=1}^{\nphys} X_j^{s_j}$, e.g., we write
\begin{equation}
  \label{eq:bit flips}
  X^s \, \ZProj{s'} X^s = \ZProj{s + s'}  = \KBop{s+s'}{s+s'}\, .
\end{equation}
We also define $\PauliChan*{i}$ as the Pauli superoperator associated with the $i$th Pauli operator, i.e.,
\begin{equation}
  \label{eq:Pauli superoperator}
  \PauliChan{i} (\rho) = \Pauli{i} \rho \Pauli{i} \, . ~~
\end{equation}
so that general Pauli channels are a mixture of such Pauli superoperators. 
In particular, if $\Lambda$ is a Pauli channel with
mixture probability $p_i$ for the $i$th Pauli operator, then $\Lambda$ acts as
\begin{equation}
  \label{eq:general Pauli channel}
  \Lambda (\rho) = \sum_i p^{\,}_i \, \PauliChan{i} (\rho) = \sum_i p^{\,}_i \, \Pauli{i} \rho \Pauli{i} \, . ~~
\end{equation}

\subsection{Stabilizer error-correcting codes}
\label{sec:stabilizer qec}

We restrict our analysis to stabilizer codes~\cite{CalderbankGood, GottesmanIntro, SteaneQEC1, SteaneQEC2, Shor_1995, lidar_quantum_2013}. 
We consider arbitrary stabilizer codes on $\nphys$ qubits that encode $\nlog$ logical qubits. 
A stabilizer code is specified by a set $\Stabs \subseteq \PauliSet*{\nphys}$ of $\nphys-\nlog$ multiplicatively independent 
commuting Pauli operators such that the group generated by $\Stabs$ does not contain $-\ident$.
The elements of $\Stabs$ are referred to as the stabilizer generators, 
while generic elements of $\expval{\Stabs}$ are referred to as stabilizers.

We also define the group of logical operators $\Logicals = \PauliGroup*{\nphys} / \expval{\Stabs}$. 
To define the \emph{algebra} of logical operators, 
we first fix a set $\Logicals_*$ of coset representatives of $\Logicals$ called the \emph{bare} logical operators,
\begin{equation}
\label{eq:bare logicals}
\Logicals_* = \{  X_{\text{L}}^{(1)} , Z_{\text{L}}^{(1)}, \dots ,  X_{\text{L}}^{(\nlog)} , Z_{\text{L}}^{(\nlog)} \}\subset \mathbb{P}_n \, ,~~
\end{equation}
which satisfy the Pauli group relations on $\nlog$ qubits.
An element of the algebra generated by $\Logicals_*$ is called a logical operator.
This defines a $\nlog$-qubit Hilbert space $\hilbert_L$, 
and the full Hilbert space $\hilbert$ decomposes as $\hilbert = \hilbert_L \otimes \hilbert_S$, 
where $\hilbert_S$ is a Hilbert space of $\nphys-\nlog$ qubits, corresponding to the stabilizer generators.
We note that measuring the bare logical operators~\eqref{eq:bare logicals} is insufficient to extract logical information;
one must also measure the stabilizer generators to interpret the outcomes of logical measurements~\cite{GottesmanIntro}.

Indeed, the importance of the stabilizers $\stab_i\in\Stabs$ is their role in error detection and correction. 
A stabilizer code is ``initialized'' in a state that is a simultaneous eigenstate of all $\nphys-\nlog$ stabilizer generators in $\Stabs$, 
where the bitstring $s_* = s_{*,1} \cdots s_{*,\nphys-\nlog}$ prescribes the ``default'' eigenvalue, 
so that $\stab_i \ket{s_*} = (-1)^{s_{*,i}} \ket{s_*}$. 
Simultaneously, a $\nlog$-qubit logical state  $\rho_L$ is encoded in the logical subspace, 
in that the logical Pauli operators $X_{\text{L}}^{(i)}$ and $Z_{\text{L}}^{(i)}$ act on the encoded state in the same manner that 
the operators $X^{\,}_i$ and $Z^{\,}_i$ would act on an unencoded $\nlog$-qubit state $\rho_L$. 
The utility of the stabilizer representation is that  generic errors result in a nontrivial \emph{syndrome} $s \neq s_*$---i.e., 
a deviation in the stabilizer eigenvalues from the default $s_*$. 
Measuring the stabilizer generators $\Stabs$ results in a string $\smeas$, which is then compared to $s_*$.
The set of stabilizers that differ from the default value is indicated by the bitstring
\begin{equation}
  \label{eq_s_err}
  s_{\text{err}} = \smeas + s_* \, ,~~
\end{equation}
which is called the error syndrome, and is used to determine what error occurred. 
In general, multiple error processes give rise to the same syndrome,
with the distinct errors being related to one another by logical operators (which commute with the stabilizers). 
In practice, one associates each nontrivial syndrome $s \neq s_*$ with a particular error process $E_s$, 
so that when $s \neq s_*$ is observed, one applies a correction $E_s^{-1}$ to undo the error. 
The error $E^{\,}_s$ associated with a syndrome $s$ is said to be ``correctable,'' 
while other error processes that give rise to the same syndrome $s$ are said to be ``uncorrectable.''

\subsection{The Clifford frame}
\label{sec:clifford frame}

In the ``physical frame,'' the stabilizers and logical operators~\eqref{eq:bare logicals} take the form of 
Pauli operators~\eqref{eq:Pauli set} with high weight. 
However, one can define a Clifford change of basis $\CFrame$ in which the factorization of the system 
into logical subsystem $L$ and syndrome subsystem $S$ is manifest~\cite{combes2017logicalrandomizedbenchmarking,caesura2021non}.
We call this new basis the ``Clifford frame.'' The description and analysis of quantum error correction is more straightforward in this basis, 
since the $\nlog$ logical operators and $\nphys-\nlog$ stabilizers act on independent, individual qubits. 
In general, this assignment of qubits is arbitrary; for concreteness---and without loss of generality---we define the Clifford frame such that 
the first $\nlog$ qubits correspond to the logical subspace $L$~\eqref{eq:bare logicals}, 
while the remaining $\nphys - \nlog$ registers correspond to the stabilizer subspace $S$.
The Clifford unitary $\CFrame$ that realizes this change of basis 
acts on the stabilizers and logical operators~\eqref{eq:bare logicals} as
\begin{equation}
\label{eq:CliffordFrame}
    \CFrame^\dagger X_{\text{L}}^{(i)} \CFrame ~,~ \CFrame^\dagger Z_{\text{L}}^{(i)} \CFrame ~~ \mapsto ~~ X^{\vpp}_{i} ~,~Z^{\vpp}_i 
    ~~~~\text{and}~~~~
    \CFrame^\dagger \, \stab^{\vpp}_j \, \CFrame ~~ \mapsto ~~ Z^{\vpp}_{j+k} \, ,~~
\end{equation}
so that the logical operators~\eqref{eq:bare logicals} realize the Pauli algebra on $L$, 
and the stabilizers $\stab$ map to Pauli $Z$ operators (see also Diagram~\ref{circ_qec_step_perfect}).
We comment that any Clifford unitary 
consistent with Eq.~\ref{eq:CliffordFrame} defines a suitable Clifford frame.

Additionally, the unitary $\CFrame$ is related to the \emph{encoding} of a target logical state $\sigma_L$.
The stabilizer code is initialized by preparing an eigenstate of the stabilizers $\stab_i$ 
with eigenvalues $s_*$ and encoding a target logical state $\sigma_L$. 
If one prepares $\sigma_L$ on the first $\nlog$ physical qubits (with all other qubits in the state $\ket{0}$),
then $\CFrame^\dagger$~\eqref{eq:CliffordFrame} encodes $\sigma_L$ in the logical subspace of the stabilizer code. While the resulting state is generally complicated in the physical frame, in the Clifford frame~\eqref{eq:CliffordFrame}, the encoded state is simply
\begin{equation}
    \label{eq:encoded state Clifford frame}
    \rho_{\text{enc}} =  \sigma_L \otimes \ZProj{s^{\vpp}_*}  = \sigma_L \otimes \bigotimes_{i=1}^{\nphys-\nlog} \KBop{s_{*,i}}{s_{*,i}} \, , ~~
\end{equation}
where $\ZProj{s} = \KBop{s}{s}$ denotes a computational-basis projector, 
$s_*$ is the ``default'' configuration of the $\nphys-\nlog$ stabilizers, 
and $s_{*,i}$ the default eigenvalue of the stabilizer $\stab_i$. 

\section{Model of an ideal QEC cycle}
\label{sec_prelim_noiseless}

Here we specify the particular model of an ideal (i.e., noiseless) QEC cycle in an arbitrary stabilizer code.
The model consists of four steps, which are depicted in Diagram~\ref{circ_qec_step_perfect}. In the remainder of this section, 
we describe those four steps in detail. 

The QEC cycle is modeled in the Clifford frame introduced in Sec.~\ref{sec:clifford frame}. 
In addition to the logical register $L$ and syndrome register $S$---which contain the qubits that define the code---we 
include ``ancilla'' qubits in a register $A$, and 
an ``outcome'' register $O$. 
The ancilla qubits are often used in fault-tolerant implementations of QEC cycles, and the outcome register records the outcomes of measurements performed as part of the QEC cycle.
For convenience, we model these measurements with a Clifford unitary $C$ that realizes a ``Stinespring representation'' of the measurements
of $S$ on the dilated Hilbert space associated with $S$, $A$, and $O$~\cite{Stinespring, KrausMeas1969, KrausBook, SpeedLimit, DiegoMeasOverview}.
Because the system $O$ effectively stores classical information, we model $O$
as a \emph{dephased} quantum register, 
meaning that the state of $O$ is always a mixture of computational-basis states, 
which may be correlated with the states of other subsystems. 
This convention simplifies the treatment of Pauli-stochastic noise in, e.g., Sec.~\ref{sec_Pauli_stochastic}.

The model of an ideal QEC cycle that we consider is captured by the following diagram, 
which should be read sequentially from left to right.
\begin{equation}
  \begin{quantikz}
    \lstick{$L$} & \qw  & \qw & \qw & \gate{\text{EC}} & \qw \rstick{} \\
    \lstick{$S$} & \gate[3]{C} &  \qw & \gate{\text{Reset}} & \qw & \qw \rstick{$\ZProj{s_*}$}\\ 
    \lstick{$A$} & \ghost{C} &  \qw & \qw & \qw & \qw \rstick{Discard}\\
    \lstick{$O$} & \ghost{C} & 
    \gate{\includegraphics[width=0.6cm]{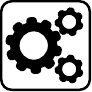}} & \phase\qw\vcw{-2} & \phase\qw\vcw{-3} & \qw\rstick{Discard}
    \end{quantikz}
    \label{circ_qec_step_perfect} 
\end{equation}
The four steps of the QEC cycle correspond to the four boxed operations above: 
\eqref{qec_step_C} syndrome information readout, corresponding to the Clifford $C$ \eqref{circ_cliff_readout};
\eqref{qec_step_proc} syndrome decoding, corresponding to the gearbox icon;
\eqref{qec_step_reset} syndrome reset, denoted ``Reset'';
\eqref{qec_step_EC} logical recovery, denoted ``EC'' for ``error correction.'' 
We now describe each of these steps in detail.


\begin{enumerate}
\item \textbf{Syndrome information readout.} \label{qec_step_C}
  The Clifford $C$~\eqref{circ_qec_step_perfect} implements the measurements used to obtain the information necessary to infer the error syndrome $s_{\text{err}}$ \eqref{eq_s_err}.
  The Clifford $C$ is specified as part of the definition of a QEC cycle, and has a particular action on input product states, as shown in the following diagram:
    \begin{equation}
      \begin{quantikz}
        &\lstick{$L$} \quad \quad &\lstick{$\rho_L$} & \qw & \qw \rstick{$\rho_L$} \\
        &\lstick{$S$} \quad \quad &\lstick{$\ZProj{\sinput}$} & \gate[3]{C} & \qw \rstick{$\ZProj{\sinput}$} \\ 
        &\lstick{$A$} \quad \quad &\lstick{$\ZProj{a}$} & \qw & \qw\rstick{Discard} \\
        &\lstick{$O$} \quad \quad &\lstick{$\ZProj{0}$} & \ghost{C}& \qw\rstick{$\ZProj{\theta} = \ZProj{E(\sinput)}$} 
        \end{quantikz}
        \label{circ_cliff_readout} 
    \end{equation}
  where $\sinput$ is the bitstring that encodes the input syndrome state, 
  $C$ preserves the input states $\rho_L$ and $\ZProj{\sinput}$ of the systems $L$ and $S$, respectively, and
  $\ZProj{a}$ and $\ZProj{0}$ are arbitrary computational-basis states of the additional systems $A$ and $O$, respectively. 
  We note that $C$ is always applied to product input states of the form above, and the resulting state also factorizes over the four subsystems.

  The output state of system $O$ is $\ZProj{\theta}$, where the bitstring $\theta$ is determined by $\sinput$ according to the specifics of the Clifford $C$.
  To model this, we introduce the ``syndrome encoding function'' $E$, which is part of the specification of $C$, where $\theta = E(\sinput)$.
  The name of the function $E$ is motivated by the fact that action of $C$ typically encodes the input syndrome $\sinput$ in a classical error correcting code.
  We emphasize that the Clifford $C$ implements the entire fault-tolerant syndrome extraction needed for the QEC cycle, 
  which often includes multiple rounds of syndrome measurements~\cite{gottesman2016surviving}.
  In such situations, after the action of $C$, the register $O$ contains the outcomes from all measurement rounds, 
  which are used jointly to extract the error syndrome $s_{\text{err}}$.

\item \textbf{Syndrome decoding.} \label{qec_step_proc} 
  The bitstring $\theta$ containing measurement outcomes obtained in Step~\ref{qec_step_C} is classically processed to extract the error syndrome $s_{\text{err}}$.
  This step only involves the register $O$, and is represented by the box with the gear icon in Diagram~\ref{circ_qec_step_perfect}.
  We assume it is performed perfectly (i.e., without errors). 
  First, the measured syndrome is obtained according to $\smeas = D(\theta)$, 
  where the function $D$ is specified as part of the definition of the QEC cycle.
  Typically, $D$ is the decoding function of a classical error-correcting code, 
  so that $D\big(E(s) \big) = s$ for any bitstring $s$ of system $S$.
  Accordingly, we refer to $D$ as the ``syndrome decoding function.''
  Finally, the error syndrome is obtained via $s_{\text{err}} = \smeas + s_*$, 
  where $s_*$ is the 
  default configuration of $S$.
  \item \textbf{Syndrome reset.} \label{qec_step_reset}
  Based on the inferred error syndrome $s_{\text{err}}$ from Step~\ref{qec_step_proc}, 
  the syndrome subsystem $S$ is returned to the default configuration $s_*$. 
  The ``Reset'' operation in Diagram~\ref{circ_qec_step_perfect} is realized by
  a Pauli superoperator $\PauliChan*{\text{Reset}}$ applied to $S$, where
  \begin{equation}
    \label{eq_reset_op}
    \PauliChan{\text{Reset}}(\ZProj{\smeas}) = \ZProj{\smeas+ s_{\text{err}}} = \ZProj{s_*} \, , ~~
  \end{equation}
  so that $S$ is reset to $s_*$. 
  Concretely, this operation is implemented in the Clifford frame by 
  applying Pauli $X$ to the $j$th qubit in $S$
  if the $j$th bit of $s_{\text{err}}$ is one.

\item \textbf{Logical recovery.} \label{qec_step_EC}
  A Pauli superoperator determined by the error syndrome $s_{\text{err}}$ is applied to the logical system $L$, 
  indicated by the box labeled ``EC'' in Diagram~\ref{circ_qec_step_perfect}.
  Concretely, this operation acts as
  \begin{equation}
  \label{eq:EC step}
      \operatorname{EC} (\rho_L \otimes \ZProj{s} \otimes \ZProj{\theta}) = \Cor_{s_\text{err}} (\rho_L) \otimes \ZProj{s} \otimes \ZProj{\theta}\, ,~~  
  \end{equation}
  where $\Cor$ is a function (typically called the decoder) that maps syndrome bitstrings $s_{\text{err}}$ to the appropriate correction Pauli superoperator 
  $\Cor_{s_{\text{err}}}$. 
  We note that the error-correction operation EC acts nontrivially only on the logical register $L$, 
  and only when the syndrome is nontrivial; when $\smeas = s_*$ is trivial, we have that $\Cor_{s_*} (\rho_L) = \rho_L$. 

\end{enumerate}

To summarize, we require that a QEC cycle can be modeled as shown in Diagram~\ref{circ_qec_step_perfect} and discussed 
above.
The specification of a QEC cycle consists of the Clifford unitary $C$---which implicitly includes the function $E$---as well as the function $D$ and the decoder $\mathcal{C}$.
The sizes of the systems $L$, $S$, $A$, and $O$ are also implicitly specified as part of the QEC cycle.

Strictly speaking, there 
exist implementations of QEC cycles in stabilizer codes that do \emph{not} fit into the
foregoing prescription. 
One example is flagged fault-tolerant schemes \cite{PhysRevLett.121.050502}, 
in which intermediate measurement outcomes may affect the choice of subsequent Clifford operations.
We expect that such QEC cycles can also be treated using our formalism, but we leave such an analysis for future work.
As another example, some stabilizer QEC schemes, like surface codes, use the measurement outcomes from previous QEC cycles in order to determine correction operations \cite{PhysRevA.86.032324}.
Our definition of a QEC cycles is not compatible with this case, and extending our results to cover it is an important aspect of future work.

We also note that the syndrome reset operation in Diagram~\ref{circ_qec_step_perfect} need not be physically implemented, and instead can be frame-tracked by using knowledge of the reset operation to inform the next QEC cycle.
Similarly, as long as there are no non-Clifford logical operations the logical recovery operation in Diagram~\ref{circ_qec_step_perfect} can be deferred until to the end of the experiment by keeping track of an appropriate logical Pauli frame~\cite{Knill2005,Chamberland_2018}.

\section{Modeling noisy QEC experiments}
\label{sec_noisy_qec}

Here we consider QEC experiments involving state preparation, $\ncyc$ consecutive QEC cycles, and logical measurement 
in the presence of Pauli-stochastic noise.
In particular, we define an effective Markovian model of a family of QEC experiments, 
which acts on the logical register $L$ alone.
The model depends on the particulars of the experiment, which we consider herein.

The remainder of this section is organized as follows.
In Sec.~\ref{sec_Pauli_stochastic}, we define the notion of Pauli-stochastic noise in general; 
precise definitions are given in the specific contexts of 
state preparation, individual QEC cycles, and logical measurements in the respective subsections. 
In particular, we describe noisy state preparation in Sec.~\ref{sec_Pauli_noise_state_prep}, 
noisy QEC cycles in Sec.~\ref{sec_noisy_qec_cycles},
and noisy measurements in Sec.~\ref{sec_Pauli_noise_meas}. 
In Sec.~\ref{sec_approx_model}, we define QEC experiments and 
the total logical error channel $\Lambda_{\text{tot}}$~\eqref{eq_lambda_total} associated 
with state preparation, $\ncyc$ QEC cycles, and logical measurement. 
We also define the notion of a Markovian model on $L$ alone that approximates the total logical error channel 
$\Lambda_{\text{tot}}$ (see Def.~\ref{def_logical_markov_model}).
Finally, in Sec.~\ref{sec_Pauli_eigs}, we show that such a model $\Phi_{\text{tot}}$ exists 
if the Pauli eigenvalues~\eqref{eq_qec_lambda_total_eig} of the total logical error channel $\Lambda_{\text{tot}}$
decay exponentially with the total number of QEC cycles $\ncyc$.

\subsection{Pauli-stochastic noise}
\label{sec_Pauli_stochastic}

Our analysis assumes that all errors can be modeled as Pauli-stochastic noise. Loosely speaking, 
given a target quantum operation $\Phi$, 
the noisy implementation $\Phi'$ realizes Pauli-stochastic noise if $\Phi'$ can be written as the composition of the ideal implementation $\Phi$ 
and a Pauli channel $\ErrChan$~\eqref{eq:general Pauli channel}. 
We allow the Pauli noise channel $\ErrChan$ to act before or after the ideal implementation $\Phi$.  
In the case that $\ErrChan$ does not realize a Pauli channel at the physical level, 
Pauli stochasticity may be enforced via Pauli twirling or Pauli randomization. 
We explicitly define Pauli-stochastic noise in the contexts of state preparation, QEC cycles, and 
measurements in Secs.~\ref{sec_Pauli_noise_state_prep}, \ref{sec_noisy_qec_cycles}, and \ref{sec_Pauli_noise_meas}, respectively. 
Importantly, the Pauli-stochastic noise channels associated with each process depends only on the 
details of that process, and is independent of any aspect of the other two processes, including the associated noise channels.

\subsection{Noisy state preparation}
\label{sec_Pauli_noise_state_prep}

We define Pauli-stochastic noise for state preparation to mean that the noisy initial state $\rho_0$ 
can be expressed as a Pauli channel~\eqref{eq:general Pauli channel} $\ErrChan_{\text{prep}}$ applied to the 
ideal encoded state $\rho_{\text{enc}} = \sigma_L \otimes \ZProj{s_*}$~\eqref{eq:encoded state Clifford frame} on $L$ and $S$. 
Suppose that ideal state preparation is implemented by a process ${\cal U}$ on $L$ and $S$, 
where $\cal{U}$ outputs 
the nominal encoded state $\rho_{\text{enc}} = \sigma_L \otimes \ZProj{s_*}$ regardless of the input state $\sigma_L$. 
The noisy implementation of state preparation ${\cal U}'$ acts as

\begin{equation}
  \label{eq_circ_state_prep}
  \begin{quantikz}
    &\lstick{$L$} & \qw & \gate[2]{{\cal U}_{\vphantom{prep}}} &  \gate[2]{\ErrChan_{\text{prep}}} & \qw \\
    &\lstick{$S$} & \qw & \ghost{{\cal U}_{\vphantom{prep}}} & \ghost{\ErrChan_{\text{prep}}} & \qw  
    \end{quantikz}
\end{equation}
We show in App.~\ref{app_noisy_state_prep} that, if $\ErrChan_{\text{prep}}$ is a Pauli channel then the resulting
state $\rho_0$ on $L$ and $S$
can be expressed in terms of a probability distribution $\gamma_{s_0}^{(\text{prep})}$ over the true bitstring $s_0$ of system $S$ 
and a conditional Pauli channel $\Lambda^{(\text{prep})}_{s_0}$ that depends on $s_0$, as follows
\begin{equation}
 \label{eq_noisy_prep}
  \rho^{\vpp}_0 = \sum_{s_0}
  \gamma_{s_0}^{(\text{prep})} \,  \Lambda_{s_0}^{(\text{prep})} (\sigma_L) \otimes \ZProj{s_0} \, ,~~
\end{equation}
where $\gamma_{s_0}^{(\text{prep})}$ and $\Lambda^{(\text{prep})}_{s_0}$ are 
related to $\ErrChan_{\text{prep}}$ via Eq.~\ref{eq_state_prep_probs} of App.~\ref{app_noisy_state_prep}.

We comment that, if $\sigma_L$ is a stabilizer state, then one can enforce Pauli-stochastic noise by performing Pauli randomization.
Specifically, one applies a uniformly random element of the stabilizer group $\Stabs$ of the ideal encoded state $\rho_{\text{enc}}$~\eqref{eq:encoded state Clifford frame} 
after the noisy unitary $\mathcal{U}' = \ErrChan_{\text{prep}} \circ \mathcal{U}$.

\subsection{Noisy QEC cycles}
\label{sec_noisy_qec_cycles}

We define Pauli-stochastic noise for a QEC cycle to be noise that can be modeled with a Pauli channel $\ErrChan$
that acts immediately after the Clifford $C$
in Diagram~\ref{circ_qec_step_perfect} (which depicts a noiseless QEC cycle). 
In general, $\ErrChan$ acts on all four registers $L$, $S$, $A$, and $O$, and the noisy implementation of a QEC cycle is captured by the following diagram:
\begin{equation}
  \begin{quantikz}
    &\lstick{$L$} ~ & \qw & \gate[4]{\ErrChan} &\qw & \qw & \gate{\text{EC}} & \qw \\
    &\lstick{$S$} ~ & \gate[3]{C} & \ghost{\ErrChan}& \qw & \gate{\text{Reset}} & \qw & \qw \\ 
    &\lstick{$A$} ~ & \ghost{C} & \ghost{\ErrChan} & \qw & \qw & \qw & \qw \rstick{Discard}\\
    &\lstick{$O$} ~ 
    & \ghost{\ErrChan}& \ghost{C} &\gate{\includegraphics[width=0.6cm]{sideproc.pdf}} &  \phase\qw\vcw{-2} & \phase\qw\vcw{-3} & \qw\rstick{Discard}
    \end{quantikz}
    \label{circ_qec_step_errors}
\end{equation}
The Pauli channel $\ErrChan$ captures errors associated with syndrome extraction (Step~\ref{qec_step_C}).
As a reminder, the other QEC steps can all be implemented in classical side processing, 
so we assume they are error free (see Sec.~\ref{sec_prelim_noiseless}).
Additionally, any errors in the initialization of the registers $A$ and $O$ can be absorbed into $\ErrChan$. 
Finally, we note that even if the error channel $\ErrChan$ 
is not a Pauli channel~\eqref{eq:general Pauli channel} at the physical level,
it can be converted into an effective Pauli channel via Pauli twirling of $C$.

We now consider the update to the joint state of $L$ and $S$ corresponding to Diagram~\ref{circ_qec_step_errors}. 
For simplicity, we consider a product input state on $L$ and $S$ of the form
\begin{equation}
\label{eq:QEC input state}
  \rho_{\text{in}} =\rho_L \otimes \ZProj{\sinput} \, , ~~
\end{equation}
and we note that generic input states to a QEC cycle are convex linear combinations of such product states. 
As a reminder, the QEC cycle involves an ancillary system $A$ prepared in the default state $\ket{a}$ 
and a (dephased) outcome register $O$ prepared in the default state $\ket{0}$.

Step~\ref{qec_step_C} of the QEC cycle is the noisy application of the Clifford $C$, 
which we denote by $C' = \ErrChan \circ C$. 
Next, the ancilla system $A$ is discarded; the resulting state on $L$, $S$, and $O$ is
\begin{equation}
    C'\left(\rho_L \otimes \ZProj{\sinput} \otimes \ZProj{s_*} \right)
    = \sum_{\ell', s', \theta'} \overline{p}^{\vpp}_{\ell', s', \theta'} 
    ~ \PauliChan{\ell'} (\rho_L) 
  \otimes \ZProj{\sinput+ s'} 
  \otimes \ZProj{E(\sinput) + \theta'} \, ,~ ~
  \label{eq_after_errchan} 
  \end{equation}
  for some probabilities $\overline{p}^{\vpp}_{\ell', s', \theta'}$, where 
$s'$ and $\theta'$ are bitstrings of the systems $S$ and $O$, respectively. 
A derivation is given in App.~\ref{app_QEC_update_C}, including an expression for $\overline{p}^{\vpp}_{\ell', s', \theta'}$ in terms of $\ErrChan$ in Eq.~\ref{eq:QEC intermediate probabilities}.

We now account for Steps~\ref{qec_step_proc}, \ref{qec_step_reset}, and \ref{qec_step_EC}, concluding the QEC cycle.
With the cycle complete, we now discard (i.e., trace out and reset) the outcome register $O$.
The joint state of the subsystems $L$ and $S$ output 
by the QEC cycle is
\begin{align}
    \rho_{\text{out}} &=  \sum_{\ell', s', \theta'} 
    \overline{p}^{\vpp}_{\ell', s', \theta'} ~
    \Cor_{\serr} \circ \PauliChan{\ell'} (\rho_L) 
    \otimes 
    \ZProj{\sinput + s' + \serr} 
     \label{eq:rho out naive} \, ,~~
\end{align}
where $\serr = D(E(\sinput) + \theta') + s_*$ depends on $\sinput$ and $\theta'$
and the decoder $\Cor_{\serr}$ is defined in Eq.~\ref{eq:EC step}. 
We simplify the expression above by introducing a new summation variable $\soutput = \sinput + \serr + s'$ 
and a probability distribution $p^{\,}_{\ell,\soutput}$, finding
\begin{align}
  \rho_{\text{out}} &= \sum\limits_{\ell,\soutput} p^{\vpp}_{\ell,\soutput} \PauliChan{\ell} (\rho_L) \otimes \ZProj{\soutput} \, ,~~
  \label{eq:rho out nice}
\end{align}
where we combined the correction channel $\Cor_{\serr}$ and logical error channel $\PauliChan*{\ell'}$ into 
a single logical Pauli superoperator $\PauliChan*{\ell}$. The probabilities $p^{\,}_{\ell,\soutput}$
encode the correlations between  $L$ and $S$ that result from the QEC cycle. 
A derivation appears in App.~\ref{app_QEC_update_2}, 
with $p^{\,}_{\ell,\soutput}$ defined in Eq.~\ref{eq:QEC nice probabilities}. 
We stress that $\soutput$ and the probabilities $p^{\,}_{\ell,\soutput}$ depend implicitly on $\sinput$.

We then observe that the transformation of the joint state of $L$ and $S$ from $\rho_{\text{in}}$~\eqref{eq:QEC input state}
to $\rho_{\text{out}}$~\eqref{eq:rho out nice} is independent of $\rho_L$. 
As a result, we recover a convenient expression for the state $\rho^{\,}_{\cycind}$ output by the $\cycind$th QEC cycle
in terms of a classical Markov process $\gamma^{\,}_{\soutput, \sinput}$ acting on the system $S$ 
and a family of Pauli channels $\Lambda^{\,}_{\soutput,\sinput}$ acting on the system $L$, 
conditioned on the bitstrings $\sinput$ and $\soutput$ on $S$. 
This formulation
is formalized in Prop.~\ref{prop_qec_final} below.

\begin{prop}
  \label{prop_qec_final}
  Consider a QEC cycle as described in Sec.~\ref{sec_prelim_noiseless} with Pauli-stochastic noise as described above.
  There exists a Markov process $\gamma^{\,}_{\soutput,\sinput}$ on the system $S$ 
  and logical Pauli channels $\{\Lambda^{\,}_{\soutput,\sinput}\}$ such that, 
  if the joint state on $L$ and $S$ prior to a QEC cycle is $\rho_{\mathrm{in}} = \rho_L \otimes \ZProj{\sinput}$~\eqref{eq:QEC input state},
  the joint state after the QEC cycle is given by
  \begin{align}
    \label{eq_qec_gamma_lambda}
    \rho_{out} = \sum\limits_{\soutput} \gamma^{\vpp}_{\soutput,\sinput} \, 
    \Lambda^{\vpp}_{\soutput,\sinput} ( \rho_L) \otimes \ZProj{\soutput} \, .~~
  \end{align}
  Additionally, the joint state on $L$ and $S$ that results from $\cycind$ (possibly distinct) QEC cycles 
  applied to the noisy initial state $\rho_0$~\eqref{eq_noisy_prep} can be written
  \begin{align}
    \label{eq_qec_final}
    \rho^{\vpp}_{\cycind} = \sum\limits_{s_0,\dots,s_{\cycind}} 
    \gamma^{(\cycind)}_{s_{\cycind},s_{\cycind-1}} \dots \gamma^{(1)}_{s_{1},s_{0}}
     \gamma^{(\mathrm{prep})}_{s_0} ~
    \Lambda^{(\cycind)}_{s_{\cycind},s_{\cycind-1}} \circ \dots \circ \Lambda^{(1)}_{s_{1},s_{0}} 
    \circ \Lambda^{(\mathrm{prep})}_{s_0} (\sigma_L) \otimes \ZProj{s_{\cycind}} \, , ~~
  \end{align}
  where explicit formulae for $\gamma^{(\cycind)}_{\soutput,\sinput}$ and $\Lambda^{(\cycind)}_{\soutput,\sinput}$ 
  appear in Eqs.~\ref{eq:QEC gamma def} and \ref{eq:QEC Lambda def} of App.~\ref{app_gamma_and_gamma}. 
\end{prop}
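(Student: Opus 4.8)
The plan is to build the single-cycle Markov decomposition~\eqref{eq_qec_gamma_lambda} directly from the already-derived output state~\eqref{eq:rho out nice}, and then to obtain the $\cycind$-cycle formula~\eqref{eq_qec_final} by a straightforward induction on the number of cycles. The essential structural input, noted in the text just above the statement, is that the single-cycle map on $L$ and $S$ is linear in $\rho_L$ and sends a diagonal input $\rho_L \otimes \ZProj{\sinput}$ to an output that is again block-diagonal in the syndrome register $S$. This block-diagonality is precisely what permits a \emph{classical} Markov interpretation of the $S$ dynamics together with a logical Pauli channel conditioned on the syndrome transition.

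For the single-cycle statement, I would start from $\rho_{\text{out}} = \sum_{\ell,\soutput} p_{\ell,\soutput}\, \PauliChan{\ell}(\rho_L) \otimes \ZProj{\soutput}$ in Eq.~\eqref{eq:rho out nice} and marginalize over the logical label $\ell$. Defining the syndrome transition weight $\gamma_{\soutput,\sinput} = \sum_\ell p_{\ell,\soutput}$ (recalling that $p_{\ell,\soutput}$ depends implicitly on $\sinput$), one checks using trace-preservation of the $\PauliChan{\ell}$ that $\sum_{\soutput}\gamma_{\soutput,\sinput} = \sum_{\ell,\soutput} p_{\ell,\soutput} = 1$, so $\gamma_{\soutput,\sinput}$ is a valid column-stochastic transition matrix on $S$. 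For each pair with $\gamma_{\soutput,\sinput}>0$ I would set $\Lambda_{\soutput,\sinput}(\rho_L) = \gamma_{\soutput,\sinput}^{-1}\sum_\ell p_{\ell,\soutput}\,\PauliChan{\ell}(\rho_L)$, which is a convex mixture of Pauli superoperators with nonnegative weights summing to one and hence a genuine Pauli channel; when $\gamma_{\soutput,\sinput}=0$ the corresponding term drops out and $\Lambda_{\soutput,\sinput}$ may be fixed to the identity by convention. Substituting these definitions back reproduces Eq.~\eqref{eq_qec_gamma_lambda}.

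For the $\cycind$-cycle formula I would induct on $\cycind$. The base case $\cycind=0$ is the noisy initial state~\eqref{eq_noisy_prep}, which already has the claimed form with a single sum over $s_0$. For the inductive step I assume $\rho_{\cycind-1}$ is a convex combination of blocks $\rho_L^{(s_{\cycind-1})} \otimes \ZProj{s_{\cycind-1}}$, where $\rho_L^{(s_{\cycind-1})}$ is the appropriately weighted composition of the preparation and single-cycle logical channels through the first $\cycind-1$ cycles. Because the single-cycle map of part one is linear in its logical argument and acts on each computational-basis syndrome block independently, I apply the $\cycind$th cycle's map, with its own $\gamma^{(\cycind)}$ and $\Lambda^{(\cycind)}$, term by term: each block $\rho_L \otimes \ZProj{s_{\cycind-1}}$ is sent to $\sum_{s_\cycind} \gamma^{(\cycind)}_{s_\cycind,s_{\cycind-1}}\, \Lambda^{(\cycind)}_{s_\cycind,s_{\cycind-1}}(\rho_L) \otimes \ZProj{s_\cycind}$. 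Collecting the nested sums and compositions yields Eq.~\eqref{eq_qec_final}, with the explicit per-cycle quantities recorded in Eqs.~\eqref{eq:QEC gamma def} and~\eqref{eq:QEC Lambda def} of App.~\ref{app_gamma_and_gamma}.

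The proposition is largely bookkeeping once Eq.~\eqref{eq:rho out nice} is in hand, so I do not anticipate a serious conceptual obstacle. The two points requiring care are, first, the well-definedness of the conditional channels $\Lambda_{\soutput,\sinput}$ at syndromes of zero transition weight, which I dispatch by the convention above; and second, the careful tracking of the implicit $\sinput$-dependence of the single-cycle data through the induction, so that the output syndrome of each cycle correctly becomes the input syndrome of the next. It is exactly this chaining of syndrome transitions that realizes the claimed classical Markov process on $S$, while the $\rho_L$-independence of the single-cycle map is what guarantees that the logical channels compose cleanly along each syndrome trajectory.
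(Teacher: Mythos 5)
Your proposal is correct and follows essentially the same route as the paper's proof: both obtain the single-cycle decomposition by marginalizing the explicit output state over the logical Pauli label to define $\gamma^{\vpp}_{\soutput,\sinput}$ and normalizing to get the conditional Pauli channel $\Lambda^{\vpp}_{\soutput,\sinput}$ (your definitions reduce to the paper's Eqs.~\ref{eq:QEC gamma def} and \ref{eq:QEC Lambda def} upon substituting Eq.~\ref{eq:QEC nice probabilities}), and both establish the $\cycind$-cycle formula by induction using linearity in $\rho_L$ and block-diagonality in $S$. The only cosmetic differences are your convention $\Lambda = \ident$ (versus the paper's $\Lambda = 0$) at zero-weight transitions and your choice of $\cycind = 0$ rather than $\cycind = 1$ as the base case, neither of which affects the argument.
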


The proof of Prop.~\ref{prop_qec_final} appears in App.~\ref{app_gamma_and_gamma}. 
Going forward, we assume both $\gamma^{(\cycind)}_{\soutput,\sinput}$ and $\Lambda^{(\cycind)}_{\soutput,\sinput}$ to be identical 
for all QEC cycles $\cycind$.

\subsection{Logical measurement}
\label{sec_Pauli_noise_meas}

The final stage of any QEC experiment is the extraction of logical information via measurement, 
which we model as shown in Diagram~\ref{circ_qec_meas_full}.
Extraction of the logical information requires the concurrent measurement of the syndrome in order to correct residual logical errors.
We model this with a Clifford $C_{\text{meas}}$, much like the Clifford $C$~\eqref{circ_cliff_readout} that defines the QEC cycle.
As in Step~\ref{qec_step_C} of a QEC cycle, the Clifford $C_{\text{meas}}'$ acts on the systems $L,S,A,$ and $O$ 
in a way that writes the relevant measurement data to a dephased quantum register $O$, 
which is then classally processed to recover the measurement outcome. 
We define Pauli-stochastic noise
for logical measurements as noise that can be modeled by a Pauli channel $\ErrChan_{\text{meas}}$ acting on all registers immediately 
after $C_{\text{meas}}$, as depicted in the following diagram.

\begin{equation}
  \begin{quantikz}
    \lstick{$L$} & \gate[4]{C_{\text{meas}}} & \gate[4]{\ErrChan_{\text{meas}}} &\qw  & \qw\rstick{Discard} \\
    \lstick{$S$} & \ghost{C_{\text{meas}}} & \ghost{\ErrChan_{\text{meas}}} &\qw  & \qw \rstick{Discard} \\ 
    \lstick{$A$} & \ghost{C_{\text{meas}}} & \ghost{\ErrChan_{\text{meas}}} &\qw & \qw \rstick{Discard}\\
    \lstick{$O$} & \ghost{C_{\text{meas}}}& \ghost{\ErrChan_{\text{meas}}} & \gate{\includegraphics[width=0.6cm]{sideproc.pdf}}  & \qw\rstick{Outcomes}
    \end{quantikz}
    \label{circ_qec_meas_full}
\end{equation}

For convenience, 
we restrict to destructive logical measurements $\POVM$ in the computational basis,
which output a bitstring indicating the observed outcome. 
Consequently, the information on $L$ and $S$ is discarded following the final measurement. 
Measurements in other bases are realized by absorbing additional logical Clifford gates into $C_{\text{meas}}$.

It is convenient to model a measurement with Pauli-stochastic noise in terms of a family of logical channels $\Lambda_{\soutput}^{(\text{meas})}$,
which depend on the bitstring $\soutput$ of system $S$. Concretely, 
if the nominal logical measurement $\POVM$ has POVM elements labeled $\POVM_{\elt}$
and the joint state of $L$ and $S$ prior to measurement is 
$\rho_L \otimes \ZProj{\soutput}$, 
then the probability of outcome $\elt$ is
\begin{equation}
  \operatorname{Prob}(\elt) = 
  \tr{\POVM_{\elt} \, \Lambda_{\soutput}^{(\text{meas})}(\rho_L) } \, , ~~
  \label{eq_meas_err_state}
\end{equation}
which extends straightforwardly to mixtures of states of the form above.

In App.~\ref{app_logical_meas}, we compute $\operatorname{Prob}(\elt)$ on the left-hand side of Eq.~\ref{eq_meas_err_state} 
using the noisy channel $C_{\text{meas}}' = \ErrChan^{\vpp}_{\text{meas}} \circ C_{\text{meas}}^{\vpp}$ depicted above. 
We then extract the logical channel on the right-hand side of Eq.~\ref{eq_meas_err_state},
which captures the noisy measurement. In particular, we define
\begin{equation}
  \label{eq_Lambda_meas}
  \Lambda^{(\text{meas})}_{\soutput} (\rho_L) = \sum\limits_{\Pauli*{\ell} \in \PauliSet*{L}} 
  p^{(\text{meas})}_{\ell,\soutput} ~\PauliChan{\ell} ( \rho_L) \, ,~~
\end{equation}
where the probabilities $p^{(\text{meas})}_{\ell,\soutput}$ are defined in Eq.~\ref{eq_meas_Lambda_probs} of App.~\ref{app_logical_meas} and depend on $\ErrChan_{\text{meas}}$.

\subsection{Approximate logical Markovian model of QEC experiments}
\label{sec_approx_model}

We now combine the preceding results to identify 
(\emph{i}) a total logical error channel $\Lambda_{\text{tot}}^{(K)}$ that models an entire QEC experiment and
(\emph{ii}) the notion of a Markovian model on $L$ that approximates the true model. We first define a QEC experiment.
\begin{defn}[QEC experiment]
  A \emph{QEC experiment} that uses a  QEC cycle (see Sec.~\ref{sec_noisy_qec_cycles}) is an experiment that consists 
  of a state preparation 
  (see Sec.~\ref{sec_Pauli_noise_state_prep}),
  followed by $\ncyc \geq 1$ repetitions of the QEC cycle, followed by
   a measurement 
  (see Sec.~\ref{sec_Pauli_noise_meas}). 
  \label{defn_QEC_experiment}
\end{defn}
We are interested in constructing a description of the logical errors that occur in a QEC experiment with Pauli stochastic noise, as defined in Secs.~\ref{sec_Pauli_noise_state_prep}-\ref{sec_Pauli_noise_meas}.
Importantly, we demand that this description involves only the $L$ subsystem.
Accordingly, we define
\begin{defn}[Logical model of a QEC experiment]
  A \emph{logical model of a QEC experiment} with $\ncyc$ cycles consists of a nominal logical state $\sigma_L$, 
  a total logical error channel $\Lambda_{tot}^{(\ncyc)}$, and a nominal logical POVM $\POVM$. The logical model predicts 
  the probabilities of the measurement outcomes of a QEC experiment according to
  \begin{equation}
    \label{eq_meas_prob_tot}
    \operatorname{Prob}(\elt) = \trace \big( \POVM_{\elt} \, \Lambda^{(\ncyc)}_{\mathrm{tot}}(\sigma_L) \big) \, ,~
  \end{equation}
  where the POVM element $\POVM_\elt$ is asssociated with the $\elt$th measurement outcome (see Sec.~\ref{sec_Pauli_noise_meas}).
  We occasionally overload notation by referring to the probabilities $\operatorname{Prob}(\elt)$ as the logical model.
  \label{def_total_model}
\end{defn}
When the QEC experiment has Pauli-stochastic noise, the true distribution over measurement outcomes admits a logical model, as shown in the following corollary to Prop.~\ref{prop_qec_final}.
\begin{cor}[Total logical error channel]
  Given a QEC experiment as specified in Def.~\ref{defn_QEC_experiment} with Pauli-stochastic noise as defined in Secs.~\ref{sec_Pauli_stochastic}--\ref{sec_Pauli_noise_meas},
  the \emph{total logical error channel}  is
  \begin{equation}
    \Lambda^{(\ncyc)}_{\mathrm{tot}} = \hspace{-2mm} \sum_{s_0,\dots,s_\ncyc} \hspace{-0.7mm}
    \gamma^{\vpp}_{s_{\ncyc},s_{\ncyc-1}} \hspace{-0.8mm} \cdots \gamma^{\vpp}_{s_1,s_0} \gamma^{\mathrm{(prep)}}_{s_0}~
    \Lambda^{\mathrm{(meas)}}_{s_\ncyc} \hspace{-0.3mm} \circ  \Lambda^{\vpp}_{s_{\ncyc},s_{\ncyc-1}} \hspace{-0.8mm} \circ \cdots \circ \Lambda^{\vpp}_{s_1,s_0}\hspace{-0.5mm}
    \circ  \Lambda^{\mathrm{(prep)}}_{s_0} 
    \label{eq_lambda_total} \, ,~~
  \end{equation}
  where $\gamma^{\mathrm{(prep)}}_{s_0}$ and $\Lambda^{\mathrm{(prep)}}_{s_0}$ depend on the preparation procedure,
  $\Lambda^{\mathrm{(meas)}}_{s_\ncyc}$ depends on 
  the measurement procedure,
  and $\gamma^{\vpp}_{s_{\cycind},s_{\cycind-1}}$ and $\Lambda^{\vpp}_{s_{\cycind},s_{\cycind-1}}$ depend on
  $\ErrChan$ (see Prop.~\ref{prop_qec_final}).  
  \label{cor_lambda_tot}
\end{cor}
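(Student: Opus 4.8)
The plan is to obtain the stated form of $\Lambda^{(\ncyc)}_{\mathrm{tot}}$ by composing the joint state produced after $\ncyc$ cycles, which is already supplied by Prop.~\ref{prop_qec_final}, with the logical measurement model of Sec.~\ref{sec_Pauli_noise_meas}, and then reading off the operator that acts on $\sigma_L$. First I would take the joint state $\rho_{\ncyc}$ on $L$ and $S$ from Eq.~\eqref{eq_qec_final}, which expresses the result of applying $\ncyc$ QEC cycles to the noisy initial state $\rho_0$~\eqref{eq_noisy_prep} as a sum over syndrome trajectories $s_0,\dots,s_{\ncyc}$, with each term carrying a product of transition weights $\gamma$ and a composition of logical Pauli channels $\Lambda$ applied to $\sigma_L$, tensored with $\ZProj{s_{\ncyc}}$.

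Next I would apply the measurement. Each summand of $\rho_{\ncyc}$ has exactly the product form $\rho_L \otimes \ZProj{s_{\ncyc}}$ required by Eq.~\eqref{eq_meas_err_state}, so the probability of outcome $\elt$ is obtained term by term: for the summand labeled by the trajectory $(s_0,\dots,s_{\ncyc})$, the syndrome $s_{\ncyc}$ selects the logical measurement channel $\Lambda^{(\mathrm{meas})}_{s_{\ncyc}}$ of Eq.~\eqref{eq_Lambda_meas}, which is composed on the left with the logical channels already present in that summand. Because Eq.~\eqref{eq_meas_err_state} extends linearly to mixtures of such product states, summing over trajectories yields $\operatorname{Prob}(\elt)$ as a single trace of $\POVM_{\elt}$ against a weighted sum of compositions of logical channels applied to $\sigma_L$.

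Finally I would invoke linearity of the trace to pull $\POVM_{\elt}$ outside the trajectory sum, thereby identifying the operator acting on $\sigma_L$ as a single logical channel. Matching this against the logical-model prediction~\eqref{eq_meas_prob_tot} of Def.~\ref{def_total_model} shows that the bracketed channel is precisely the total logical error channel, so that $\operatorname{Prob}(\elt) = \Tr{\POVM_{\elt}\,\Lambda^{(\ncyc)}_{\mathrm{tot}}(\sigma_L)}$. Invoking the standing assumption stated immediately after Prop.~\ref{prop_qec_final}, namely that $\gamma^{(\cycind)}$ and $\Lambda^{(\cycind)}$ are identical across cycles, then removes the cycle superscripts and reproduces Eq.~\eqref{eq_lambda_total} verbatim. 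I do not expect a genuine obstacle, since the result is essentially bookkeeping; the one point requiring care is confirming that the measurement model acts on $\rho_{\ncyc}$ term by term, which is guaranteed by the product structure of each summand together with the linear extension of Eq.~\eqref{eq_meas_err_state} to mixtures.
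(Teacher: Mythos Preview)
Your proposal is correct and follows the same approach as the paper: take the joint state after $\ncyc$ cycles from Prop.~\ref{prop_qec_final}, apply the measurement model of Eq.~\eqref{eq_meas_err_state} term by term (using the product structure of each summand), and read off the logical channel from Eq.~\eqref{eq_meas_prob_tot}. The paper's proof compresses this into two sentences, but the content is identical.
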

\begin{proof}
  By Eq.~\ref{eq_qec_final} of Prop.~\ref{prop_qec_final}, the joint state of $L$ and $S$ 
  following the $\ncyc$th and final QEC cycle is given by $\sum_{s_{\ncyc}} \, \rho_{\ncyc} \otimes \ZProj{s_{\ncyc}}$.
  Applying Eq.~\ref{eq_meas_err_state} 
  leads to Eq.~\ref{eq_lambda_total}.
\end{proof}

It is useful to define a set of QEC experimental settings, which we use to define logical Markovian models.
We note that a QEC experiment (see Def.~\ref{defn_QEC_experiment}) is specified by a QEC experimental setting and the number of QEC cycles, $\ncyc > 0$.

\begin{defn}[Set of QEC experimental settings]
  Let $\mathsf{Prep}$ be a set of state-preparation procedures with Pauli-stochastic noise (see Sec.~\ref{sec_Pauli_noise_state_prep}), 
  each of which consists of a target logical state $\sigma_L$ and the associated error model $\ErrChan_{\mathrm{prep}}$ 
  (see Diagram~\ref{eq_circ_state_prep}). 
  Let $\mathsf{Meas}$ be a set of measurements with Pauli-stochastic noise (see Sec.~\ref{sec_Pauli_noise_meas}),
  each of which consists of a target logical POVM $\POVM$ and the associated error model $\ErrChan_{\mathrm{meas}}$ 
  (see Diagram~\ref{circ_qec_meas_full}).
  Then a \emph{set of QEC experimental settings} is given by a subset $\mathsf{Exp} \subseteq \mathsf{Prep} \times \mathsf{Meas}$,
  which includes all combinations of preparations and measurements of interest. 
  \label{defn_exp_settings}
\end{defn}

We comment that a total logical error channel $\Lambda^{(\ncyc)}_{\text{tot}}$~\eqref{eq_lambda_total} 
is defined for each experimental setting $(p,m) \in \mathsf{Exp}$ and for each $\ncyc \geq 1$. Going forward, we write $\Lambda^{(\ncyc)}_{p,m}$ in lieu 
of $\Lambda^{(\ncyc)}_{\text{tot}}$ to make the relation to the experimental setting explicit. We now define the notion of a Markovian model $\mathcal{M}$ on the logical subsystem $L$ for a set 
$\mathsf{Exp}$ of QEC experimental settings associated with a fixed QEC cycle (see Def.~\ref{defn_exp_settings}).

\begin{defn}[Logical Markovian model]
  Consider a set $\mathsf{Exp} \subseteq \mathsf{Prep} \times \mathsf{Meas}$ of experimental settings, as in Def.~\ref{defn_exp_settings}. 
  The \emph{logical Markovian model} for $\mathsf{Exp}$ is the triple
  \begin{equation}
    \label{eq_Markov_triple}
    \mathcal{M} = \left( \mathcal{F}^{\,}_{\mathrm{prep}} , \Phi^{\,}_{\mathrm{QEC}} , \mathcal{F}^{\,}_{\mathrm{meas}} \right) \, ,~~
  \end{equation}
  where $\Phi_{\mathrm{QEC}}$ is a logical superoperator that is independent of the settings in $\mathsf{Exp}$, 
  and we define the following families of logical superoperators
  \begin{align}
    \mathcal{F}^{\,}_{\mathrm{prep}} = \left\{ \Phi^{\,}_{\mathrm{prep},p} \right\}_{p \in \mathsf{Prep}}
    ~~~ \text{and} ~~~
    \mathcal{F}^{\,}_{\mathrm{meas}} = \left\{ \Phi^{\,}_{\mathrm{meas},m} \right\}_{m \in \mathsf{Meas}}
    \, .~~ \label{eq_Phi_families}
  \end{align}
  \label{def_logical_markov_model}
\end{defn}
Importantly, for any experimental setting $(p,m) \in \mathsf{Exp}$ (see Def.~\ref{defn_exp_settings}) and a fixed 
number of QEC cycles $\ncyc \geq 1$, the Markov model $\mathcal{M}$ gives rise to an approximation of
the model $\operatorname{Prob}(\elt)$~\eqref{eq_meas_prob_tot} of the corresponding QEC experiment 
(see Defs.~\ref{defn_QEC_experiment} and \ref{def_total_model}).
\begin{defn}[Approximate probability distribution]
  Let $\mathcal{M}$ be a logical Markovian model for a set of QEC experimental settings $\mathsf{Exp}$, 
  as in Def.~\ref{def_logical_markov_model}.
  For each experimental setting $(p,m) \in \mathsf{Exp}$ and any $\ncyc >0$, $\mathcal{M}$ 
  gives rise to the approximation
  \begin{equation}
    \label{eq_Markov_prob_requirement}
    \mu^{(\ncyc)}_{p,m} (\elt) = \trace \big(  \POVM_{\elt} \, \Phi^{\vpp}_{\mathrm{meas}} \circ \Phi_{\mathrm{QEC}}^{\ncyc} 
    \circ \Phi^{\vpp}_{\mathrm{prep}}(\sigma_L) \big) \, ,~~
  \end{equation}
  %
  %
  for the logical probability distribution $\operatorname{Prob}(\elt)$~\eqref{eq_meas_prob_tot} 
  that models any QEC experiment associated with the setting $(p,m)$ in $\mathsf{Exp}$ and involving $\ncyc$ QEC cycles (see Def.~\ref{def_total_model}). 
\label{def_approx_probs}
\end{defn}

We note that the superoperators in Def.~\ref{def_logical_markov_model} are \emph{not} required to be completely positive
nor trace preserving. Accordingly, $\mu^{(\ncyc)}_{p,m} (\elt)$~\eqref{eq_Markov_prob_requirement} does not necessarily realize
a probability distribution, and in general may lead to a poor approximation of $\operatorname{Prob}(\elt)$~\eqref{eq_Markov_prob_requirement}. 
We now define the more useful notion of an \emph{approximate} logical Markovian model, which we consider in the remainder of the text. 
We do so by imposing a bound on the deviation of the approximations $\mu^{(\ncyc)}_{p,m} (\elt)$ of the Markovian model $\mathcal{M}$ from the true values $\operatorname{Prob}(\elt)$.

\begin{defn}[$(G,\epsilon)$-approximate logical Markovian model]
  Consider a set of experimental settings $\mathsf{Exp}$ as in Def.~\ref{defn_exp_settings}.
  Let $G$ and $\epsilon$ be positive, setting-independent constants with $\epsilon < 1$.
  Then, a logical Markovian model $\mathcal{M}$ for $\mathsf{Exp}$ (see Def.~\ref{def_logical_markov_model}) 
  is a \emph{$(G,\epsilon)$-approximate logical Markovian model} for $\mathsf{Exp}$ if,
  for all $(p,m)$ in $\mathsf{Exp}$, all $\ncyc > 0$, and all $\POVM_{\elt}$, 
  \begin{equation}
    \Big\lvert\trace \big( \POVM_{\elt}  \, \Lambda^{(\ncyc)}_{p,m} (\sigma_L) \big) - \trace ( \POVM_{\elt} \, \Phi_{\mathrm{meas}}^{\vpp} \Phi_{\mathrm{QEC}}^{\ncyc} 
    \Phi^{\vpp}_{\mathrm{prep}}(\sigma_L) ) \Big\rvert \leq G \epsilon^{\ncyc} \, .~
    \label{eq_approx_markov_equiv}
  \end{equation}
  \label{def_markov_approx}
\end{defn}

\subsection{Pauli eigenvalues of the total logical error channel}
\label{sec_Pauli_eigs}

It is convenient to consider the total logical error channel $\Lambda^{(\ncyc)}_{p,m}$~\eqref{eq_lambda_total}
in terms of its action on logical Pauli operators $P \in \PauliSet*{L}$. 
Because $\Lambda^{(\ncyc)}_{p,m}$ is a Pauli channel~\eqref{eq:general Pauli channel} 
for all experimental settings in $\mathsf{Exp}$ (see Def.~\ref{def_logical_markov_model}) and all $\ncyc \geq 1$, 
it acts on every logical Pauli $P$ as
\begin{equation}
  \Lambda^{(\ncyc)}_{p,m} (P) = \lambda^{(\ncyc)}_{p,m,P} \, P  = \frac{1}{2^{\nlog}} \trace \left( P \, \Lambda^{(\ncyc)}_{p,m} (P) \right) \,  , ~~
  \label{eq_qec_lambda_total_eig}
\end{equation}
where $\lambda^{(\ncyc)}_{p,m,P}$ is the \emph{Pauli eigenvalue} of the channel $\Lambda^{(\ncyc)}_{p,m}$ 
for the eigenoperator $P$.

A key observation is that an approximate logical Markovian model $\mathcal{M}$ 
(see Defs.~\ref{def_logical_markov_model} and \ref{def_markov_approx})
of $\Lambda^{(\ncyc)}_{p,m}$ exists if, for every logical Pauli $P$, 
the associated Pauli eigenvalue $\lambda^{(\ncyc)}_{p,m,P}$ is well approximated by a function that 
decays exponentially in the number of QEC cycles, $\ncyc$. We formalize this observation in the following proposition.

\begin{prop}
  Let $\mathsf{Exp}$ be a set of experimental settings, as in Def.~\ref{defn_exp_settings}. 
  For each $(p,m) \in \mathsf{Exp}$ and each $\ncyc \geq 1$, let $\Lambda^{(\ncyc)}_{p,m}$ be the 
  associated total logical error channel (see Eq.~\ref{eq_lambda_total} of Def.~\ref{def_total_model}).
  Denote by $\lambda^{(\ncyc)}_{p,m,P}$ the Pauli eigenvalue of $\Lambda^{(\ncyc)}_{p,m}$ for the logical Pauli $P \in \PauliSet*{L}$, 
  as in Eq.~\ref{eq_qec_lambda_total_eig}.

  Let $\epsilon$ and $G'$ be positive.
  For each logical Pauli $P \in \PauliSet*{L}$, let $\chi^{\,}_{P} \in \Reals$; 
  for each $P \in \PauliSet*{L}$ and each experimental setting $(p, m) \in \mathsf{Exp}$, 
  let $C_{p,P}^{(\mathrm{prep})}\in \Reals$ and $C_{m,P}^{(\mathrm{meas})} \in \Reals$. 
  Suppose that, for each $(p,m) \in \mathsf{Exp}$, each $\ncyc \geq 1$, and each $P \in \PauliSet*{L}$,
  \begin{equation}
    \Big\lvert \lambda^{(\ncyc)}_{p,m,P} - C_{p,P}^{(\mathrm{prep})} C_{m,P}^{(\mathrm{meas})} \chi_P^{\ncyc} \Big\rvert \leq G' \epsilon^{\ncyc} \, .~~
  \end{equation}
  Let $G = G' \, \sqrt{D}$, where $D = 2^{\nlog}$ is the dimension of $L$. 
  Then there exists a $(G,\epsilon)$-approximate logical Markovian model for $\mathsf{Exp}$ (see Def.~\ref{def_markov_approx}).
  \label{prop_exp_decay_approx_model}
\end{prop}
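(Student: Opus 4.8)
The plan is to read the factorized approximant $C_{p,P}^{(\mathrm{prep})} C_{m,P}^{(\mathrm{meas})} \chi_P^{\ncyc}$ as itself a Markovian structure: a prep factor, a QEC factor repeated $\ncyc$ times (memorylessly), and a meas factor, all multiplying each logical Pauli eigenvalue. Since the logical Paulis $P \in \PauliSet*{L}$ form an operator basis, I would define the members of the model $\mathcal{M}$ by their diagonal action on this basis,
\begin{equation}
\Phi_{\mathrm{QEC}}(P) = \chi_P \, P \, , \qquad \Phi_{\mathrm{prep},p}(P) = C_{p,P}^{(\mathrm{prep})} \, P \, , \qquad \Phi_{\mathrm{meas},m}(P) = C_{m,P}^{(\mathrm{meas})} \, P \, ,
\end{equation}
which extend uniquely to linear superoperators. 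As permitted by Def.~\ref{def_logical_markov_model}, these need not be completely positive or trace preserving. By construction $\Phi_{\mathrm{QEC}}$ is setting-independent, and the composite $\Phi_{\mathrm{meas},m} \circ \Phi_{\mathrm{QEC}}^{\ncyc} \circ \Phi_{\mathrm{prep},p}$ acts diagonally on each logical Pauli with eigenvalue exactly $C_{p,P}^{(\mathrm{prep})} C_{m,P}^{(\mathrm{meas})} \chi_P^{\ncyc}$.

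\textbf{Reduction to a Pauli-diagonal difference.} It then remains to verify the bound in Eq.~\ref{eq_approx_markov_equiv} of Def.~\ref{def_markov_approx}. Because $\Lambda^{(\ncyc)}_{p,m}$ is a logical Pauli channel [Eq.~\ref{eq_qec_lambda_total_eig}] and the composite model is Pauli-diagonal by construction, their difference $\Delta := \Lambda^{(\ncyc)}_{p,m} - \Phi_{\mathrm{meas},m} \circ \Phi_{\mathrm{QEC}}^{\ncyc} \circ \Phi_{\mathrm{prep},p}$ is again diagonal, with $\Delta(P) = \delta_P \, P$ and $\delta_P = \lambda^{(\ncyc)}_{p,m,P} - C_{p,P}^{(\mathrm{prep})} C_{m,P}^{(\mathrm{meas})} \chi_P^{\ncyc}$. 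The hypothesis gives $\abs{\delta_P} \le G' \epsilon^{\ncyc}$ uniformly in $P$, $\ncyc$, and $(p,m)$, and the left-hand side of Eq.~\ref{eq_approx_markov_equiv} is exactly $\abs{\tr{\POVM_\elt \, \Delta(\sigma_L)}}$.

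\textbf{Hilbert--Schmidt bound.} Expanding $\sigma_L = D^{-1} \sum_{P} \tr{P\sigma_L}\,P$ and using $\tr{PP'} = D\,\delta_{P,P'}$, I would write
\begin{equation}
\tr{\POVM_\elt \, \Delta(\sigma_L)} = \frac{1}{D} \sum_{P \in \PauliSet*{L}} \tr{P\sigma_L}\,\delta_P\,\tr{\POVM_\elt P} \, ,
\end{equation}
then apply the triangle inequality, the uniform bound $\abs{\delta_P} \le G'\epsilon^{\ncyc}$, and Cauchy--Schwarz over $P$. The two resulting sums are Hilbert--Schmidt norms: $\sum_P \tr{P\sigma_L}^2 = D\,\tr{\sigma_L^2} \le D$ since the purity of $\sigma_L$ is at most one, and $\sum_P \tr{\POVM_\elt P}^2 = D\,\tr{\POVM_\elt^2} \le D^2$ since $0 \le \POVM_\elt \le \ident$ forces $\tr{\POVM_\elt^2} \le \tr{\POVM_\elt} \le D$. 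Combining these,
\begin{equation}
\abs{\tr{\POVM_\elt \, \Delta(\sigma_L)}} \le \frac{G'\epsilon^{\ncyc}}{D}\,\sqrt{D}\,\sqrt{D^2} = G'\sqrt{D}\,\epsilon^{\ncyc} \, ,
\end{equation}
so $\mathcal{M}$ is a $(G,\epsilon)$-approximate logical Markovian model with $G = G'\sqrt{D}$, as claimed.

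\textbf{Main obstacle.} I do not expect a deep obstruction: once the factorized hypothesis is recognized as the defining multiplicative structure of a Markovian model, the construction above is essentially forced, and the rest is linear algebra in the Pauli basis. The only delicate point is the bookkeeping of dimension factors in the final estimate—specifically, pairing the purity bound $\tr{\sigma_L^2} \le 1$ with the POVM bound $\tr{\POVM_\elt^2} \le D$ through Cauchy--Schwarz so that the factors combine to exactly $G'\sqrt{D}$ rather than a looser constant. I would also be careful to note that the bound holds uniformly over all POVM elements $\POVM_\elt$ and all settings $(p,m) \in \mathsf{Exp}$, since Def.~\ref{def_markov_approx} demands a single setting-independent $G$.
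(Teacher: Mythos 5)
Your proposal is correct and follows essentially the same route as the paper: define the Pauli-diagonal superoperators $\Phi_{\mathrm{prep},p}$, $\Phi_{\mathrm{QEC}}$, $\Phi_{\mathrm{meas},m}$ from the given eigenvalue data, then convert the uniform eigenvalue bound into a probability bound with a $\sqrt{D}$ overhead. The only cosmetic difference is that the paper delegates the final estimate to Lemma~\ref{lem_approx_model_eigenvalues} (Cauchy--Schwarz applied directly at the operator level via Lemma~\ref{lem_pauli_diag}), whereas you carry out the equivalent Cauchy--Schwarz explicitly on the Pauli-expansion coefficients; both yield the same constant $G = G'\sqrt{D}$.
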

\begin{proof}
  Define a fixed logical superoperator $\Phi^{\,}_{\text{QEC}}$ and logical superoperators 
  $\Phi^{(\text{prep})}_{p}$ and $\Phi^{(\text{meas})}_{m}$ for each preparation $p \in \mathsf{Prep}$ 
  and measurement $m \in \mathsf{Meas}$, respectively, such that
  \begin{subequations}
    \label{eq_Phis_from_Paulis}
    \begin{align}
      \Phi^{(\text{prep})}_p (P) &= C_{p,P}^{(\text{prep})} \, P \label{eq_Phi_prep_from_Pauli} \\
      \Phi^{\vpp}_{\text{QEC}} (P) &= \chi^{\vpp}_P \, P \label{eq_Phi_QEC_from_Pauli} \\
      \Phi^{(\text{meas})}_m (P) &= C_{m,P}^{(\text{meas})} \, P \label{eq_Phi_meas_from_Pauli} 
    \end{align}
  \end{subequations}
  for all logical Paulis $P \in \PauliSet*{L}$. We note that each logical superoperator above is fully specified by knowledge of its action 
  on every Pauli $P \in \PauliSet*{L}$.

  The families of superoperators in Eq.~\ref{eq_Phi_families} are given by $\mathcal{F}^{\,}_{\text{prep}} = \{ \Phi^{(\text{prep})}_p \}_{p}$ 
  and $\mathcal{F}^{\,}_{\text{meas}} = \{ \Phi^{(\text{meas})}_m \}_{m}$. Then, the corresponding logical Markovian model is
  $\mathcal{M} = (\mathcal{F}^{\,}_{\text{prep}},\Phi^{\,}_{\text{QEC}},\mathcal{F}^{\,}_{\text{meas}})$, as in Eq.~\ref{eq_Markov_triple}.
  By Lemma~\ref{lem_approx_model_eigenvalues}, for each setting $(p,m) \in \mathsf{Exp}$ and for each $\ncyc \geq 1$,
  a uniform bound of $G' \, \epsilon^{\ncyc}$ on the difference between the true Pauli eigenvalue 
  $\lambda^{(\ncyc)}_{p,m,P}$~\eqref{eq_qec_lambda_total_eig} and the eigenvalue 
  $C_{p,P}^{(\text{prep})} C_{m,P}^{(\text{meas})} \chi_P^{\ncyc}$ of the approximate model implies a bound of 
  $G \, \epsilon^{\ncyc}$ on the difference between the approximate probabilities 
  $\mu^{(\ncyc)}_{p,m} (\elt)$~\eqref{eq_Markov_prob_requirement} and the true probabilities 
  $\operatorname{Prob}(\elt)$~\eqref{eq_meas_prob_tot}. Hence, 
  $\mathcal{M}$ realizes a $(G,\epsilon)$-approximate logical Markovian model for $\mathsf{Exp}$. 
\end{proof}

\section{The \smip{}}
\label{sec_smip}

Here we describe a property of QEC cycles that we call the \smip{}, 
which simplifies the model of a single QEC cycle in Eq.~\ref{eq_qec_gamma_lambda}. 
We define this property below, and discuss two means of ensuring that it holds.

\begin{defn}[\Smip{}]
  A QEC cycle with Pauli-stochastic noise (see Secs.~\ref{sec_prelim_noiseless} and \ref{sec_noisy_qec_cycles})
  has the \emph{\smip} if the transition probabilities $\gamma_{\soutput,\sinput}$ in Eq.~\ref{eq_qec_final} 
  are independent of $\sinput$.
  \label{def_smip}
\end{defn}
When the \smip{} holds, we replace the transition probabilities $\gamma_{\soutput,\sinput}$ with $\gamma_{\soutput}$;
we refer to the latter as the \emph{syndrome marginal distribution}. Importantly, if the QEC cycle has the \smip{},
then the Markov process $\gamma_{\soutput,\sinput}$ in each time step of 
Eq.~\ref{eq_qec_final} can be replaced with independent samples from the syndrome marginal distribution $\gamma_{\soutput}$.

We now discuss two means of guaranteeing  that the \smip{} of Def.~\ref{def_smip} holds. 
First, we show that the \smip{} holds if the QEC cycle defined in Sec.~\ref{sec_prelim_noiseless} has what we call the \ssp{}.
Second, we show that the \smip{} can be enforced for arbitrary decoding functions $D$ 
via randomization of the syndrome subsystem $S$.
\begin{defn}[\Ssp{}]
  \label{def_ssp}
  Consider a QEC cycle as described in Sec.~\ref{sec_prelim_noiseless} 
  with syndrome encoding and decoding functions $E$ and $D$, respectively. 
  If these functions satisfy $D(E(s) + \theta^\prime) = s + D(\theta^\prime)$ 
  for all bitstrings $s$ of $S$ and all bitstrings $\theta^\prime$ of $O$,
  then the QEC cycle has the \emph{\ssp{}}.
\end{defn}
We now establish that the \ssp{} implies the \smip{}. This is formalized in the following Proposition.
\begin{prop}[Syndrome-decoding symmetry implies syndrome marginal independence]
  Consider a QEC cycle as described in Sec.~\ref{sec_prelim_noiseless} 
  with Pauli-stochastic noise as described in Sec.~\ref{sec_noisy_qec},
  so that Prop.~\ref{prop_qec_final} holds. If the QEC cycle has the \ssp{} (see Def.~\ref{def_ssp}), then 
  it has the \smip{} (see Def.~\ref{def_smip}).
   \label{prop_ssp_smip}
\end{prop}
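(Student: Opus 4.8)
The plan is to read $\gamma_{\soutput,\sinput}$ off the derivation leading to Prop.~\ref{prop_qec_final} and show that imposing the \ssp{} removes all dependence on $\sinput$. Recall from Eqs.~\eqref{eq_after_errchan}--\eqref{eq:rho out nice} that, for an input $\rho_L \otimes \ZProj{\sinput}$, a term of the intermediate sum labeled by $(\ell',s',\theta')$ produces the output syndrome $\soutput = \sinput + \serr + s'$, where $\serr = D(E(\sinput)+\theta') + s_*$. The transition probability $\gamma_{\soutput,\sinput}$ is the marginal over the logical register, i.e.\ the sum of the intermediate probabilities $\overline{p}_{\ell',s',\theta'}$ over all $(\ell',s',\theta')$ compatible with the prescribed $\soutput$:
\begin{equation}
  \gamma_{\soutput,\sinput} = \sum_{\ell',s',\theta'} \overline{p}_{\ell',s',\theta'}\,\kron{\soutput,\,\sinput + \serr + s'}\, .
\end{equation}
The first observation I would make is that the weights $\overline{p}_{\ell',s',\theta'}$ are themselves independent of $\sinput$: they arise from the fixed Pauli channel $\ErrChan$ acting on the dephased registers $S$ and $O$, so $s'$ and $\theta'$ are the bit-flip components of the error and their probabilities never reference the input syndrome. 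Hence the only route by which $\sinput$ can enter $\gamma_{\soutput,\sinput}$ is through the Kronecker constraint relating $\soutput$, $\sinput$, $s'$, and $\theta'$.

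The core of the argument is then to show that, under the \ssp{}, this constraint is in fact independent of $\sinput$. Substituting $\serr = D(E(\sinput)+\theta') + s_*$ into $\soutput = \sinput + \serr + s'$ and applying the defining relation of the \ssp{} (Def.~\ref{def_ssp}), namely $D(E(\sinput)+\theta') = \sinput + D(\theta')$, gives
\begin{equation}
  \soutput = \sinput + \big( \sinput + D(\theta') + s_* \big) + s' = s' + D(\theta') + s_*\, ,
\end{equation}
where the two copies of $\sinput$ cancel because bitstring arithmetic is over $\Ints_2$. Thus the constraint $\soutput = \sinput + \serr + s'$ collapses to $\soutput = s' + D(\theta') + s_*$, which no longer involves $\sinput$.

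Finally, I would combine the two observations: both the summands $\overline{p}_{\ell',s',\theta'}$ and the set of indices $(\ell',s',\theta')$ selected by the (now $\sinput$-independent) constraint are independent of $\sinput$, so $\gamma_{\soutput,\sinput} = \gamma_{\soutput}$ depends only on $\soutput$. This is exactly the \smip{} of Def.~\ref{def_smip}, completing the proof. I do not expect a serious obstacle, since the result is essentially a one-line $\Ints_2$ cancellation; the only point requiring care is justifying that the intermediate weights $\overline{p}_{\ell',s',\theta'}$ carry no hidden $\sinput$-dependence, which is why I would make that step explicit by appealing to the construction of these probabilities from $\ErrChan$ in App.~\ref{app_QEC_update_C}.
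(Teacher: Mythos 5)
Your proposal is correct and follows essentially the same route as the paper's proof in App.~\ref{app_ssp_smip}: the paper likewise observes that $\sinput$ enters $\gamma_{\soutput,\sinput}$ (Eq.~\ref{eq:QEC gamma def}) only through the indicator constraint $\mathcal{I}^{\sinput+\soutput+s_*+D(E(\sinput)+\theta')}_{\Pauli*{s}}$---your Kronecker delta in disguise---and then applies the \ssp{} so that the two copies of $\sinput$ cancel over $\Ints_2$. Your explicit remark that the weights $\overline{p}_{\ell',s',\theta'}$ carry no hidden $\sinput$-dependence is a worthwhile clarification that the paper leaves implicit, but it does not change the argument.
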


Intuitively, Prop.~\ref{prop_ssp_smip} holds because the \ssp{} ensures that,
 for each term in the sum in Eq.~\ref{eq:rho out naive}, the dependence on $\sinput$ is eliminated.
The full proof appears in App.~\ref{app_ssp_smip}.

We expect most pairs of syndrome encoding and decoding functions $E$ and $D$ to be compatible with the \ssp{}. 
For example, if $E$ and $D$ are, respectively, 
an encoder and minimum-weight decoder for a classical linear code with \emph{odd} code distance $d$,
then the QEC cycle has the \ssp{}. 
This can be verified by first noting that, in such circumstances, 
any bitstring $\theta'$ of the system $O$ can be expressed as $\theta' = E(s') + \theta''$, 
where $s'$ is a bitstring of system $S$ and $\theta''$ is a bistring of system $O$, where the latter has a Hamming weight less than $d$.
Then, the minimum-weight decoder satisfies $D(\theta') = s'$, 
because $\theta'$ is closest to $E(s')$ in terms of the Hamming distance.
Because the encoder $E$ for any classical linear code satisfies $E(s_1 + s_2) = E(s_1) + E(s_2)$ for all bitstrings $s_1$ and $s_2$ of system $S$,
it follows that 
\begin{equation}
  D( E(s) + \theta') = D( E (s+s') + \theta'') = s + s' = s + D(\theta') \, ,~
\end{equation}
which is the \ssp{} (see Def.~\ref{def_ssp}). 
Additionally, in many QEC contexts, syndromes are fault-tolerantly extracted using classical repetition codes~\cite{gottesman2016surviving}, 
which is a special case of the argument above for linear codes.

We now explain how the \smip{} can be enforced by a modification of the QEC cycle 
that we call syndrome-state randomization, which does not require 
the \ssp{}.
We define syndrome-state randomization below, and show how it leads to the \smip{} in Prop.~\ref{prop_syndrome_rand_smip}.

\begin{defn}[Syndrome-state randomization]

  Given a QEC cycle as described in Sec.~\ref{sec_prelim_noiseless}, 
  \emph{syndrome-state randomization} is a modification to the QEC cycle. Prior to Step~\ref{qec_step_C},
  a bitstring $s_r$ of the subsystem $S$ is drawn uniformly at random. One then applies a Pauli superoperator to $S$ 
  that updates the joint input state on $L$ and $S$ according to
  $\rho_L \otimes \ZProj{\sinput} \mapsto \rho_L \otimes \ZProj{\sinput + s_r}$. 
  Importantly, knowledge of the bitstring $s_r$ 
  is not used for sydrome reset (Step~\ref{qec_step_reset}), but is used for logical recovery (Step~\ref{qec_step_EC}) by modifying $\smeas$ to $\smeas + s_r$.
  \label{defn_rand_synd}
\end{defn}

Intuitively, Def.~\ref{defn_rand_synd} ensures that the state of $S$ after the ``Reset'' operation in Diagram~\ref{circ_qec_step_perfect}
is uncorrelated with the input syndrome state $\sinput$, as the randomization removes all dependence on $\sinput$. 
We formalize this intuition in Prop.~\ref{prop_syndrome_rand_smip}, below.
\begin{prop}[Syndrome randomization implies syndrome marginal independence]
  Consider a QEC cycle as defined in Sec.~\ref{sec_prelim_noiseless}. 
  If the input syndrome state is randomized (see Def.~\ref{defn_rand_synd}),
  then the modified QEC cycle has the \smip{} for \emph{any} pair of syndrome encoding and decoding functions $E$ and $D$.
  \label{prop_syndrome_rand_smip}
\end{prop}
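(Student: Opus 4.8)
The plan is to re-run the single-cycle computation that produced Eq.~\ref{eq:rho out naive}, but now with the randomized input syndrome prescribed by Def.~\ref{defn_rand_synd}, while tracking exactly where the random bitstring $s_r$ enters. Since the \smip{} (Def.~\ref{def_smip}) concerns only the syndrome transition probabilities $\gamma_{\soutput,\sinput}$---i.e., the marginal over the logical error of the joint probability appearing in Eq.~\ref{eq:rho out nice}---I can ignore the logical-recovery channel entirely and focus on how the bitstring of $S$ evolves. The structural input from Def.~\ref{defn_rand_synd} is asymmetric: $s_r$ shifts the input syndrome seen by the Clifford $C$ to $\sinput + s_r$, it is \emph{not} used in the reset (Step~\ref{qec_step_reset}), and it modifies $\smeas$ only in the logical recovery (Step~\ref{qec_step_EC}). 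The goal is to show that, after averaging over the uniform choice of $s_r$, the resulting distribution of $\soutput$ is independent of $\sinput$.

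First I would apply Eq.~\ref{eq_after_errchan} with input syndrome $\sinput + s_r$ in place of $\sinput$, so that after the noisy Clifford and the discarding of $A$ the state of $S$ is $\ZProj{\sinput + s_r + s'}$ and the outcome register is $\ZProj{E(\sinput + s_r) + \theta'}$, with probabilities $\overline{p}_{\ell',s',\theta'}$ that (as in Eq.~\ref{eq_after_errchan}) do not depend on the input syndrome. The unmodified measured syndrome is then $\smeas = D(E(\sinput+s_r)+\theta')$, and since $s_r$ is not used for the reset, the reset flips $S$ by $\serr = \smeas + s_*$. Collecting the bit-flips on $S$ exactly as in the passage leading to Eq.~\ref{eq:rho out naive} gives the output bitstring
\begin{equation}
  \soutput = \sinput + s_r + s' + D\big(E(\sinput + s_r) + \theta'\big) + s_* \, .
\end{equation}

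The decisive step is the reparametrization $u = \sinput + s_r$. Because $s_r$ is drawn uniformly and independently of the noise, $u$ is uniform over bitstrings of $S$ and independent of $\sinput$ for every fixed $\sinput$; substituting yields $\soutput = u + s' + D(E(u)+\theta') + s_*$, which is manifestly free of $\sinput$. Averaging over the uniform $u$ and the noise probabilities $\overline{p}_{\ell',s',\theta'}$, and marginalizing over the logical error $\ell'$, produces a transition probability $\gamma_{\soutput,\sinput}$ that is independent of $\sinput$---precisely the \smip{}. Note that no property of $E$ or $D$ was used, so the conclusion holds for arbitrary syndrome encoding and decoding functions, as claimed.

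The main thing to get right is the bookkeeping of $s_r$: the argument works because $s_r$ enters both the reset-determining syndrome $\smeas$ and the post-$C$ state of $S$ in exactly the combination $\sinput + s_r$, so that a single change of variable absorbs all $\sinput$-dependence. I expect the only subtlety to be confirming that the logical-recovery modification $\smeas \mapsto \smeas + s_r$ does not disturb this argument; it does not, since that modification alters only the correction $\Cor_{\serr}$ applied to $L$ and hence only the conditional logical channel $\Lambda_{\soutput,\sinput}$, leaving the syndrome marginal $\gamma_{\soutput,\sinput}$ untouched.
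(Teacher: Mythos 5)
Your proposal is correct and follows essentially the same route as the paper's proof in App.~\ref{app_syndrome_rand_proof}: both track the randomizing bitstring $s_r$ through the noisy Clifford, the reset, and the recovery, and both rely on the change of variables $u = \sinput + s_r$ (the paper calls it $\sigma$) to show that the distribution over $\soutput$ is manifestly free of $\sinput$, with the residual $\sinput$-dependence confined to the correction $\Cor$ and hence to the conditional logical channel $\Lambda_{\soutput,\sinput}$ rather than the marginal $\gamma_{\soutput,\sinput}$. Your closing remark about the recovery modification $\smeas \mapsto \smeas + s_r$ is exactly the subtlety the paper's proof also addresses.
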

The proof of Prop.~\ref{prop_syndrome_rand_smip} appears in App.~\ref{app_syndrome_rand_proof} 
and closely parallels the proof of Prop.~\ref{prop_qec_final}.
Intuitively, the \smip{} holds because the input syndrome bitstring $\sinput$ is used only for 
logical recovery (Step~\ref{qec_step_EC} of the QEC cycle), and \emph{not} for syndrome reset (Step~\ref{qec_step_reset}). 
As a result, the syndrome bitstring is effectively randomized in a manner that depends only on  $\soutput$.
In the remainder of this paper, we assume that the QEC cycle has the \smip{}, either via the \ssp{} 
or via syndrome randomization.

\section{Logical Markovian models for low error rate}
\label{sec_eff_markovian}

\subsection{Requirement for state preparation}
\label{sec_state_prep_gamma}

To simplify the construction of approximate logical Markovian models (see Def.~\ref{def_markov_approx}), 
we now impose an additional constraint on state preparation, which enters Theorem~\ref{thm_main}.
In particular, we require that the distribution $\gamma^{(\text{prep})}_{s_0}$ in the model of noisy state preparation~\eqref{eq_noisy_prep} 
is equal to the syndrome marginal distribution $\gamma^{\,}_{s_0}$ defined in Sec.~\ref{sec_smip}.
For example, if state preparation is implemented fault tolerantly, it will generally include a QEC cycle;
if it includes the \emph{same} QEC cycle that defines the QEC experiment (see Def.~\ref{defn_QEC_experiment}),
then the requirement $\gamma^{(\text{prep})}_{s_0} = \gamma^{\,}_{s_0}$ is automatically satisfied.
The constraint $\gamma^{(\text{prep})}_{s_0} = \gamma^{\,}_{s_0}$ can be enforced generically by simply absorbing the first QEC cycle into state preparation,
at the cost of changing $\ncyc$ to $\ncyc-1$.

\subsection{Changing frames for Pauli channels}
\label{sec_frame_change}

We now introduce a useful modification to the channels $\Lambda^{(\text{prep})}_{s_0}$, 
$\Lambda^{\vpp}_{\soutput,\sinput}$, and $\Lambda^{(\text{meas})}_{s_{\ncyc}}$ associated with
state preparation~\eqref{eq_noisy_prep}, QEC cycles~\eqref{eq_qec_gamma_lambda}, and logical measurement~\eqref{eq_Lambda_meas}, respectively. 
In particular, it is desirable for these channels to be close to the identity channel when the total error rate is low. 
As previously defined, these channels do not have this property: even in the low-error regime, 
they may include correctable logical errors that will be successfully corrected in the next QEC cycle, 
or logical corrections from prior cycles. We address this by (\emph{i}) appending to the state-preparation channels and QEC channels 
a unitary channel $\Cor$ conditioned on the true state of system $S$ that perfectly corrects these errors and (\emph{ii})
 prepending to the subsequent nominal channel the inverse unitary channel $\Cor^{-1}$, for all terms in $\Lambda^{(\ncyc)}_{\text{tot}}$~\eqref{eq_lambda_total}.
We use tildes to distinguish the modified channels, which are given by
\begin{subequations}
  \label{eq_cor_frame_channels}
  \begin{align}
    \widetilde{\Lambda}^{(\text{prep})}_{s_0} &= \Cor^{\vpd}_{s_0} \circ {\Lambda}^{(\text{prep})}_{s_0} \label{eq_cor_frame_prep_chan} \\
    \widetilde{\Lambda}^{\vpp}_{\soutput,\sinput} &= \Cor^{\vpd}_{\soutput} \circ \Lambda^{\vpp}_{\soutput,\sinput} \circ \Cor^{\dagger}_{\sinput} \label{eq_cor_frame_qec_chan} \\
    \widetilde{\Lambda}^{(\text{meas})}_{s_{\ncyc}} &= {\Lambda}^{(\text{meas})}_{s_{\ncyc}} \circ \Cor^{\dagger}_{s_{\ncyc}} \label{eq_cor_frame_meas_chan}
  \end{align}
\end{subequations}
where the unitary error-correction channel $\Cor$ is defined in Step~\ref{qec_step_EC} of the QEC cycle (see Eq.~\ref{eq:EC step}).
In terms of these channels, the total logical error channel $\Lambda_{\text{tot}}$~\eqref{eq_lambda_total} is
\begin{align}
  \Lambda^{(\ncyc)}_{\text{tot}} (\sigma_L) = \hspace{-2mm} \sum_{s_0,\dots,s_\ncyc} \hspace{-0.7mm}
    \gamma^{\vpp}_{s_{\ncyc}} \hspace{-0.8mm} \cdots \gamma^{\vpp}_{s_1} 
    \gamma^{\vpp}_{s_0}~
    \widetilde{\Lambda}^{\text{(meas)}}_{s_\ncyc} \hspace{-0.3mm} \circ \widetilde{\Lambda}^{\vpp}_{s_{\ncyc},s_{\ncyc-1}} \hspace{-0.8mm} \circ 
    \cdots \circ \widetilde{\Lambda}^{\vpp}_{s_1,s_0}\hspace{-0.5mm}
    \circ  \widetilde{\Lambda}^{\text{(prep)}}_{s_0} ( \sigma_L)
    \label{eq_qec_cor_frame_total} \, ,~~
\end{align}
which resembles Eq.~\ref{eq_lambda_total} with $\Lambda \to \widetilde{\Lambda}$. 
We note that we have also used the constraint $\gamma^{\text{prep}}_{s_0} = \gamma_{s_0}.$.
 As a reminder, 
the utility of the modified channels is that they approach the identity channel in the limit of low logical error rate.

\subsection{High-quality QEC cycles}
\label{sec_high_quality}

Here we introduce the notion of high-quality QEC cycle, which we associate with low logical error rate. 
This definition is important to the main result, Theorem~\ref{thm_main}. Below, we define the notion of a 
high-quality QEC cycle in terms of the entanglement fidelity~\cite{Nielsen_2002}.
\begin{defn}[Single-cycle logical entanglement fidelity]
  Consider a QEC cycle with 
  the \smip{} (see Def.~\ref{def_smip}), let $\gamma_s$ be the syndrome marginal distribution, and define
  $\widetilde{\Lambda}$ as in Eq.~\ref{eq_cor_frame_qec_chan}.
  Define the \emph{single-cycle logical error channel} as
  \begin{equation}
    \label{eq_Lambda_1}
    \Lambda_1 \coloneqq \sum\limits_{\soutput,\sinput} \, \gamma_{\soutput} \, \widetilde{\Lambda}_{\soutput,\sinput} \, \gamma_{\sinput} \, .~~
  \end{equation}
  Then, the \emph{single-cycle logical entanglement fidelity}
  $f_1$ is the entanglement fidelity of 
  $\Lambda_1$.
  \label{defn_log_fid}
\end{defn}

A standard formula for the entanglement fidelity is
\begin{equation}
  \label{eq_f1}
  f_1 \coloneqq \frac{1}{D^3} \sum\limits_{P \in \PauliSet*{L}} \trace \left( P \, \Lambda_1 ( P )   \right)\, , ~~
\end{equation}
where $D = 2^{\nlog}$ is the dimension of the logical Hilbert space~\cite{Nielsen_2002}.
Furthermore, because $\Lambda_1$ is a Pauli channel, 
$f_1$ is equal to the probability of the
identity superoperator in the mixture of Pauli superoperators that makes up $\Lambda_1$.
As noted in Sec.~\ref{sec_frame_change}, if all errors are successfully corrected, 
the logical error channels $\widetilde{\Lambda}^{\,}_{\soutput,\sinput}$~\eqref{eq_cor_frame_qec_chan} are each the identity and thus $f_1 = 1$. 
Moreover, $f_1$ can be interpreted as the probability of no logical error occuring, averaged over input syndrome bitstrings drawn from the syndrome marginal distribution.
Thus, we believe that it is reasonable to require that a ``high quality'' QEC cycle has $f_1 \geq 1- 1/64$, as we do in Theorem~\ref{thm_main}.
The particular definition of $\Lambda_1$ is also motivated by the fact that this 
quantity is experimentally accessible in certain settings, as we describe in App.~\ref{sec_discuss_defn_ss_lf}. 
We also describe in App.~\ref{sec_discuss_defn_ss_lf} how, when the criteria of Theorem~\ref{thm_main} hold, 
the Pauli eigenvalues of $\Lambda_1$ can be interpreted as a first-order approximation of the Pauli eigenvalues of the superoperator $\Phi_{\text{QEC}}$ that 
partly defines the approximate logical Markovian model $\mathcal{M}$ (see Def.~\ref{def_markov_approx}).

\subsection{Main results}
\label{main_thm_statement}

\begin{figure}[h]
  \centering
  \includegraphics[width=0.8\textwidth]{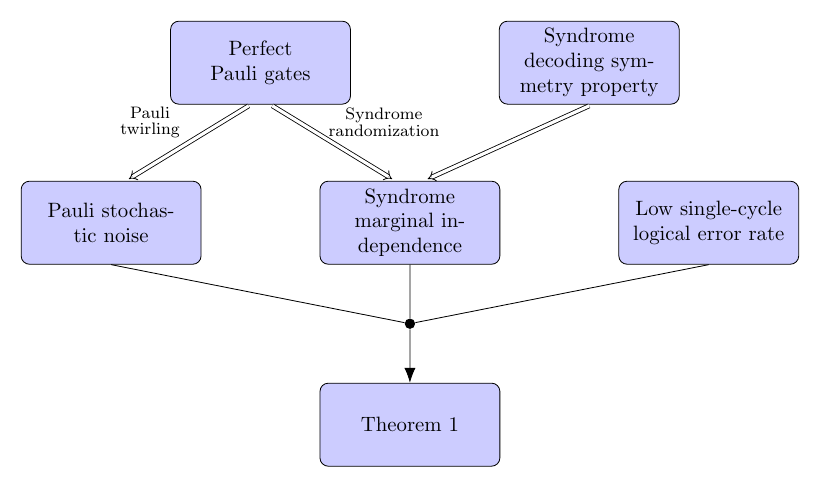} 
  \caption{Diagram showing the key assumptions of Theorem~\ref{thm_main} and how they may be enforced in practice.
  Each of the three blocks in the middle row corresponds to a requirement of Theorem~\ref{thm_main}, 
  as indicated by the regular arrows that jointly point to Theorem~\ref{thm_main}.
  Further details about Pauli-stochastic noise appear in Sec.~\ref{sec_noisy_qec}; 
  further details of the \smip{} appear in Sec.~\ref{sec_smip}; 
  and further details of the notion of ``low single-cycle logical error rate'' appear in Sec.~\ref{sec_high_quality}.
  The blocks in the upper row are assumptions that can be used to enforce the requirements in the middle row, 
  with implications indicated by double arrows. 
  Concretely, perfect Pauli gates can be used to enforce Pauli-stochastic noise via Pauli twirling (see Sec.~\ref{sec_noisy_qec}), 
  and can also be used to enforce the \smip{} via syndrome randomization (see Sec.~\ref{sec_smip}).
  Furthermore, the \ssp{} can also be used independently to enforce the \smip{} if needed (see Sec.~\ref{sec_smip}).}
  \label{fig_assumptions}
\end{figure}

\begin{mainthm}
  Consider a fixed QEC cycle as defined in Sec.~\ref{sec_prelim_noiseless} 
  with Pauli-stochastic noise as defined in Secs.~\ref{sec_Pauli_stochastic}--\ref{sec_Pauli_noise_meas}. 
  Suppose that the QEC cycle has the \smip{} as defined in Sec.~\ref{sec_smip} and that the single-cycle logical entanglement
  fidelity (see Def.~\ref{defn_log_fid}) satisfies $f_1 \geq 1 - \epsilon_1$ for 
  some $\epsilon_1$ satisfying $0 < \epsilon_1 \leq 1/64$. 
  Let $\mathsf{Exp}$ be a set of experimental settings (see Def.~\ref{defn_exp_settings}).
  Define $\epsilon = 2(1+\sqrt{2})\sqrt{\epsilon_1}$, 
  $G' = 1 + 1/(1-\epsilon/(1-2\sqrt{\epsilon_1})) < 7$, 
  and $G = G'\sqrt{D}$, where $D = \operatorname{dim}(\hilbert_L) = 2^{\nlog}$ is the dimension of $L$.
  Then there exists a $(G,\epsilon)$-approximate logical Markovian model $\mathcal{M}$ 
  for 
  $\mathsf{Exp}$ (see Def.~\ref{def_markov_approx}). 
  \label{thm_main}
\end{mainthm}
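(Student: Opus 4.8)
The plan is to reduce the theorem to a statement about the Pauli eigenvalues of the total logical error channel and then analyze those eigenvalues by a transfer-matrix method. By Prop.~\ref{prop_exp_decay_approx_model}, it suffices to exhibit, for each logical Pauli $P \in \PauliSet*{L}$, a setting-independent number $\chi_P$ and setting-dependent real factors $C^{(\mathrm{prep})}_{p,P}, C^{(\mathrm{meas})}_{m,P}$ with $\lvert\lambda^{(\ncyc)}_{p,m,P} - C^{(\mathrm{prep})}_{p,P} C^{(\mathrm{meas})}_{m,P}\chi_P^{\ncyc}\rvert \leq G'\epsilon^{\ncyc}$ for the stated $\epsilon, G'$. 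Working in the correction frame of Eq.~\ref{eq_qec_cor_frame_total} and using that each $\widetilde{\Lambda}_{\soutput,\sinput}$, $\widetilde{\Lambda}^{(\mathrm{prep})}_{s_0}$, and $\widetilde{\Lambda}^{(\mathrm{meas})}_{s_\ncyc}$ is a Pauli channel, every superoperator in the composition acts diagonally on $P$, so the $\ncyc$-cycle eigenvalue collapses to a sum over syndrome trajectories that is exactly a vector--matrix--vector product $\lambda^{(\ncyc)}_{p,m,P} = u_P^{\mathsf{T}} T_P^{\ncyc} v_P$. Here $T_P$ is the syndrome transfer matrix with entries $[T_P]_{\soutput,\sinput} = \gamma_{\soutput}\,\widetilde{\lambda}_{\soutput,\sinput,P}$, with $\widetilde{\lambda}_{\soutput,\sinput,P}$ the Pauli eigenvalue of $\widetilde{\Lambda}_{\soutput,\sinput}$, while $v_P$ and $u_P$ carry the preparation and measurement eigenvalues, suitably weighted by $\gamma$.

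The second step is to symmetrize. Conjugating by $\mathrm{diag}(\gamma)^{1/2}$ sends $T_P$ to $\hat{T}_P = \mathrm{diag}(\gamma)^{1/2} A_P\, \mathrm{diag}(\gamma)^{1/2}$ with $[A_P]_{\soutput,\sinput} = \widetilde{\lambda}_{\soutput,\sinput,P}$, and maps the $P$-independent rank-one piece to the orthogonal projector $\hat{\Pi} = \sqrt{\gamma}\,\sqrt{\gamma}^{\mathsf{T}}$ onto $\mathrm{span}(\sqrt{\gamma})$. Writing $\hat{T}_P = \hat{\Pi} - \hat{E}_P$ with $[\hat{E}_P]_{\soutput,\sinput} = \sqrt{\gamma_\soutput}\,(1-\widetilde{\lambda}_{\soutput,\sinput,P})\,\sqrt{\gamma_\sinput}\geq 0$, I would control $\hat{E}_P$ through $f_1$. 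Expanding $\widetilde{\lambda}_{\soutput,\sinput,P}$ in the Pauli error probabilities of $\widetilde{\Lambda}_{\soutput,\sinput}$ shows that $1-\widetilde{\lambda}_{\soutput,\sinput,P}$ is twice the weight of Paulis anticommuting with $P$, which gives both $0\leq 1-\widetilde{\lambda}_{\soutput,\sinput,P}\leq 2$ and, after averaging over $\soutput,\sinput\sim\gamma$, the single-cycle identity $\sum_{\soutput,\sinput}\gamma_\soutput\gamma_\sinput(1-\widetilde{\lambda}_{\soutput,\sinput,P}) = 1-[\Lambda_1]_P \leq 2(1-f_1)\leq 2\epsilon_1$, where $[\Lambda_1]_P$ is the Pauli eigenvalue of the single-cycle channel $\Lambda_1$ of Def.~\ref{defn_log_fid}. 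Using $1-\widetilde{\lambda}\leq 2$ to bound the square by twice the value then yields $\|\hat{E}_P\|\leq\|\hat{E}_P\|_{\mathrm{F}}\leq 2\sqrt{\epsilon_1}$, so $\hat{T}_P$ is a perturbation of the orthogonal rank-one projector $\hat{\Pi}$ of size $2\sqrt{\epsilon_1}$.

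The third step is the spectral analysis. Because $\hat{\Pi}$ is symmetric with simple eigenvalue $1$ and remaining spectrum $\{0\}$, and $\|\hat{E}_P\|\leq 2\sqrt{\epsilon_1}<1/2$ when $\epsilon_1\leq 1/64$, a Bauer--Fike/perturbation argument gives a unique simple real dominant eigenvalue $\chi_P$ with $|\chi_P|\geq 1-2\sqrt{\epsilon_1}$, the rest of the spectrum lying in a disk of radius $2\sqrt{\epsilon_1}$. I take $\chi_P$ to be this eigenvalue (depending only on the cycle and $P$) and form its rank-one spectral projector $\hat{\Pi}_P = \hat{r}_P \hat{l}_P^{\mathsf{T}}/(\hat{l}_P^{\mathsf{T}}\hat{r}_P)$; the factorization $u_P^{\mathsf{T}}\hat{\Pi}_P v_P = (u_P^{\mathsf{T}}\hat{r}_P)(\hat{l}_P^{\mathsf{T}} v_P)/(\hat{l}_P^{\mathsf{T}}\hat{r}_P)$ defines $C^{(\mathrm{meas})}_{m,P}$ through $u_P$ and $C^{(\mathrm{prep})}_{p,P}$ through $v_P$, with $\hat{r}_P,\hat{l}_P$ cycle-dependent only. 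Splitting $\hat{T}_P^{\ncyc} = \chi_P^{\ncyc}\hat{\Pi}_P + \hat{S}_P^{\ncyc}$ separates $\lambda^{(\ncyc)}_{p,m,P}$ into the desired main term plus a remainder $u_P^{\mathsf{T}}\hat{S}_P^{\ncyc}v_P$. Using $\|\hat{u}_P\|,\|\hat{v}_P\|\leq 1$ (from $\sum_s\gamma_s=1$ and $|\widetilde{\lambda}|\leq 1$) and representing $\hat{S}_P^{\ncyc}$ by a Dunford contour integral over $|z|=\epsilon$ enclosing the subdominant spectrum (legitimate since $2\sqrt{\epsilon_1}<\epsilon<1-2\sqrt{\epsilon_1}$ for $\epsilon_1\leq 1/64$), a resolvent bound on that contour gives a geometric factor governed by the ratio $\epsilon/(1-2\sqrt{\epsilon_1})$ of contour radius to dominant-eigenvalue lower bound, producing $G' = 1+1/(1-\epsilon/(1-2\sqrt{\epsilon_1}))<7$. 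Prop.~\ref{prop_exp_decay_approx_model} with $G=G'\sqrt{D}$ then concludes.

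The main obstacle I anticipate is the non-normality of $\hat{E}_P$, equivalently of $\hat{S}_P$: locating the subdominant spectrum via Bauer--Fike is easy, but bounding $\|\hat{S}_P^{\ncyc}\|$ uniformly in $\ncyc$ does not follow from the spectral radius and requires resolvent estimates on a contour strictly separating the subdominant disk of radius $2\sqrt{\epsilon_1}$ from the dominant eigenvalue at $|\chi_P|\geq 1-2\sqrt{\epsilon_1}$. Making this separation quantitative is exactly what forces the choice $\epsilon = 2(1+\sqrt{2})\sqrt{\epsilon_1}$ and the threshold $\epsilon_1\leq 1/64$, and the careful bookkeeping of these constants, together with verifying that $\chi_P$ is simple and real and that $C^{(\mathrm{prep})}_{p,P},C^{(\mathrm{meas})}_{m,P}$ are real and correctly factorized across settings, is where most of the effort lies.
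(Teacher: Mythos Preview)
Your proposal is essentially correct and tracks the paper's proof closely through the transfer-matrix setup, the symmetrization $\hat T_P=\sqrt\Sigma F_P\sqrt\Sigma$, the rank-one projector $\ketbra{\sqrt\gamma}$, the Frobenius-norm bound $\|\hat E_P\|_{\mathrm{op}}\le 2\sqrt{\epsilon_1}$ via $f_1$, and the Bauer--Fike isolation of the dominant eigenvalue. The divergence is in how you control $\hat S_P^{\ncyc}$.

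The paper does \emph{not} use a spectral projector and a Dunford contour integral. Instead it re-decomposes around the \emph{right} eigenvector only: Lemma~22 shows that $\hat T_P=\chi_P\ketbra{\lambda_P}+E'$ with $\ketbra{\lambda_P}$ the \emph{orthogonal} (not spectral) projector onto $\ket{\lambda_P}$, $E'\ket{\lambda_P}=0$, and $\|E'\|_{\mathrm{op}}\le(1+\sqrt2)\cdot 2\sqrt{\epsilon_1}=\epsilon$. Because $E'\ket{\lambda_P}=0$, the binomial expansion of $\hat T_P^{\ncyc}$ collapses to the one-sided sum $\sum_{k=0}^{\ncyc-1}\chi_P^{\ncyc-k}\ketbra{\lambda_P}E'^{\,k}+E'^{\,\ncyc}$, which Lemma~23 sums as a geometric series; the bra $\bra{f}=\bra{\lambda_P}\sum_{k\ge0}(E'/\chi_P)^k$ plays the role your left eigenvector $\hat l_P$ would. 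This is where the constant $(1+\sqrt2)$ actually arises: it is the eigen\emph{vector} perturbation bound in Lemma~22, not a contour-placement choice. The geometric-series constant $1+1/(1-\epsilon/\chi_P)$ then follows immediately from $\|E'\|\le\epsilon$ and $\chi_P\ge1-2\sqrt{\epsilon_1}$, with no resolvent estimate needed.

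Your Dunford route is a legitimate alternative, but the claim that it ``produces $G'=1+1/(1-\epsilon/(1-2\sqrt{\epsilon_1}))$'' with $\epsilon=2(1+\sqrt2)\sqrt{\epsilon_1}$ is not substantiated: bounding $\|(zI-\hat T_P)^{-1}\|$ on a circle for a non-normal perturbation of $\hat\Pi$ does not obviously yield those exact constants, and you would need to carry the Neumann series for $(zI-\hat\Pi+\hat E_P)^{-1}$ carefully. The paper's orthogonal-projector trick sidesteps this entirely and is what you should use if you want to match the stated $G'$ and $\epsilon$ verbatim.
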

The proof of Theorem~\ref{thm_main} is presented in App.~\ref{sec_proof_main_thm} and is constructive, 
in that it provides a method for computing the $(G,\epsilon)$-approximate logical Markovian model $\mathcal{M}$ 
based on the specifics of the QEC cycle and the noise model under consideration. 
In particular, $\mathcal{M}$ is constructed by approximating the Pauli eigenvalues of the total logical error channel
using Prop.~\ref{prop_exp_decay_approx_model} and Lemma~\ref{lem_vector_geo_sum}.

The various means of fulfilling the assumptions entering Theorem~\ref{thm_main} are summarized 
in Fig.~\ref{fig_assumptions}. We expect these requirements to be compatible with a broad range of
experiments involving QEC cycles. We note that the requirement of a low 
single-cycle logical error rate $\epsilon_1 \leq 1/64$ is included to ensure that the perturbing matrix
in Eq.~\ref{eq_perturb_norm_small} of the proof of Theorem~\ref{thm_main} has sufficiently small 
operator norm for all logical Paulis $P$. However, if the operator norm $\epsilon'$ of this 
perturbing matrix is known \emph{a priori} (e.g., from a theoretical calculation starting 
from a known noise model), then one can also replace the requirement $\epsilon_1 \leq 1/64$ 
in Theorem~\ref{thm_main} with the requirement that $\epsilon' \leq 1/4$ for all logical Paulis $P$. 
In this case, the $\epsilon$ that appears Theorem~\ref{thm_main} is replaced by 
$\epsilon = (1+\sqrt{2})\epsilon'$.

\section{Discussion and outlook}
\label{sec_discussion}

To summarize, in Theorem~\ref{thm_main} we proved that, under broad circumstances, 
there exists an approximate logical Markovian model describing repeated QEC cycles of a stabilizer code.
Thus, when the assumptions of Theorem~\ref{thm_main} are met, the
logical  behavior of repeated QEC cycles can be approximated  
by an error superoperator $\Phi_\text{QEC}$ acting solely on the logical subsystem.
The deviations of the behavior predicted by $\Phi_{\text{QEC}}$ from that predicted by the 
true logical error model $\Lambda^{(\ncyc)}_{p,m}$~\eqref{eq_lambda_total}
are exponentially suppressed in the number of QEC cycles $\ncyc$.
Furthermore, $\Phi_\text{QEC}$ is Pauli diagonal (see Sec.~\ref{sec_Pauli_eigs}), 
and can thus be inferred using standard Pauli-learning techniques~\cite{Pauli_learn}.
Altogether, this simplifies the analysis and interpretation of repeated QEC cycles, 
and provides an optimization metric for the design of QEC cycles.

As a concrete example,  consider a QEC memory experiment in which logical information is encoded into a stabilizer code 
(see Sec.~\ref{sec_Pauli_noise_state_prep}), maintained over $\ncyc$ QEC cycles (see Sec.~\ref{sec_noisy_qec_cycles}), 
and read out from a final logical measurement (see Sec.~\ref{sec_Pauli_noise_meas}).
Our results imply that, under broad circumstances, such a QEC memory experiment can be modeled by a 
logical Markovian model $\mathcal{M}$ for which $\Phi_\text{QEC}$ is Pauli diagonal. 
Consequently, when the final logical measurement is of the expectation of a logical Pauli operator, 
then the QEC memory experiment has an exponentially decaying probability of success as a function 
of $\ncyc$, to extremely good approximation.
This justifies the common technique of fitting exponential decays to the results of QEC memory experiments.
Furthermore, the rates of these exponential decays---which depend on the Pauli being measured---can be used as 
performance metrics to inform both the characterization of QEC memory experiments and the design of QEC cycles.

As a more nuanced
example, consider a logical Clifford randomized-benchmarking experiment, 
where a noisy QEC cycle is performed after each logical gate.
Assuming that the noise associated with the logical Cliffords is gate independent, 
so that the corresponding error channels can be 
absorbed into the errors associated with the subsequent QEC cycle.
The effect of the logical Clifford gates is to depolarize the logical error channels $\widetilde{\Lambda}_{\soutput,\sinput}$. 
Thus, our results apply as long as the requirements of Theorem~\ref{thm_main} are met.
Consequently, under broad circumstances, we expect such logical Clifford randomized-benchmarking experiments 
to have exponentially decaying success probabilities, with a decay rate determined by a depolarized version 
of the logical superoperator $\Phi_{\text{QEC}}$ that enters the approximate logical Markovian model $\mathcal{M}$.
We anticipate that future work will formalize this application and extend 
 it to the more realistic
 case of gate-dependent errors.

In addition to the direct application of our work to characterizing logical performance and optimizing the design of QEC cycles, we see several interesting 
avenues of future research based on our results. 
One important direction is to relax the assumptions entering Theorem~\ref{thm_main}. Of particular interest are experiments in which the QEC cycles do not have identical noise models, and where the noise is not Pauli stochastic. 
Another appealing direction is to derive similar results beyond the context of stabilizer QEC with qubits, including subsystem codes and Floquet codes with qubits, stabilizer codes with qudits, and bosonic codes. 
Another promising direction is to adapt our methods to QEC scenarios beyond memory experiments, including logical Clifford randomized-benchmarking experiments with gate-dependent errors and the characterization of logical computations involving magic gates. 
Each of these extensions presents its own challenges, but our results provide a first step toward achieving these goals.

\begin{acknowledgments}
  A.J.F. acknowledges support from the Professional Research Experience Program (PREP) operated jointly by NIST and the University of Colorado. This work includes contributions of the National Institute of
  Standards and Technology, which are not subject to U.S. copyright. 
  This research is based upon work supported in part by the Office of the Director of National Intelligence (ODNI), Intelligence Advanced Research Projects Activity (IARPA), specifically the ELQ program. The views and conclusions contained herein are those of the authors and should not be interpreted as necessarily representing the official policies or endorsements, either expressed or implied, of the ODNI, IARPA, or the U.S. Government. The U.S. Government is authorized to reproduce and distribute reprints for Governmental purposes notwithstanding any copyright annotation thereon.
  The authors thank Michael Gullans, Srilekha Gandhari, Robin Blume-Kohout, Kenneth Rudinger, Kevin Young, Mohammad Alhejji, and Ariel Shlosberg for helpful comments and discussions.
\end{acknowledgments}

\appendix

\renewcommand{\thesubsection}{\thesection.\arabic{subsection}}
\renewcommand{\thesubsubsection}{\thesubsection.\arabic{subsubsection}}
\renewcommand{\theequation}{\thesection.\arabic{equation}}

\section{Updates to the density matrix}
\label{app_derivations}

Here we derive the updates to the density matrix of the $L$, $S$, $A$, and $O$ systems associated with state preparation, QEC cycles, and logical measurement.
In particular, we derive analytic expressions for various probabilities, channels, and transition matrices that appear in Sec.~\ref{sec_noisy_qec}. 
In defining these quantities, it is convenient to define the following indicator function:
\begin{equation}
    \label{eq:Pauli bitstring indicator}
    \mathcal{I}^{\bvec{s}}_P \coloneqq \prod_{j=1}^{\abs{\bvec{s}}} 
    \left( \kron{s_j,0} \kron{\Pauli*{j},\ident} + \kron{s_j,0} \kron{\Pauli*{j},Z} + \kron{s_j,1} \kron{\Pauli*{j},X} + \kron{s_j,1} \kron{\Pauli*{j},Y} \right) \, .~~
\end{equation}
where $s_j$ is the $j$th digit of the bitstring $s$ with length $\abs{s}$ 
and $\Pauli*{j}$ is the single-qubit Pauli operator located in the $j$th tensor factor of the Pauli operator $P$.
For a fixed $P$ and $s$,  $\mathcal{I}^{\bvec{s}}_P$ is either zero or one, and for a fixed $P$ there is exactly one bitstring $s$ such that $\mathcal{I}^{\bvec{s}}_P$ is equal to one.
The relationship to the action of a Pauli $P$ on computational basis states is as follows
\begin{equation}
  P \ZProj{s} P = \sum_{s'} \ZProj{s+s'} \, \mathcal{I}^{s'}_P \, .~~
\end{equation}

\subsection{Noisy state preparation}
\label{app_noisy_state_prep}

State preparation with Pauli-stochastic noise results in a state $\rho^{\,}_0$~\eqref{eq_noisy_prep} of the form
\begin{equation}
  \label{eq_noisy_prep_explicit}
  \rho_0  = \ErrChan_{\text{prep}} ( \rho_{\text{enc}}) = \ErrChan_{\text{prep}} \left( \sigma_L \otimes \ZProj{s_*} \right) \, ,~~
\end{equation}
where $\rho_{\text{enc}}$ is the ideal state and 
$\ErrChan_{\text{prep}}$ is a Pauli channel~\eqref{eq:general Pauli channel} on 
 $L$ and $S$. 
In particular, 
\begin{equation}
  \label{eq_state_prep_error_chan}
  \ErrChan_{\text{prep}}( \rho_L \otimes \rho_S) = \sum\limits_{\Pauli*{\ell} \in \PauliSet*{L}} \sum\limits_{\Pauli*{s} \in \PauliSet*{S}}  
  \mu^{(\text{prep})}_{\ell,s} \, 
  \PauliChan{\ell} (\rho_L) \otimes \PauliChan{s} (\rho_S) \, ,~~
\end{equation}
where $\mu^{(\text{prep})}_{\ell,s}$ are probabilities and $\rho_{L}$ and $\rho_S$ are arbitrary states on $L$ and $S$, respectively. 

Given $\ErrChan_{\text{prep}}$ as defined in Eq.~\ref{eq_state_prep_error_chan}, we derive Eq.~\ref{eq_noisy_prep} using
\begin{subequations}
  \label{eq_state_prep_probs}
  \begin{align}
    \gamma_{s_0}^{(\text{prep})} &= \sum\limits_{\Pauli*{\ell} \in \PauliSet*{L}} \sum\limits_{\Pauli*{s} \in \PauliSet*{S}} 
    \mu^{(\text{prep})}_{\ell,s} \, \mathcal{I}^{s_0 + s_*}_{\Pauli*{s}}  \\
    \Lambda^{(\text{prep})}_{s_0} (\rho_L) &= \frac{1}{ \gamma_{s_0}^{(\text{prep})}} 
    \sum\limits_{\Pauli*{\ell} \in \PauliSet*{L}} \sum\limits_{\Pauli*{s} \in \PauliSet*{S}} 
    \mu^{(\text{prep})}_{\ell,s} \, \mathcal{I}^{s_0 + s_*}_{\Pauli*{s}} \PauliChan{\ell} (\rho_L)  \label{eq_state_prep_gamma}
    \, , ~~
  \end{align}
\end{subequations}
when $\gamma_{s_0}^{(\text{prep})} \neq 0$, and $\Lambda^{(\text{prep})}_{s_0} (\rho_L) = 0$ otherwise. 
The indicator $\mathcal{I}$ is defined in Eq.~\ref{eq:Pauli bitstring indicator}.


\subsection{Noisy syndrome extraction}
\label{app_QEC_update_C}

Step~\ref{qec_step_C} of each QEC cycle is the noisy application of the unitary $C$~\eqref{circ_cliff_readout}, 
which measures the syndrome subsystem $S$ and records the outcomes in $O$. Consider the input state
\begin{equation*}
\tag{\ref{eq:QEC input state}}
  \rho_{\text{in}} =\rho_L \otimes \ZProj{\sinput} \, , ~~
\end{equation*}
where general input states 
are convex sums of pure states of the form above. 
We then introduce the registers $A$ and $O$, 
so that the state on a $L$, $S$, $A$, and $O$~\eqref{circ_qec_step_perfect} is
\begin{equation}
  \varrho_{\text{in}} = \rho_L \otimes \ZProj{\sinput} \otimes \ZProj{a} \otimes \ZProj{0} \, , ~~
\end{equation}
where $\ZProj{a}$ and $\ZProj{0}$ are computational-basis product states on $A$ and $O$, respectively. 

We then consider the noisy implementation of $C$, which we denote by $C'$. 
The ideal implementation of $C$ is captured by Diagram~\ref{circ_cliff_readout}, and acts as
\begin{equation}
    \label{eq_c_unitary_ideal}
    C \left[ \rho_L \otimes \ZProj{s_{\text{in}}} \otimes \ZProj{a} \otimes \ZProj{0} \right] C^{\dagger} 
    = \rho_L \otimes \ZProj{s_{\text{in}}} \otimes \ZProj{b} \otimes \ZProj{E( s_{\text{in}}) }\, ,~~
\end{equation}
and, as defined in Sec.~\ref{sec_Pauli_stochastic}, the assumption of Pauli-stochastic noise means that $C'$ can be modeled as
the ideal unitary $C$~\eqref{eq_c_unitary_ideal} followed by a generic Pauli channel $\ErrChan$~\eqref{eq:general Pauli channel}, i.e.,
\begin{equation}
    C' (\varrho_{\text{in}}) \coloneqq 
    \ErrChan ( C \varrho_{\text{in}} C^\dagger )  
    = \sum\limits_{P  \in \PauliSet{\text{all}}} \mu^{(\ErrChan)}_P \, P  C \varrho_{\text{in}} C^{\dagger}  P \, , ~\label{eq:noisy C}
\end{equation}
where $\PauliSet*{\text{all}}$ is the set of Pauli operators~\eqref{eq:Pauli set} on $L$, $S$, $A$, and $O$, 
$\mu^{(\ErrChan)}_P$ encodes a probability distribution, 
and at this point, $\ErrChan$ may differ between QEC cycles.

Following the application of $C'$~\eqref{eq:noisy C}, the state of the full system is given by
\begin{align}
  \varrho &= 
  \sum\limits_{\Pauli*{\ell'} \in \PauliSet*{L}} 
  \sum\limits_{\Pauli*{s} \in \PauliSet*{S}} 
  \sum\limits_{\Pauli*{a} \in \PauliSet*{A}} 
  \sum\limits_{\Pauli*{o} \in \PauliSet*{O}} 
  \mu^{(\ErrChan)}_{\Pauli*{\ell'} \otimes \Pauli*{s} \otimes \Pauli*{a} \otimes \Pauli*{o}} 
  \notag \\
  &\quad ~ \quad 
  \PauliChan{\ell'} (\rho_L ) 
  \otimes \PauliChan{s} ( \ZProj{\sinput} )
  \otimes \PauliChan{a} (\ZProj{b} ) 
  \otimes \PauliChan{o} (\ZProj{E(\sinput)} ) 
  \label{eq:full state after noisy C} \, .~~
\end{align}
We simplify this expression by tracing out 
(i.e., discarding) the ancilla register $A$ and replacing the Pauli superoperators on $S$ and $O$ with XOR of bitstrings~\eqref{eq:bit flips}, resulting in
\begin{align}
    \varrho &= 
    \sum\limits_{\Pauli*{\ell'} \in \PauliSet*{L}} 
    \sum\limits_{\Pauli*{s} \in \PauliSet*{S}} 
    \sum\limits_{\Pauli*{a} \in \PauliSet*{A}} 
    \sum\limits_{\Pauli*{o} \in \PauliSet*{O}} 
    \mu^{(\ErrChan)}_{\Pauli*{\ell'} \otimes \Pauli*{s} \otimes \Pauli*{a} \otimes \Pauli*{o}} 
    \sum\limits_{s'} 
    \mathcal{I}^{s'}_{\Pauli*{s}} \sum\limits_{\theta'} \mathcal{I}^{\theta'}_{\Pauli*{o}} \notag \\
    &\quad ~ \quad \PauliChan{\ell'} (\rho_L) 
    \otimes \ZProj{\sinput + s'} 
    \otimes \ZProj{E(\sinput) + \theta'} 
    \label{eq:reduced state after noisy C} \, ,
\end{align}
where $s' \in \Ints_2^{\abs{S}}$ runs over bitstrings on $S$ and 
$\theta' \in \Ints_2^{\abs{O}}$ runs over bitstrings on $O$. 

We next write this state in a more convenient form by defining 
a probability $\overline{p}^{\,}_{\ell', s', \theta'}$ associated with 
the logical Pauli channel $\PauliChan*{\ell'}$ and bit flips on the registers $S$ and $O$,
encoded via the bitstrings $s'$ and $\theta'$, respectively. 
The resulting state on $L$, $S$, and $O$ is given by
\begin{equation*}
    \varrho = \sum\limits_{\ell', s',\theta'} ~ \overline{p}^{\vpp}_{\ell', s', \theta'} ~ \PauliChan{\ell'} (\rho_L) 
    \otimes \ZProj{\sinput + s'} 
    \otimes \ZProj{E(\sinput) + \theta'} 
    \tag{\ref{eq_after_errchan}} \, ,~~
\end{equation*}
where $\overline{p}^{\,}_{\ell', s', \theta'}$ is defined in terms of 
the probabilities $\mu^{(\ErrChan)}_P$ that define
 $\ErrChan$~\eqref{eq:noisy C} via
\begin{align}
  \label{eq:QEC intermediate probabilities}
    \overline{p}^{\vpp}_{\ell', s', \theta'} &\coloneqq 
    \sum\limits_{\Pauli*{s} \in \PauliSet*{S}} 
    \sum\limits_{\Pauli*{a} \in \PauliSet*{A}} 
    \sum\limits_{\Pauli*{o} \in \PauliSet*{O}} 
    \mu^{(\ErrChan)}_{\Pauli*{\ell'} \otimes \Pauli*{s} \otimes \Pauli*{a} \otimes \Pauli*{o}} 
    \mathcal{I}^{s'}_{\Pauli*{s}} \mathcal{I}^{\theta'}_{\Pauli*{o}} \, . ~~
\end{align}

\subsection{Reset and error-correction operations}
\label{app_QEC_update_2}

Following noisy syndrome information readout (Step~\ref{qec_step_C} of the QEC cycle), the joint state on $L$, $S$, and $O$ 
is given by Eq.~\ref{eq_after_errchan}. Next, the error syndrome $\serr$~\eqref{eq_s_err} is 
recovered from $O$ via classical side processing (Step~\ref{qec_step_proc} of the QEC cycle), which we assume to be perfect.
Due to noisy syndrome readout, the inferred state of $S$ is given by
\begin{equation}
  \label{eq_s_meas}
  \smeas = D (\theta) = D \big( E (\sinput) + \theta' \big)  \, ~~
\end{equation}
where $\theta$ labels dephased states of $O$ and the error syndrome is $\serr = \smeas + s_*$~\eqref{eq_s_err}.

One next attempts to reset the syndrome subsystem $S$ to the default configuration $s_*$ (Step~\ref{qec_step_reset} of the QEC cycle)
using the error syndrome $\serr$~\eqref{eq_s_err}. 
In particular, one applies the Pauli $X$ to the $j$th qubit in $S$ in the Clifford frame~\eqref{eq:CliffordFrame} 
if the $j$th digit of $\serr$ is one, and does nothing otherwise. 
The resulting state on $L$, $S$, and $O$ is given by
\begin{align}
  \label{eq_rho_after_S_reset}
  \varrho &= \sum\limits_{\ell', s',\theta'} ~ \overline{p}^{\vpp}_{\ell',s',\theta'} ~ \PauliChan{\ell'} (\rho_L) 
    \otimes \ZProj{\sinput + s' + \serr} 
    \otimes \ZProj{\theta}   \, ,~~
\end{align}
where $\serr$ and $\theta$ both depend on $\sinput$ and $\theta'$. 
Crucially, in the absence of noise---i.e., when  $s',\theta'=0$---the state on $S$ 
after Step~\ref{qec_step_reset} is simply $\ZProj{s_*}$, as required. 

Finally, we apply the logical error-correction operation EC
(Step~\ref{qec_step_EC} of the QEC cycle
). Since the register $O$ is traced out and reset at the end of the QEC cycle~\eqref{circ_qec_step_perfect}, we now discard it. 
The final joint state of the subsystems $L$ and $S$ is given by
\begin{align*}
    \rho_{\text{out}} &=  \sum_{\ell', s', \theta'} 
    \overline{p}^{\vpp}_{\ell', s', \theta'} ~
    \Cor_{\serr} \circ \PauliChan{\ell'} (\rho_L) 
    \otimes 
    \ZProj{\sinput + s' + \serr} 
     \tag{\ref{eq:rho out naive}} \, ,~~
\end{align*}
where $s_{\text{err}}$ depends on $\sinput$ and $\theta'$ and the logical channel $\Cor_{\serr}$ is defined in Eq.~\ref{eq:EC step}.

For convenience, we next define the output bitstring on $S$, given by
\begin{align}
  \label{eq_s_out}
    \soutput =  \sinput + s' + \serr = \sinput + s' + s_* + D \big ( E (\sinput) + \theta' \big)  \, ,~~
\end{align}
which we use to simplify Eq.~\ref{eq:rho out naive} by replacing $s'$ with 
$\soutput$~\eqref{eq_s_out} as a sum variable, giving
\begin{align}
   \rho_{\text{out}} &= \sum\limits_{\Pauli*{\ell'} \in \PauliSet*{L}} \sum\limits_{\Pauli*{s} \in \PauliSet*{S}} 
  \sum\limits_{\Pauli*{a} \in \PauliSet*{A}} \sum\limits_{\Pauli*{o} \in \PauliSet*{O}} 
  \mu^{(\ErrChan)}_{\Pauli*{\ell'} \otimes \Pauli*{s} \otimes \Pauli*{a} \otimes \Pauli*{o}} \sum\limits_{\soutput,\theta'} 
  \mathcal{I}^{\sinput + \soutput + s_* + D\left( E(\sinput)+ \theta'\right)}_{\Pauli*{s}} \, 
  \mathcal{I}^{\theta'}_{\Pauli*{0}} 
  \notag \\
  &~~\quad ~~~~~~\Cor_{E(\sinput)+\theta'} \circ \PauliChan*{\ell'} (\rho_L) \otimes \ZProj{\soutput} \, , ~~
  \label{eq:rho out nicer}
\end{align}
where we used the fact that $\serr = s_* + D\left( E(\sinput)+\theta' \right)$ to write $\Cor^{\vpp}_{\serr}$ as 
$\Cor^{\vpp}_{E(\sinput)+\theta'}$.

Finally, we combine the two logical Pauli channels into one, so that
\begin{align*}
  \rho_{\text{out}} &= \sum\limits_{\ell,\soutput} p^{\vpp}_{\ell,\soutput} \PauliChan{\ell} (\rho_L) \otimes \ZProj{\soutput} \, ,~~
  \tag{\ref{eq:rho out nice}}
\end{align*}
where we have implicitly defind the probability
\begin{align}
  p^{\vpp}_{\ell,\soutput} &\coloneqq \sum\limits_{\Pauli*{\ell'} \in \PauliSet*{L}} \sum\limits_{\Pauli*{s} \in \PauliSet*{S}} 
  \sum\limits_{\Pauli*{a} \in \PauliSet*{A}} \sum\limits_{\Pauli*{o} \in \PauliSet*{O}} 
  \mu^{(\ErrChan)}_{\Pauli*{\ell'} \otimes \Pauli*{s} \otimes \Pauli*{a} \otimes \Pauli*{o}} \notag \\
  &~~~\sum\limits_{\theta \in \Ints_2^{\abs{O}}} \mathcal{I}^{\sinput + \soutput + s_* + D\left( E(\sinput)+ \theta'\right)}_{\Pauli*{s}} \, 
  \mathcal{I}^{\theta'}_{\Pauli*{0}} \, \delta \big( \PauliChan{\ell} , \Cor_{E(\sinput)+\theta'} \circ \PauliChan*{\ell'}) 
  \label{eq:QEC nice probabilities} \, , ~~
\end{align}
where $\delta(\PauliChan{\ell} , \PauliChan{\ell'})$ evaluates to unity 
when the Pauli superoperators $\PauliChan{\ell}$ and $\PauliChan{\ell'}$ are equal,
and vanishes otherwise. The indicator function $\mathcal{I}$ is defined in Eq.~\ref{eq:Pauli bitstring indicator}.

\subsection{Proof of Prop.~\ref{prop_qec_final}.}
\label{app_gamma_and_gamma}

\begin{proof}
  The proof proceeds via induction. First, consider the base case. 
  The input state to the first QEC cycle is the output of noisy state preparation,
  \begin{equation*}
    \tag{\ref{eq_noisy_prep}}
    \rho^{\vpp}_0 = \sum_{s_0}
    \gamma_{s_0}^{(\text{prep})} \,  \Lambda_{s_0}^{(\text{prep})} (\sigma_L) \otimes \ZProj{s_0} \, ,~~
  \end{equation*}
  on the registers $L$ and $S$, and the output state of the first QEC cycle is%
  \begin{align}
    \rho^{\vpp}_{1} &= \sum_{s_0} \gamma_{s_0}^{(\text{prep})} \sum_{\ell,s',\theta'} \sum\limits_{\Pauli*{s},\Pauli*{a},\Pauli*{o}} 
    \mu^{(\ErrChan_1)}_{\Pauli*{\ell} \otimes \Pauli*{s} \otimes \Pauli*{a} \otimes \Pauli*{o}}
    \, \mathcal{I}^{s'}_{\Pauli*{s}} \, \mathcal{I}^{\theta'}_{\Pauli*{o}} \notag \\
    &~~~~~~ \Cor_{E(s_0)+\theta'} \circ \PauliChan{\ell} \circ \Lambda_{s_0}^{(\text{prep})} (\sigma_L) 
    \otimes \ZProj{s_0 + s' + s_* + D( E(s_0)+\theta')}
    \label{eq:QEC1 output full} \, ,~~
  \end{align}
  where we label $\Cor$ as in App.~\ref{app_QEC_update_2}. Also as in App.~\ref{app_QEC_update_2}, 
  we simplify the foregoing expression by defining $s_1$, in analogy to $\soutput$~\eqref{eq_s_out}. 
  The resulting density matrix is
  \begin{align}
    \rho^{\vpp}_{1} &= \sum_{s_0,s_1} \gamma_{s_0}^{(\text{prep})}  \sum\limits_{\Pauli*{\ell},\Pauli*{s},\Pauli*{a},\Pauli*{o}} 
    \mu^{(\ErrChan_1)}_{\Pauli*{\ell} \otimes \Pauli*{s} \otimes \Pauli*{a} \otimes \Pauli*{o}}
    \sum_{\theta'} \mathcal{I}^{s_0 + s_1 + s_* + D\left( E(s_0)+ \theta'\right)}_{\Pauli*{s}} \, 
    \mathcal{I}^{\theta'}_{\Pauli*{0}}  \notag \\
    &~~~~~~~~~ \Cor_{E(s_0)+\theta'} \circ \PauliChan{\ell} \circ \Lambda_{s_0}^{(\text{prep})} (\sigma_L) 
    \otimes \ZProj{s_1}
    \label{eq:QEC output nice} \, .~~
  \end{align}
  Next, we define the Markov process $\gamma^{(1)}_{s_1,s_0}$ on $S$:
  \begin{align}
    \gamma^{(1)}_{s_1,s_0} &\coloneqq \sum\limits_{\Pauli*{\ell},\Pauli*{s},\Pauli*{a},\Pauli*{o}} 
    \mu^{(\ErrChan_1)}_{\Pauli*{\ell} \otimes \Pauli*{s} \otimes \Pauli*{a} \otimes \Pauli*{o}}
    \sum_{\theta'} \mathcal{I}^{s_0 + s_1 + s_* + D\left( E(s_0)+ \theta'\right)}_{\Pauli*{s}} \, 
    \mathcal{I}^{\theta'}_{\Pauli*{0}}  
    \label{eq:QEC1 gamma def} \, ,~~
  \end{align}
  and we also define the following logical channel for all $\rho_L$:
  \begin{align}
    \Lambda^{(1)}_{s_1,s_0} (\rho_L) &\coloneqq \frac{1}{\gamma^{(1)}_{s_1,s_0}} \sum\limits_{\Pauli*{\ell},\Pauli*{s},\Pauli*{a},\Pauli*{o}} \hspace{-1mm}
    \mu^{(\ErrChan_1)}_{\Pauli*{\ell} \otimes \Pauli*{s} \otimes \Pauli*{a} \otimes \Pauli*{o}}
    \sum_{\theta'} \mathcal{I}^{s_0 + s_1 + s_* + D\left( E(s_0)+ \theta'\right)}_{\Pauli*{s}} \, 
    \mathcal{I}^{\theta'}_{\Pauli*{0}} \notag \\
    &~~~~~~~\times \Cor_{E(s_0)+\theta'} \circ \PauliChan{\ell} (\rho_L)
    \label{eq:QEC1 Lambda def} \, ,
  \end{align}
  when $\gamma^{(1)}_{s_1,s_0} \neq 0$, with $\Lambda^{(1)}_{s_1,s_0}=0$ otherwise. With these definitions,
  \begin{align}
    \label{eq:QEC1 output state final}
    \rho_1 = \sum\limits_{s_0,s_1} \gamma^{(1)}_{s_1,s_0} \gamma_{s_0}^{(\text{prep})} ~ \Lambda^{(1)}_{s_1,s_0} \circ \Lambda_{s_0}^{(\text{prep})} (\sigma_L) \otimes \ZProj{s_1} \, ,~~
  \end{align}
  is the state output by the first QEC cycle given the input state $\rho_0$~\eqref{eq_noisy_prep}.

  Next, we show that $\cycind \implies \cycind + 1$. Suppose the state after QEC cycle $\cycind-1$ is
  \begin{align}
    \label{eq:QECk Ansatz output}
    \rho_{\cycind-1} = \sum\limits_{s_0,\dots,s_{\cycind-1}} \gamma^{(\cycind-1)}_{s_{\cycind-1},s_{\cycind-2}} \cdots \gamma_{s_0}^{(\text{prep})} ~ \Lambda^{(\cycind-1)}_{s_{\cycind-1},s_{\cycind-2}} \circ \cdots \circ \Lambda_{s_0}^{(\text{prep})} (\sigma_L) \otimes \ZProj{s_{\cycind-1}} \, ,~~
  \end{align}
  which is the input state to the $\cycind$th QEC cycle. Following the $\cycind$th QEC cycle, the state is
  \begin{align}
    \rho^{\vpp}_{\cycind} &= \sum_{s_0,\dots,s_{\cycind}} \gamma_{s_{\cycind-1},s_{\cycind-2}}^{(\cycind-1)} \cdots \gamma_{s_0}^{(\text{prep})}  
    \sum\limits_{\Pauli*{\ell},\Pauli*{s},\Pauli*{a},\Pauli*{o}} 
    \mu^{(\ErrChan_1)}_{\Pauli*{\ell} \otimes \Pauli*{s} \otimes \Pauli*{a} \otimes \Pauli*{o}}
    \sum_{\theta'} \mathcal{I}^{s_{\cycind-1}+s_{\cycind}+s_*+D\left( E (s_{\cycind-1})+\theta'\right)}_{\Pauli*{s}} \, \mathcal{I}^{\, \theta'}_{\Pauli*{o}} \notag \\
    &~~~~~~~~~ \Cor_{E(s_{\cycind-1})+\theta'} \circ \PauliChan{\ell} \circ \Lambda_{s_{\cycind-1},s_{\cycind-2}}^{(\cycind-1)} \circ \cdots \circ \Lambda_{s_0}^{(\text{prep})} (\sigma_L) 
    \otimes \ZProj{s_1}
    \label{eq:QEC1 output nicer} \, .~~
  \end{align}
  We next define the associated Markov process $\gamma^{(\cycind)}_{s_{\cycind},s_{\cycind-1}}$ on $S$ as 
  \begin{align}
    \gamma^{(\cycind)}_{s_{\cycind},s_{\cycind-1}} &\coloneqq \sum\limits_{\Pauli*{\ell} \in \PauliSet*{L}} 
    \sum\limits_{\Pauli*{s} \in \PauliSet*{S}} \sum\limits_{\Pauli*{a} \in \PauliSet*{A}} \sum\limits_{\Pauli*{o} \in \PauliSet*{O}} 
    \mu^{(\ErrChan_{\cycind})}_{\Pauli*{\ell} \otimes \Pauli*{s} \otimes \Pauli*{a} \otimes \Pauli*{o}} \sum_{\theta'} 
    \mathcal{I}^{s_{\cycind-1}+s_{\cycind}+s_*+D\left( E (s_{\cycind-1})+\theta'\right)}_{\Pauli*{s}} \, \mathcal{I}^{\, \theta'}_{\Pauli*{o}}
    \label{eq:QEC gamma def} \, ,~~
  \end{align}
  and we define the corresponding logical channel $\Lambda^{(\cycind)}_{s_{\cycind},s_{\cycind-1}} (\rho_L)$ for all $\rho_L$ as
  \begin{align}
    \Lambda^{(\cycind)}_{s_{\cycind},s_{\cycind-1}} (\rho_L) &\coloneqq \frac{1}{\gamma^{(\cycind)}_{s_{\cycind},s_{\cycind-1}}} \, 
    \sum\limits_{\Pauli*{\ell} \in \PauliSet*{L}} \sum\limits_{\Pauli*{s} \in \PauliSet*{S}} \sum\limits_{\Pauli*{a} \in \PauliSet*{A}} \sum\limits_{\Pauli*{o} \in \PauliSet*{O}}
    \mu^{(\ErrChan_{\cycind})}_{\Pauli*{\ell} \otimes \Pauli*{s} \otimes \Pauli*{a} \otimes \Pauli*{o}} \notag \\
    &~~~~~\sum_{\theta'} \mathcal{I}^{s_{\cycind-1}+s_{\cycind}+s_*+D\left( E (s_{\cycind-1})+\theta'\right)}_{\Pauli*{s}} \, \mathcal{I}^{\, \theta'}_{\Pauli*{o}}
    \Cor_{E(s_{\ncyc-1})+\theta'} \circ \PauliChan{\ell}  (\rho_L)
    \label{eq:QEC Lambda def} \, ,
  \end{align}
  when $\gamma^{(\cycind)}_{s_{\cycind},s_{\cycind-1}} \neq 0$, with $\Lambda^{(\cycind)}_{s_{\cycind},s_{\cycind-1}} (\rho_L) = 0$ otherwise.
  for all $\rho_L$.  Using the above, we write
  \begin{align*}
    \tag{\ref{eq_qec_final}}
    \rho^{\vpp}_{\cycind} = \sum\limits_{s_0,\dots,s_{\cycind}} 
    \gamma^{(\cycind)}_{s_{\cycind},s_{\cycind-1}} \dots \gamma^{(1)}_{s_{1},s_{0}}
     \gamma^{(\text{prep})}_{s_0} ~
    \Lambda^{(\cycind)}_{s_{\cycind},s_{\cycind-1}} \circ \dots \circ \Lambda^{(1)}_{s_{1},s_{0}} 
    \circ \Lambda^{(\text{prep})}_{s_0} (\sigma_L) \otimes \ZProj{s_{\cycind}} \, , ~~
  \end{align*}
  for all $\cycind \geq 1$. This also establishes the update for a single QEC cycle applied to a 
  product input state $\rho_{\text{in}} = \rho_L \otimes \ZProj{\sinput}$~\eqref{eq:QEC input state}, 
  as completing the proof.
\end{proof}

\subsection{Logical measurement}
\label{app_logical_meas}

We consider destructive stabilizer measurements of $L$ captured by the Clifford unitary $C_{\text{meas}}$ in Diagram~\ref{circ_qec_meas_full}.
Such measurements are equivalent---up to a logical Clifford---to computational-basis measurements of $L$;
the corresponding outcomes are enumerated by the bitstrings $x$ of the subsystem $L$. For convenience, 
we now restrict to computational-basis measurements, where the Clifford unitary 
$C_{\text{meas}}$ acts on product states of the four registers as
\begin{align}
  \label{eq_C_POVM}
  C_{\text{meas}} \ket{x_L} \otimes \ket{\soutput} \otimes \ket{a} \otimes \ket{0} = \ket{x_L} \otimes \ket{\soutput} \otimes \ket{b} \otimes \ket{E_{\POVM}(x_L,\soutput)} \, ,~~
\end{align}
where $a$ and $b$ are arbitrary computational-basis states of $A$,
the initial reference state $0$ of $O$ is similarly arbitrary,
$E_{\POVM}$ is a function that represents the encoding of the outcomes of measuring $S$ and $L$, typically in some classical error-correcting code,
and $x_L$ and $\soutput$ are bitstrings of the systems $L$ and $S$, respectively. 
The function $E_{\POVM}$ is specified implicitly as part of the unitary $C_{\text{meas}}$
, and maps the bitstrings $x_L$ and $\soutput$ to a bitstring $\vartheta$ of system $O$. 
The measurement outcomes are extracted via the function $D_{\POVM}$ that decodes $\vartheta$. 

As noted in Sec.~\ref{sec_Pauli_noise_meas}, Pauli-stochastic noise is defined to mean that
the noisy implementation of $C^{\vpp}_{\text{meas}}$~\eqref{eq_C_POVM} can be written as
$C_{\text{meas}}' = \ErrChan^{\vpp}_{\text{meas}} \circ C_{\text{meas}}^{\vpp}$, where $\ErrChan^{\vpp}_{\text{meas}}$ is a Pauli channel~\eqref{eq:general Pauli channel}.
In practice, $\ErrChan^{\vpp}_{\text{meas}}$ can be converted into a Pauli channel via Pauli twirling of $C^{\vpp}_{\text{meas}}$. 

We now derive the measurement channel $\Lambda_{\soutput}^{(\text{meas})}(\rho_L)$~\eqref{eq_Lambda_meas} using Eq.~\ref{eq_meas_err_state}.
To do so, we compute the probability $\operatorname{Prob}(\elt)$ associated with the logical bitstring $x_{\elt}$ (and POVM element $\POVM_{\elt}$). 
The full state of the system following the application of $C_{\text{meas}}'$ is
\begin{align}
  \varrho_{\POVM} &= \sum\limits_{\Pauli*{\ell'} \in \PauliSet*{L}} \sum\limits_{\Pauli*{s} \in \PauliSet*{S}} \sum\limits_{\Pauli*{a} \in \PauliSet*{A}} \sum_{\Pauli*{o} \in \PauliSet*{O}}
  \mu^{(\text{meas})}_{\Pauli*{\ell'} \otimes \Pauli*{s} \otimes \Pauli*{a} \otimes \Pauli*{o}} \sum\limits_{\vartheta'} \mathcal{I}^{\vartheta'}_{\Pauli*{o}} \, \times \notag \\
  &~~~~ \PauliChan{\ell'} (\ZProj{x_L}) \otimes \PauliChan{s} ( \ZProj{\soutput}) \otimes \PauliChan{a} ( \ZProj{b}) \otimes \ZProj{E_{\POVM} (x_L,\soutput)+\vartheta'} 
  \label{eq_full_postmeas_state}\, ,~~
\end{align}
where $\vartheta'$ runs over bitstrings of $O$, $\mu^{(\text{meas})}_{P}$ is the probability of the Pauli $P \in \PauliSet*{\text{all}}$ 
in the mixture $\ErrChan^{\vpp}_{\text{meas}}$, and $\mathcal{I}$ is defined in Eq.~\ref{eq:Pauli bitstring indicator}.
At this point, the outcomes are stored in $O$, and the other registers may be discarded.
The reduced state on $O$ is
\begin{align}
  \label{eq_rho_data}
  \rho_{\text{data}} &= \sum\limits_{\vartheta'} \overline{p}^{(\text{meas})}_{x_L,\soutput,\vartheta'} ~ \ZProj{E_{\POVM} ( x_L,\soutput)+\vartheta'} \, ,~~
\end{align}
where we have implicitly defined the probabilities
\begin{align}
  \label{eq_intermediate_meas_probs}
  \overline{p}^{(\text{meas})}_{x_L,\soutput,\vartheta'} =  \sum\limits_{\Pauli*{\ell} \in \PauliSet*{L}} \sum\limits_{\Pauli*{s} \in \PauliSet*{S}} \sum\limits_{\Pauli*{a} \in \PauliSet*{A}} \sum_{\Pauli*{o} \in \PauliSet*{O}}
  \mu^{(\text{meas})}_{\Pauli*{\ell'} \otimes \Pauli*{s} \otimes \Pauli*{a} \otimes \Pauli*{o}} \mathcal{I}^{\vartheta'}_{\Pauli*{o}}  \, ,~~
\end{align}
where the labels $x_L$ and $\soutput$ are fixed by the initial input state in Eq.~\ref{eq_C_POVM},
which is also an eigenstate in the measurement basis. Considering the analogous expression for density matrices,
we note that only incoherent superpositions in the measurement 
basis---i.e., terms in $\rho_L$ of the form $\ZProj{x}$---contribute 
to $\operatorname{Prob}(\elt)$~\eqref{eq_meas_err_state}. 

Next, we evaluate the probability of recovering the outcome $x_{\elt} = x_L + x'$,
\begin{align}
  \label{eq_meas_prob_actual}
  \operatorname{Prob}(x_L + x') &= \sum\limits_{\vartheta'} \overline{p}^{(\text{meas})}_{x_L,\soutput,\vartheta'} \, 
  \delta \Big( D_{\POVM} \big( E_{\POVM} ( x_L,\soutput)) + \vartheta' \big) , x_L + x' \Big) \, ,~~
\end{align}
for fixed $x_L$ and $\soutput$, where $\delta ( a,b)$ is one if $a=b$ and vanishes otherwise.
Note that the expression for $\operatorname{Prob}(\elt)$~\eqref{eq_meas_err_state} recovers upon taking $x_L + x' \to x_{\elt}$.

Suppose that the joint state of $L$ and $S$ prior to measurement takes the product form $\rho = \ZProj{x_L} \otimes \ZProj{\soutput}$.
Our task is to identify a logical channel $\Lambda_{\soutput}^{(\text{meas})}$ satisfying
\begin{align}
  \operatorname{Prob}(\elt) &= \tr{\POVM_{\elt} \, \Lambda_{\soutput}^{(\text{meas})}(\ZProj{x_L})}  =  
  \sum\limits_{\vartheta'} \overline{p}^{(\text{meas})}_{x_L,\soutput,\vartheta'} \, 
  \delta \Big( D_{\POVM} \big( E_{\POVM} ( x_L,\soutput)) + \vartheta' \big) , x_{\elt} \Big)\, . ~~
  \label{eq_meas_relation}
\end{align}
This is fulfilled by the choice
\begin{equation*}
  \tag{\ref{eq_Lambda_meas}}
  \Lambda^{(\text{meas})}_{\soutput} 
  ( \rho_L ) = \sum\limits_{\Pauli*{\ell} \in \PauliSet*{L}} p^{(\text{meas})}_{\ell,\soutput} \, \PauliChan{\ell} 
  ( \rho_L ) \, ,~~
\end{equation*}
for any computational-basis state $\rho_L = \ZProj{x_L}$, where we define
\begin{align}
  \label{eq_meas_Lambda_probs}
  p^{(\text{meas})}_{\ell,\soutput} &= \sum\limits_{x'} \mathcal{I}^{\, x'}_{\Pauli*{\ell}} \, 
  \sum\limits_{\vartheta'} \, \overline{p}^{(\text{meas})}_{x_L,\soutput,\vartheta'} 
\end{align}
where $x'$ runs over bitstrings of $L$ and $\overline{p}^{(\text{meas})}_{x_L,\soutput,\vartheta'}$ is defined in Eq.~\ref{eq_meas_prob_actual}.

Although we restricted our consideration to joint states of $L$ and $S$ of the form $\ZProj{x_L} \otimes \ZProj{\soutput}$, 
we now explain why the choice of $\Lambda^{(\text{meas})}_{\soutput}$ in Eq.~\ref{eq_Lambda_meas} holds for generic $\rho_L$.
Importantly, the action of $\Lambda^{(\text{meas})}_{\soutput}$ on any logical states 
that are dephased in the measurement (computational) basis follows directly from Eqs.~\ref{eq_meas_prob_actual} and Eq.~\ref{eq_meas_relation}.
This is because any such state is a linear combination of states of the form $\ZProj{x_L} \otimes \ZProj{\soutput}$.
Next, we must consider the off-diagonal elements of $\rho_L$ in the computational basis. 
According to Eq.~\ref{eq_meas_prob_actual}, 
the contribution of these terms to $\operatorname{Prob}(\elt)$~\eqref{eq_meas_err_state} must vanish.
We then observe that $\Lambda^{(\text{meas})}_{\soutput}$~\eqref{eq_Lambda_meas} maps off-diagonal terms in $\rho_L$
to off-diagonal terms, which are then annihilated upon taking the trace with $\POVM_{\elt}$ in Eq.~\ref{eq_meas_relation}. 
Thus, Eq.~\ref{eq_Lambda_meas} correctly reproduces the probabiltiies $\operatorname{Prob}(\elt)$ associated with 
the noisy measurement of $\POVM$ for any logical input state.

\section{Syndrome marginal independence}
\label{app_smip_proofs}

\subsection{Proof of Prop.~\ref{prop_ssp_smip}}
\label{app_ssp_smip}

\begin{proof}
  The existence of the Markov process $\gamma^{\,}_{\soutput,\sinput}$ is guaranteed by Prop.~\ref{prop_qec_final}, with 
  $\gamma^{\,}_{\soutput,\sinput}$ defined in Eq.~\ref{eq:QEC gamma def}. 
  Crucially, $\gamma^{\,}_{\soutput,\sinput}$ depends on $\sinput$ only through the term
  \begin{equation*}
    \mathcal{I}^{\sinput + \soutput + s_*+D\left( E (\sinput)+\theta'\right)}_{\Pauli*{s}} \, .~~
  \end{equation*}
  When the \ssp{} holds, we have that $D \left( E (\sinput) + \theta' \right) = \sinput + D(\theta')$; 
  because $\sinput + \sinput = 0$, $\gamma^{\,}_{\soutput,\sinput}$ does not depend on $\sinput$, 
  and we write $\gamma^{\,}_{\soutput,\sinput} \to \gamma^{\,}_{\soutput}$. 
\end{proof}

\subsection{Proof of Prop.~\ref{prop_syndrome_rand_smip}}
\label{app_syndrome_rand_proof}

\begin{proof}
  As in App.~\ref{app_QEC_update_C}, consider a joint input state on $L$ and $S$ of the form 
  $\rho_L \otimes \ZProj{\sinput}$~\eqref{eq:QEC input state}. Applying syndrome randomization (see Def.~\ref{defn_rand_synd})
  and initializing the registers $A$ and $O$ in their default states leads to the initial state
  \begin{equation}
    \label{eq_rand_init_state}
    \varrho_{\text{in}} = \frac{1}{2^{\nphys-\nlog}} \sum\limits_{s_r} \rho_L \otimes \ZProj{\sinput + s_r} \otimes \ZProj{a} \otimes \ZProj{0} \, .~~
  \end{equation}
  Applying the ideal unitary $C$~\eqref{eq_c_unitary_ideal} to this state results in
  \begin{equation}
    \varrho' = \frac{1}{2^{\nphys-\nlog}} \sum\limits_{s_r} \rho_L \otimes \ZProj{\sinput + s_r} 
    \otimes \ZProj{a} \otimes \ZProj{E(\sinput + s_r)} \, ,~~
  \end{equation}
  and applying the noisy channel $\ErrChan$~\eqref{eq:noisy C} associated with $C$ and discarding 
  (i.e., tracing out and resetting) the state of $A$ results in the joint state
  \begin{align}
    \varrho' &= \frac{1}{2^{\nphys-\nlog}}  \sum\limits_{\Pauli*{\ell'} \in \PauliSet*{L}} \sum\limits_{s_r} \sum\limits_{s'} \sum\limits_{\theta'}
     \overline{p}^{\vpp}_{\ell',s',\theta'} \, \PauliChan{\ell'} (\rho_L) \otimes \ZProj{\sinput + s_r + s'} \otimes \ZProj{E(\sinput + s_r) + \theta'} \, ,~~
   \end{align}
   on $L$, $S$, and $O$, where $\overline{p}^{\vpp}_{\ell',s',\theta'} $ is unchanged by syndrome randomization (see Eq.~\ref{eq:QEC intermediate probabilities}).

   Next, we perform classical side processing (Step~\ref{qec_step_proc}) to determine the error syndrome $\serr$. 
   This is used to reset the syndrome subsystem $S$ (Step~\ref{qec_step_reset}), resulting in the joint state
   \begin{align}
    \varrho' &= \frac{1}{2^{\nphys-\nlog}}  \sum\limits_{\Pauli*{\ell'} \in \PauliSet*{L}} \sum\limits_{s_r} \sum\limits_{s'} \sum\limits_{\theta'}
     \overline{p}^{\vpp}_{\ell',s',\theta'} \, \PauliChan{\ell'} (\rho_L) \notag \\
     &~~~~\otimes \ZProj{\sinput + s_r + s'+s_* + D(E(\sinput + s_r) + \theta')} \otimes \ZProj{E(\sinput + s_r) + \theta'} \, ,~~
   \end{align}
   where the syndrome reset is conditioned on configuration $D\left( E (\sinput + s_r) + \theta'\right)$ inferred from system $O$, 
   which includes the effects of randomization. Finally, we apply the logical recovery operation EC~\eqref{eq:EC step} in Step~\ref{qec_step_EC},
   where we condition $\Cor$~\eqref{eq:EC step} on $\serr + s_r$ so that syndrome randomization does not introduce logical errors. 
   The resulting state is
   \begin{align}
    \rho_{\text{out}} &= \frac{1}{2^{\nphys-\nlog}}  \sum\limits_{\Pauli*{\ell'} \in \PauliSet*{L}} \sum\limits_{s_r} \sum\limits_{s'} \sum\limits_{\theta'}
     \overline{p}^{\vpp}_{\ell',s',\theta'} \, \Cor^{\vpd}_{s_r + D(E(\sinput + s_r) + \theta')} \circ \PauliChan{\ell'} (\rho_L) \notag \\
     &~~~~~\otimes \ZProj{\sinput + s_r + s'+s_* + D(E(\sinput + s_r) + \theta')} \, , ~~
   \end{align}
   after tracing out and reseting subsystem $O$. Finally, we define the syndrome bitstring $\sigma = \sinput + s_r$, 
   and change summation variables from $s_r$ to $\sigma$, so that 
   \begin{align}
    \rho_{\text{out}} &= \frac{1}{2^{\nphys-\nlog}}  \sum\limits_{\Pauli*{\ell'} \in \PauliSet*{L}} \sum\limits_{\sigma} \sum\limits_{s'} \sum\limits_{\theta'}
     \overline{p}^{\vpp}_{\ell',s',\theta'} \, \Cor^{\vpd}_{\sinput + \sigma + D(E(\sigma) + \theta')} \circ \PauliChan{\ell'} (\rho_L) \notag \\
     &~~~~~\otimes \ZProj{\sigma + s'+s_* + D(E(\sigma) + \theta')} \, ,~~
     \label{eq_rand_output_state}
   \end{align}
  and we observe that the distribution over $\soutput$ is the same for any input state $\sinput$,
  and the only dependence on $\sinput$ is in $\Cor$, which only affects $\Lambda$~\eqref{eq:QEC Lambda def}. We find that
  \begin{align}
    \Lambda^{(\text{rand})}_{\soutput,\sinput} (\rho_L) &\coloneqq \frac{1}{\gamma^{(\text{rand})}_{\soutput,\sinput}} \, 
    \frac{1}{2^{\nphys-\nlog}} \sum\limits_{\sigma} \sum\limits_{\Pauli*{\ell} \in \PauliSet*{L}} \sum\limits_{\Pauli*{s} \in \PauliSet*{S}} 
    \sum\limits_{\Pauli*{a} \in \PauliSet*{A}} \sum\limits_{\Pauli*{o} \in \PauliSet*{O}} \mu^{(\ErrChan_{\cycind})}_{\Pauli*{\ell} \otimes \Pauli*{s} \otimes \Pauli*{a} \otimes \Pauli*{o}} \notag \\
    &~~\sum_{\theta'} \mathcal{I}^{\soutput + \sigma + s_* + D\left( E (\sigma)+\theta'\right)}_{\Pauli*{s}} \, \mathcal{I}^{\, \theta'}_{\Pauli*{o}}
    ~\Cor_{\sinput + \sigma + D\left( E(\sigma)+\theta'\right)} \circ \PauliChan{\ell}  (\rho_L)
    \label{eq_syn_randomized_Lambda} \, ,
  \end{align} 
  where the Markov process $\gamma^{(\text{rand})}$~\eqref{eq:QEC gamma def} with syndrome randomization is given by
   \begin{align}
    \gamma^{(\text{rand})}_{\soutput,\sinput} = \frac{1}{2^{\nphys-\nlog}} \sum\limits_{\sigma} \sum\limits_{\Pauli*{\ell} \in \PauliSet*{L}} 
    \sum\limits_{\Pauli*{s} \in \PauliSet*{S}} \sum\limits_{\Pauli*{a} \in \PauliSet*{A}} \sum\limits_{\Pauli*{o} \in \PauliSet*{O}} 
    \mu^{(\ErrChan_{\cycind})}_{\Pauli*{\ell} \otimes \Pauli*{s} \otimes \Pauli*{a} \otimes \Pauli*{o}} \sum_{\theta'} 
    \mathcal{I}^{\soutput + \sigma + s_* + D\left( E (\sigma)+\theta'\right)}_{\Pauli*{s}} \, \mathcal{I}^{\, \theta'}_{\Pauli*{o}} \, ,~~
    \label{eq_syn_randomized_gamma}
  \end{align}
  which is independent of $\sinput$, so that the \smip{} holds. 
\end{proof}

\section{Results used in the proof of Theorem~\ref{thm_main}}
\label{app_lemmas}

\subsection{Lemma \ref{prop_perturbation_norm_bound}}
\label{app_prop_perturb_norm_bound}

\begin{lem}
  Let $\sigma$ be a finite-dimensional vector of length $D$ whose components realize a discrete probability distribution
  and let $\Sigma$ and $E$ be real $D \times D$ matrices where 
  $\Sigma = \operatorname{diag}(\sigma)$ and $E_{ij} \in [0,1]$.
  Let $\norm{x}^{\,}_{\mathrm{op}} $ be the \emph{operator norm} of an operator (or matrix) $x$.
  If $\sigma^T E \sigma \leq \varepsilon$ for some real $\varepsilon \geq 0$, 
  then $\norm{\ssig E \ssig}_{\mathrm{op}} \leq \varepsilon^{1/2}$.
  \label{prop_perturbation_norm_bound}
\end{lem}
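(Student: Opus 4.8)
The plan is to avoid working with the operator norm directly and instead bound it by the Frobenius (Hilbert--Schmidt) norm, since the latter reproduces exactly the quadratic form appearing in the hypothesis. Recall that for any real matrix $M$ one has $\norm{M}_{\mathrm{op}} \leq \norm{M}_F$, where $\norm{M}_F^2 = \sum_{ij} M_{ij}^2 = \Tr{M^T M}$ is the sum of squares of the singular values; the operator norm is the largest singular value and is therefore dominated by $\norm{M}_F$. It thus suffices to show $\norm{\ssig E \ssig}_F^2 \leq \varepsilon$.

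First I would write the entries of $M = \ssig E \ssig$ explicitly. Since $\Sigma = \operatorname{diag}(\sigma)$, we have $\ssig = \operatorname{diag}(\sqrt{\sigma_1}, \dots, \sqrt{\sigma_D})$, so that $M_{ij} = \sqrt{\sigma_i}\, E_{ij}\, \sqrt{\sigma_j}$. Squaring and summing the entries then gives
\begin{equation}
  \norm{\ssig E \ssig}_F^2 = \sum_{i,j} \sigma_i\, E_{ij}^2\, \sigma_j \, .
\end{equation}
The key step is to invoke the entrywise bound $E_{ij} \in [0,1]$, which implies $E_{ij}^2 \leq E_{ij}$ for every pair $(i,j)$. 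Because $\sigma_i, \sigma_j \geq 0$, each term only decreases when $E_{ij}^2$ is replaced by $E_{ij}$, so
\begin{equation}
  \sum_{i,j} \sigma_i\, E_{ij}^2\, \sigma_j \leq \sum_{i,j} \sigma_i\, E_{ij}\, \sigma_j = \sigma^T E \sigma \leq \varepsilon \, ,
\end{equation}
where the final inequality is precisely the hypothesis.

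Combining the two displays with $\norm{M}_{\mathrm{op}} \leq \norm{M}_F$ yields $\norm{\ssig E \ssig}_{\mathrm{op}} \leq \varepsilon^{1/2}$, as claimed. I do not expect a genuine obstacle here; the only subtlety is recognizing at the outset that the Frobenius norm, rather than the operator norm, is the right quantity to estimate, because $\norm{M}_F^2$ reproduces the bilinear form $\sigma^T (E \circ E)\, \sigma$ (with $\circ$ the entrywise product), which the hypothesis controls once one observes $E_{ij}^2 \leq E_{ij}$. As a sanity check, the bound is tight: taking $E$ with a single unit entry and $\sigma = (p, 1-p)$ produces a rank-one $M$ with $\norm{M}_{\mathrm{op}} = \norm{M}_F = \sqrt{p(1-p)} = \varepsilon^{1/2}$, so no constant smaller than $\varepsilon^{1/2}$ is possible.
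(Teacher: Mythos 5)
Your proof is correct and follows essentially the same route as the paper's: bound the operator norm by the Frobenius norm, compute $\norm{\ssig E \ssig}_F^2 = \sum_{i,j}\sigma_i E_{ij}^2 \sigma_j$, and use $E_{ij}^2 \leq E_{ij}$ (valid since $E_{ij}\in[0,1]$) to reduce to the hypothesis $\sigma^T E \sigma \leq \varepsilon$. The tightness example is a nice addition but not part of the paper's argument.
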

\begin{proof}
  A convenient expression for the operator norm is
  \begin{equation}
     \norm{x}^{\,}_{\mathrm{op}} = \left( \sup \operatorname{spec} ( x^\dagger x) \right)^{1/2} \, ,~~
     \label{eq_op_norm}
  \end{equation}
  which is also known as the spectral radius of $x$. The Frobenius norm of $x$ is given by
  \begin{equation}
    \label{eq_Frobenius_norm}
    \norm{x}^{\vpp}_{\mathrm{F}} \coloneqq  \trace \left( x^\dagger x \right)^{1/2} \, ,~~
  \end{equation}
  which is the sum over the entire spectrum of $x^\dagger x$. 
  Because $x^\dagger x \geq 0$ is a positive operator for any $x$, its spectrum is positive semi-definite. 
  Hence, $\norm{x}_{\mathrm{F}}^2 \geq \norm{x}_{\mathrm{op}}^2$, and so the Frobenius norm upper bounds the operator norm.
  Next, we compute the Frobenius norm of $\ssig E \ssig$, finding
  \begin{equation}
    \label{eq_Fro_norm_ssig}
    \norm{\ssig E \ssig}^{\vpp}_{F} = \sqrt{\sum_{i,j} \sigma_i \sigma_{i} E^2_{ij}} \, .~~
  \end{equation}
  Now, because all entries of $E$ lie in the interval $[0,1]$, we have that $E_{ij}^2 \leq E_{ij}$ for any $i$ and $j$.
  Accordingly, we have that $\sum_{i,j} \sigma_i \sigma_j E^2_{ij} \leq \sum_{i,j} \sigma_i \sigma_j E^{\,}_{ij} = \sigma^T E \sigma$.
  As a result,  $\sigma^T E \sigma$  upper bounds the square of the Frobenius norm of $\ssig E \ssig$ (see Eq.~\ref{eq_Fro_norm_ssig}), 
  which in turn upper bounds the operator norm of $\ssig E \ssig$. Taking square roots completes the proof.
\end{proof}

\subsection{Lemmas for Pauli-diagonal superoperators}
\label{app_lemma_pauli_diag}

We make several assumptions that enter all of the Lemmas below. 
Let $\hilbert$ be an $\nphys$-qubit Hilbert space with $\operatorname{dim}(\hilbert)=2^{\nphys}$. 
Let $\operatorname{End}(\hilbert)$ denote the Hilbert space of linear operators on $\hilbert$, 
equipped with the Hilbert-Schmidt inner product $\expval{A,B} = 2^{-\nphys} \trace ( A^\dagger B)$ 
for all $A,B \in \operatorname{End}(\hilbert)$.
For all $A \in \operatorname{End}(\hilbert)$, we denote by $\norm{A}_{\mathrm{op}}$ 
the standard operator norm of $A$---i.e., the smallest number $c>0$ such that $\norm{Ax} \leq c \, \norm{x}$ for 
all $x \in \hilbert$, where $\norm{x}=\expval{x,x}^{1/2}$.
We denote by $\norm{A}_{\mathrm{HS}} = \expval{A,A}^{1/2}$ the Hilbert-Schmidt norm of $A$, 
which is the norm of $A$ as an element of $\operatorname{End}(\hilbert)$.
Similarly, if $\Phi$ is a \emph{super}operator on $\hilbert$---i.e., a linear operator acting on $\operatorname{End}(\hilbert)$---we 
denote by $\norm{\Phi}_{\mathrm{op}}$ the operator norm of $\Phi$---i.e., the smallest number $c>0$ such 
that $\norm{\Phi(A)}_{\mathrm{HS}} \leq c\, \norm{A}_{\mathrm{HS}}$ for all $A \in \operatorname{End}(\hilbert)$. 
Finally, we note that the operator Hilbert space $\operatorname{End}(\hilbert)$ is spanned 
by the Pauli operators $P \in \PauliSet*{\nphys}$~\eqref{eq:Pauli set}.
We begin by defining Pauli-diagonal superoperators.

\begin{defn}[Pauli-diagonal superoperator]
  A superoperator $\Phi$ on $\hilbert$ is \emph{Pauli diagonal} if there exists a family 
  $\{ \alpha_{P} \}_P \subset \Comps$ such that $\Phi(P) = \alpha_P \, P$ for all $P \in \PauliSet*{\nphys}$.
  \label{defn_Pauli_diagonal}
\end{defn}

All Pauli superoperators $\PauliChan*{i}$~\eqref{eq:Pauli superoperator}---and thus, 
all Pauli channels $\Lambda$~\eqref{eq:general Pauli channel}---are Pauli diagonal. 
Below, we state and prove a useful equivalence between the maximum-magnitude Pauli eigenvalue of a Pauli-diagonal 
superoperator $\Phi$ and the operator norm of $\Phi$.
\begin{lem}
  \label{lem_norm_equiv}
  Let $\Phi$ be a Pauli-diagonal superoperator on $\hilbert$ 
  with associated Pauli eigenvalues $\{ \alpha^{\,}_P \}^{\,}_P$ (see Def.~\ref{defn_Pauli_diagonal}). 
  Then the norm of $\Phi$ is given by
  \begin{equation}
    \norm{\Phi}^{\vpp}_{\mathrm{op}} = \max_{P \in \PauliSet*{\nphys}} \, \abs{\alpha^{\vpp}_P} \, .~~
    \label{eq_norm_equiv}
  \end{equation}
\end{lem}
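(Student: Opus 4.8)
The plan is to prove the equality in Eq.~\ref{eq_norm_equiv} by establishing the two inequalities $\norm{\Phi}_{\mathrm{op}} \leq \max_P \abs{\alpha_P}$ and $\norm{\Phi}_{\mathrm{op}} \geq \max_P \abs{\alpha_P}$ separately. The essential structural fact I would exploit is that the Pauli operators $\{P\}_{P \in \PauliSet*{\nphys}}$ form an \emph{orthogonal basis} of $\operatorname{End}(\hilbert)$ with respect to the Hilbert--Schmidt inner product $\expval{A,B} = 2^{-\nphys}\trace(A^\dagger B)$; indeed $\expval{P,P'} = 2^{-\nphys}\trace(P^\dagger P') = \kron{P,P'}$, so the Paulis are in fact orthonormal in this normalization. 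Since $\Phi$ is Pauli diagonal with $\Phi(P) = \alpha_P P$, the Paulis are eigenvectors of $\Phi$, and the whole problem reduces to computing the operator norm of a diagonal operator in an orthonormal eigenbasis.

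\textbf{Upper bound.} First I would expand an arbitrary $A \in \operatorname{End}(\hilbert)$ in the Pauli basis as $A = \sum_P c_P \, P$ with $c_P = \expval{P,A}$. By orthonormality, $\norm{A}_{\mathrm{HS}}^2 = \sum_P \abs{c_P}^2$. Applying $\Phi$ gives $\Phi(A) = \sum_P \alpha_P c_P \, P$, so $\norm{\Phi(A)}_{\mathrm{HS}}^2 = \sum_P \abs{\alpha_P}^2 \abs{c_P}^2 \leq (\max_P \abs{\alpha_P}^2) \sum_P \abs{c_P}^2 = (\max_P \abs{\alpha_P})^2 \, \norm{A}_{\mathrm{HS}}^2$. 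Taking square roots and invoking the definition of the superoperator norm yields $\norm{\Phi}_{\mathrm{op}} \leq \max_P \abs{\alpha_P}$.

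\textbf{Lower bound.} For the reverse inequality, I would let $P_*$ be a Pauli achieving the maximum, i.e.\ $\abs{\alpha_{P_*}} = \max_P \abs{\alpha_P}$. Testing $\Phi$ on $A = P_*$, which has $\norm{P_*}_{\mathrm{HS}} = 1$, gives $\norm{\Phi(P_*)}_{\mathrm{HS}} = \abs{\alpha_{P_*}} \, \norm{P_*}_{\mathrm{HS}} = \abs{\alpha_{P_*}}$. Since the operator norm is the supremum of $\norm{\Phi(A)}_{\mathrm{HS}}/\norm{A}_{\mathrm{HS}}$ over nonzero $A$, this single test vector already shows $\norm{\Phi}_{\mathrm{op}} \geq \abs{\alpha_{P_*}} = \max_P \abs{\alpha_P}$. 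Combining the two bounds gives the claimed equality.

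I do not anticipate a serious obstacle here: the result is essentially the standard fact that the operator norm of a diagonalizable map equals its spectral radius when restricted to an orthonormal eigenbasis, and the only content specific to this setting is verifying Pauli orthonormality under the chosen inner-product normalization. The one point requiring mild care is bookkeeping the normalization factor $2^{-\nphys}$ in $\expval{\cdot,\cdot}$ so that $\expval{P,P} = 1$ rather than $2^{\nphys}$; getting this right is what makes $\norm{P}_{\mathrm{HS}} = 1$ and keeps both bounds clean. Everything else is routine.
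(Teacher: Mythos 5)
Your proposal is correct and follows essentially the same route as the paper: both arguments rest on the fact that the Paulis form an orthonormal eigenbasis of $\Phi$ under the Hilbert--Schmidt inner product, so the operator norm equals the largest eigenvalue magnitude. You simply spell out the two-sided inequality that the paper's one-line proof leaves implicit.
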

\begin{proof}
The Pauli operators form an orthogonal basis for $\operatorname{End}(\hilbert)$. 
They are an eigenbasis for $\Phi$ in the sense that $\Phi(P) = \alpha_P P$. 
Since the Paulis satisfy $\norm{P}_{\mathrm{HS}}=1$, $\norm{\Phi}_{\mathrm{op}}$ is simply the maximum over 
$P \in \PauliSet*{\nphys}$ of $\norm{\Phi(P)}_{\mathrm{HS}}$. This immediately leads to Eq.~\ref{eq_norm_equiv}.
\end{proof}

\begin{lem}
  Consider an $\nphys$-qubit system. Let $\rho$ be a density operator and let $\Pi$ be an operator with rank $r$
  such that both $\Pi$ and $\ident - \Pi$ are positive. Then, 
  for all Pauli-diagonal superoperators $\Phi$ with Pauli eigenvalues $\{ \alpha^{\,}_P\}^{\,}_{P}$ 
  (see Def.~\ref{defn_Pauli_diagonal}),
  \begin{equation}
    \abs{ \trace \left( \Pi^\dagger \, \Phi (\rho) \right) }^2  \leq 
    \abs{ \trace \left( \Pi \right)  }^2 \, \max\limits_{P \in \PauliSet*{\nphys}} \, \abs{\alpha^{\vpp}_P}^2 \leq 
    r \, 
    \max\limits_{P \in \PauliSet*{\nphys}} \, \abs{\alpha^{\vpp}_P}^2 \, .~~
    \label{eq_Pauli_diag_inequality}
  \end{equation}
  \label{lem_pauli_diag}
\end{lem}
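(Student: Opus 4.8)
The plan is to bound $\tr{\Pi^\dagger \Phi(\rho)}$ by Cauchy--Schwarz in the Hilbert--Schmidt inner product and then feed in the operator-norm characterization of Pauli-diagonal superoperators established in Lemma~\ref{lem_norm_equiv}. Concretely, I would write $\tr{\Pi^\dagger \Phi(\rho)} = 2^{\nphys}\,\expval{\Pi,\Phi(\rho)}$, so that $\abs{\tr{\Pi^\dagger\Phi(\rho)}} \leq 2^{\nphys}\,\norm{\Pi}_{\mathrm{HS}}\,\norm{\Phi(\rho)}_{\mathrm{HS}}$. The factor $\norm{\Phi(\rho)}_{\mathrm{HS}}$ is then controlled by the superoperator norm via $\norm{\Phi(\rho)}_{\mathrm{HS}} \leq \norm{\Phi}_{\mathrm{op}}\,\norm{\rho}_{\mathrm{HS}}$, and Lemma~\ref{lem_norm_equiv} identifies $\norm{\Phi}_{\mathrm{op}} = \max_{P}\abs{\alpha_P}$.

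It then remains to bound the two Hilbert--Schmidt norms using the hypotheses on $\rho$ and $\Pi$. For the state, $\norm{\rho}_{\mathrm{HS}}^2 = 2^{-\nphys}\,\tr{\rho^2} \leq 2^{-\nphys}$, since $\rho \geq 0$ with $\tr{\rho}=1$ forces $\rho^2 \leq \rho$ and hence $\tr{\rho^2}\leq 1$. For the effect operator, the assumption that both $\Pi$ and $\ident-\Pi$ are positive means $0 \leq \Pi \leq \ident$, so its spectrum lies in $[0,1]$ and therefore $\Pi^2 \leq \Pi$ as operators; thus $\norm{\Pi}_{\mathrm{HS}}^2 = 2^{-\nphys}\,\tr{\Pi^2} \leq 2^{-\nphys}\,\tr{\Pi}$. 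Substituting these two estimates and cancelling the powers of $2^{\nphys}$ leaves $\abs{\tr{\Pi^\dagger\Phi(\rho)}} \leq (\tr{\Pi})^{1/2}\,\max_P\abs{\alpha_P}$; squaring yields the first inequality, with $\tr{\Pi}$ (equivalently $\abs{\tr{\Pi}}$, since $\Pi\geq 0$) multiplying $\max_P\abs{\alpha_P}^2$.

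The second inequality is then immediate from the rank hypothesis: since $\Pi$ has rank $r$ and all its eigenvalues lie in $[0,1]$, the trace $\tr{\Pi}$ is a sum of at most $r$ numbers each at most $1$, so $\tr{\Pi}\leq r$. I expect the only genuinely delicate point to be the bookkeeping of the $2^{\nphys}$ prefactors across the Hilbert--Schmidt inner product, the superoperator norm, and the two norm estimates, together with the (standard but essential) operator inequalities $\rho^2 \leq \rho$ and $\Pi^2 \leq \Pi$ that convert $\tr{\rho^2}$ and $\tr{\Pi^2}$ into the linear quantities $1$ and $\tr{\Pi}$. An alternative that avoids invoking Lemma~\ref{lem_norm_equiv} would be to expand $\rho = \sum_P c_P \, P$ and $\Pi = \sum_P d_P \, P$ in the Pauli basis, write $\tr{\Pi^\dagger\Phi(\rho)} = 2^{\nphys}\sum_P \alpha_P \, c_P \, d_P$, and apply Cauchy--Schwarz to the sum after pulling out $\max_P\abs{\alpha_P}$; the same two norm bounds then close the argument.
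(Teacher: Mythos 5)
Your proposal is correct and follows essentially the same route as the paper's proof: Cauchy--Schwarz in the Hilbert--Schmidt inner product, the identification $\norm{\Phi}_{\mathrm{op}} = \max_P \abs{\alpha_P}$ from Lemma~\ref{lem_norm_equiv}, and the estimates $\trace(\rho^2) \leq 1$ and $\trace(\Pi^2) \leq \trace(\Pi) \leq r$, with the $2^{\nphys}$ prefactors cancelling exactly as you describe. The only cosmetic difference is that the paper carries squared quantities throughout rather than squaring at the end.
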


\begin{proof}
  Applying the Cauchy-Schwarz inequality for the Hilbert space $\operatorname{End}(\hilbert)$ implies that 
  \begin{align}
    \abs{ \trace \left( \Pi^\dagger \, \Phi (\rho) \right) }^2  &= 4^{\nphys} \left| \expval{\Pi,\Phi(\rho)} \right|^2 \notag \\
    &\leq 4^{\nphys} \abs{ \expval{\Pi,\Pi} } \, \abs{ \expval{\Phi(\rho),\Phi(\rho)}} \notag \\
    &= 2^{\nphys} \trace \left( \Pi^2 \right) \, \norm{\Phi(\rho)}^2_{\mathrm{HS}} \, , ~~ 
    \label{eq_Pauli_diag_inequality_1} \\
    \intertext{where we used the definition of the Hilbert-Schmidt norm of the operator $\Phi(\rho) \in \operatorname{End}(\hilbert)$. 
    Next, we invoke the standard result that $\norm{Ax} \leq \norm{A} \, \norm{x}$; for superoperators, 
    the analogous inequality is $\norm{\Phi(\rho)}^{\,}_{\mathrm{HS}} \leq \norm{\Phi}^{\,}_{\mathrm{op}} \, \norm{\rho}^{\,}_{\mathrm{HS}}$.
    The resulting expression is}
    \abs{ \trace \left( \Pi^\dagger \, \Phi (\rho) \right) }^2  &\leq 2^{\nphys} \trace \left( \Pi^2 \right) \, 
    \norm{\rho}^{2}_{\mathrm{HS}} \, \norm{\Phi}^2_{\mathrm{op}} \notag \\
    &= \trace \left( \Pi^2 \right) \, \trace \left( \rho^\dagger \rho \right) \,
     \max\limits_{P \in \PauliSet*{\nphys}} \, \abs{\alpha^{\vpp}_P}^2 \, , ~~
     \label{eq_Pauli_diag_inequality_2}
  \end{align}
  where we used the fact that $\rho$ and $\Pi$ are Hermitian, 
  and we used Eq.~\ref{eq_norm_equiv} from Lemma~\ref{lem_norm_equiv} to recover the expression in terms of the Pauli eigenvalues.  
  We recover the first inequality in Eq.~\ref{eq_Pauli_diag_inequality} by 
  noting that $\trace (\rho^2) \leq \trace (\rho) = 1$, so that Eq.~\ref{eq_Pauli_diag_inequality_2} becomes
  \begin{equation}
    \abs{ \trace \left( \Pi^\dagger \, \Phi (\rho) \right) }^2  \leq \trace \left( \Pi^2 \right) 
    \, \max\limits_{P \in \PauliSet*{\nphys}} \, \abs{\alpha^{\vpp}_P}^2
    \, , ~~ \label{eq_Pauli_diag_inequality_first}
  \end{equation}
  and the second inequality in Eq.~\ref{eq_Pauli_diag_inequality} follows from the observation that
  any rank-$r$ operator $\Pi$ with $0 \leq \Pi \leq \ident$ satisfies
  $\trace (\Pi^2) \leq \trace (\Pi)  \leq  r$. 
  Applying this to Eq.~\ref{eq_Pauli_diag_inequality_first} leads to
  \begin{equation}
    \trace \left( \Pi^2 \right) \, \max\limits_{P \in \PauliSet*{\nphys}} \, \abs{\alpha^{\vpp}_P}^2 \leq 
    r \, \max\limits_{P \in \PauliSet*{\nphys}} \, \abs{\alpha^{\vpp}_P}^2
    \, . ~~ \label{eq_Pauli_diag_inequality_second}
  \end{equation}
\end{proof}

\begin{lem}
  Consider an $\nphys$-qubit system with dimension $D=2^{\nphys}$, a density operator $\rho$, 
  and a POVM element $\POVM_{\elt}$. Let $\Lambda$ be a Pauli channel~\eqref{eq:general Pauli channel} with 
  Pauli eigenvalues $\{\lambda^{\,}_P\}^{\,}_P$ and let $\Phi$ be a Pauli-diagonal superoperator 
  (see Def.~\ref{defn_Pauli_diagonal}) with Pauli eigenvalues $\{\lambda^{\,}_P + \epsilon^{\,}_P\}^{\,}_P$. Then 
  \begin{equation}
    \delta = \left| \trace \big( \POVM_{\elt} \, \Phi (\rho) \big) - \trace \big( \POVM_{\elt} \, \Lambda (\rho) \big) \right| \leq 
    D^{1/2} \, \max_P \left\{ \abs{\epsilon^{\,}_P} \right\}
    \label{eq_delta_inequality} \, .~~
  \end{equation}
  \label{lem_approx_model_eigenvalues}
\end{lem}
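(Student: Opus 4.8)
The plan is to reduce the claim to Lemma~\ref{lem_pauli_diag} by forming the difference superoperator $\Delta \coloneqq \Phi - \Lambda$ and controlling its action through the Pauli eigenvalues $\{\epsilon_P\}_P$. First I would observe that both $\Phi$ and $\Lambda$ are Pauli-diagonal: $\Phi$ by hypothesis (Def.~\ref{defn_Pauli_diagonal}), and $\Lambda$ because it is a Pauli channel~\eqref{eq:general Pauli channel}, as noted immediately after Def.~\ref{defn_Pauli_diagonal}. Since Pauli-diagonal superoperators are exactly those diagonalized in the Pauli basis, they form a linear space, so $\Delta = \Phi - \Lambda$ is again Pauli diagonal, with Pauli eigenvalues $\{(\lambda_P + \epsilon_P) - \lambda_P\}_P = \{\epsilon_P\}_P$.

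Next, by linearity of the trace,
\begin{equation}
  \delta = \left| \trace \big( \POVM_{\elt} \, \Phi(\rho) \big) - \trace \big( \POVM_{\elt} \, \Lambda(\rho) \big) \right| = \left| \trace \big( \POVM_{\elt} \, \Delta(\rho) \big) \right| \, ,
\end{equation}
so the task becomes bounding a single trace involving the Pauli-diagonal superoperator $\Delta$. I would then apply Lemma~\ref{lem_pauli_diag} directly to $\Delta$ with the operator $\POVM_{\elt}$ playing the role of $\Pi$ there. Its hypotheses are met: $\POVM_{\elt}$ is a POVM element, so $\POVM_{\elt}$ and $\ident - \POVM_{\elt}$ are positive and $\POVM_{\elt}^\dagger = \POVM_{\elt}$; and $\rho$ is a density operator.

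Lemma~\ref{lem_pauli_diag} then gives $\delta^2 \leq r \, \max_P |\epsilon_P|^2$, where $r$ is the rank of $\POVM_{\elt}$. The final step is to replace $r$ by the setting-independent bound $r \leq D$, which holds because $\POVM_{\elt}$ acts on the $D$-dimensional Hilbert space; this yields $\delta^2 \leq D \, \max_P |\epsilon_P|^2$, and taking square roots gives the claimed bound~\eqref{eq_delta_inequality}. I anticipate no substantive obstacle here, as the argument is essentially a specialization of Lemma~\ref{lem_pauli_diag}; the only points requiring care are verifying that the difference of two Pauli-diagonal superoperators remains Pauli diagonal (immediate from shared diagonalization in the Pauli basis) and using the dimension bound $r \leq D$ rather than the exact rank, so that the resulting estimate is independent of the particular POVM element.
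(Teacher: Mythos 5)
Your proposal is correct and follows essentially the same route as the paper's proof: define $\Delta = \Phi - \Lambda$, note it is Pauli diagonal with eigenvalues $\{\epsilon_P\}_P$, and apply Lemma~\ref{lem_pauli_diag} with $\Pi = \POVM_{\elt}$ and $\operatorname{rank}(\POVM_{\elt}) \leq D$. No gaps.
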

\begin{proof}
  Define the superoperator $\Delta = \Phi - \Lambda$. Because both $\Lambda$ and $\Phi$ are Pauli diagonal (see Def.~\ref{defn_Pauli_diagonal}),
  $\Delta$ is also Pauli diagonal, with Pauli eigenvalue $\epsilon^{\,}_P$ for the Pauli $P \in \PauliSet*{\nphys}$. 
  Since $\operatorname{rank}(\POVM_{\elt}) \leq D$, applying Lemma~\ref{lem_pauli_diag} to 
  $\delta = \abs{ \trace ( \POVM_{\elt} \, \Delta (\rho) )}$ leads to Eq.~\ref{eq_delta_inequality}.
\end{proof}

\subsection{Lemmas concerning matrix eigenvalues}
\label{app_matrix_lemmas}
For the lemmas below, let $V$ be a finite-dimensional inner-product space
and let $\ket{\psi} \in V$ be a unit vector.
Unless otherwise specified, all vectors are elements of $V$ and all operators and matrices act on $V$.
\begin{lem}
   Let $T$ be a real matrix 
  such that $T = \ketbra{\psi} + E$, where $\norm{E}^{\,}_{\mathrm{op}} \leq \epsilon < 1/2$.
  Then, $T$ has exactly one eigenvalue $\lambda$ satisfying $\abs{1-\lambda} \leq \epsilon$.
  Furthermore, $\lambda$ is real and is the largest-magnitude eigenvalue of $T$.
  \label{lem_perturbed_eigenvalue}
\end{lem}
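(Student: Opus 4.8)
The plan is to treat $T$ as a small perturbation of the rank-one projector $\ketbra{\psi}$, whose spectrum is $\{1,0\}$ with $1$ simple and $0$ of multiplicity $\dim V - 1$, and to show that $E$ moves the spectrum only slightly while keeping the eigenvalue near $1$ isolated and dominant. First I would localize the spectrum via the resolvent. The unperturbed resolvent is $R_0(z) = (z - \ketbra{\psi})^{-1} = (z-1)^{-1}\ketbra{\psi} + z^{-1}(\id - \ketbra{\psi})$, with $\norm{R_0(z)}_{\mathrm{op}} = \max(|z-1|^{-1}, |z|^{-1})$. Writing $z - T = (z - \ketbra{\psi})(\id - R_0(z)E)$, the factor $\id - R_0(z)E$ is invertible whenever $\norm{R_0(z)}_{\mathrm{op}}\norm{E}_{\mathrm{op}} < 1$, which holds as soon as both $|z-1| > \epsilon$ and $|z| > \epsilon$. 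Hence every eigenvalue of $T$ lies in $D_0 \cup D_1$, where $D_0 = \{|z| \le \epsilon\}$ and $D_1 = \{|z-1|\le\epsilon\}$. Since $\epsilon < 1/2$ the centers $0$ and $1$ are more than $2\epsilon$ apart, so $D_0$ and $D_1$ are disjoint; this is precisely where the hypothesis $\epsilon < 1/2$ enters.

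Next I would count the eigenvalues in $D_1$ with the Riesz spectral projection. Fix a circle $\Gamma = \{|z-1| = \rho\}$ with $\epsilon < \rho < 1-\epsilon$; the bound above shows $z - T$ is invertible on $\Gamma$, so $P_1 = \frac{1}{2\pi \ii}\oint_\Gamma (z-T)^{-1}\,dz$ is a projection whose rank equals the number of eigenvalues (with algebraic multiplicity) enclosed by $\Gamma$, namely those in $D_1$, since $D_0$ lies entirely outside $\Gamma$. To evaluate this rank I would introduce the homotopy $T(t) = \ketbra{\psi} + tE$ for $t \in [0,1]$: because $\norm{tE}_{\mathrm{op}} \le \epsilon$ for every $t$, the same resolvent bound keeps $\Gamma$ in the resolvent set throughout, so $P_1(t)$ is continuous in $t$ and $\trace P_1(t)$ is an integer-valued continuous---hence constant---function on $[0,1]$. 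At $t=0$ it equals $\trace \ketbra{\psi} = 1$, so $T = T(1)$ has exactly one eigenvalue $\lambda$ in $D_1$, of algebraic multiplicity one; together with $D_0 \cap D_1 = \varnothing$ this yields exactly one eigenvalue with $|1-\lambda| \le \epsilon$.

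Finally I would establish reality and maximality. Since $T$ is real its non-real eigenvalues occur in conjugate pairs, and $D_1$ is a disk centered on the real axis, so a non-real $\lambda \in D_1$ would force $\bar\lambda \ne \lambda$ into $D_1$ as well, contradicting uniqueness; hence $\lambda$ is real. For maximality, $\lambda \in D_1$ gives $|\lambda| \ge 1 - \epsilon$, whereas every other eigenvalue lies in $D_0$ and so has magnitude $\le \epsilon$; since $\epsilon < 1/2$ implies $\epsilon < 1 - \epsilon$, $\lambda$ is strictly the largest-magnitude eigenvalue. I expect the counting step to be the main obstacle: showing there is \emph{exactly} one eigenvalue near $1$ (not merely at least one) requires the spectral-projection/homotopy argument together with the verification that $\Gamma$ stays in the resolvent set for every $t$, which again relies on the uniform bound $\norm{tE}_{\mathrm{op}}\le\epsilon$ and the separation $\epsilon<1/2$.
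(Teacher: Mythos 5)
Your proof is correct and reaches the same three-part structure as the paper's (localize the spectrum into two disjoint disks, count the eigenvalues near $1$ by a continuity argument along the homotopy $\ketbra{\psi}+tE$, then deduce reality from the conjugate-pair symmetry and maximality from the disk separation), but the two key steps are executed with different machinery. For the localization, the paper invokes the Bauer--Fike theorem for the normal matrix $\ketbra{\psi}$, whereas you derive the same conclusion directly from the resolvent factorization $z-T=(z-\ketbra{\psi})(\id-R_0(z)E)$; these are equivalent (your computation is essentially the proof of Bauer--Fike in the normal case), so nothing is gained or lost there beyond self-containedness. The more substantive difference is in the counting step: the paper uses a quantitative eigenvalue-continuity bound (Theorem D2 of Horn and Johnson) together with an open-and-closed argument showing that the set of $t\in[0,1]$ with exactly one eigenvalue in the disk about $1$ is all of $[0,1]$, while you use the Riesz spectral projection $P_1(t)=\frac{1}{2\pi\ii}\oint_\Gamma(z-T(t))^{-1}\,dz$ and the observation that its integer-valued trace is continuous, hence constant. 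Your route is the more standard one and is arguably cleaner: it counts algebraic multiplicities exactly, and it sidesteps the somewhat delicate topological bookkeeping in the paper's connectedness argument (whose phrasing about the supremum of $\Delta$ is the least transparent part of the published proof). The paper's route, in exchange, stays within elementary matrix analysis and avoids contour integration. Both correctly use $\epsilon<1/2$ only to separate the two disks, and both conclude reality and maximality identically.
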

\begin{proof}
  We first apply the Bauer-Fike theorem (see Equation 6.3.2 in Ref.~\citenum{HornJohnson2013}). 
  Given a normal matrix $A$ and a perturbation $E$ with $\norm{E}^{\,}_{\mathrm{op}} \leq \epsilon$, 
  the Bauer-Fike theorem states that, if $\lambda$ is an eigenvalue of $T = A + E$, 
  then there exists an eigenvalue $a$ of $A$ such that $\abs{\lambda-a} \leq \epsilon$. 
  Because $\ketbra{\psi}$ is an orthogonal rank-one projector, it is Hermitian and has a single eigenvalue equal to one, with all others zero.  
  Thus, by the Bauer-Fike theorem, every eigenvalue of $T = \ketbra{\psi} + E$ satisfies either 
  $\abs{\lambda} \leq \epsilon$  or $\abs{1-\lambda} \leq \epsilon$.

  Next, we use the continuity of eigenvalues of a matrix under  perturbations to show that there is 
  exactly one eigenvalue $\lambda$ of $T = \ketbra{\psi} + E$ satisfying $\abs{1-\lambda} \leq \epsilon$. 
  Consider a family of matrices $\{ T_{\alpha} \coloneqq \ketbra{\psi} + \alpha E \}_{\alpha}$ where 
  $\alpha \in [0,1] \subset \Reals$. Denote by $\Delta \subseteq [0,1]$ the set of values of $\alpha$ 
  such that $T_{\alpha}$ has exactly one eigenvalue $\lambda_{\alpha} \in \operatorname{spec}(T_{\alpha})$ 
  satisfying $\abs{1-\lambda_\alpha} \leq \epsilon$. First, we observe that, when $\alpha = 0$,
  the matrix $T_0 = \ketbra{\psi}$, has a single eigenvalue $\lambda_0 = 1$ (with all others zero), 
  meaning that $\alpha=0$ is in $\Delta$.
  We now use the rigorous notion of eigenvalue continuity in Theorem D2 of Ref.~\citenum{HornJohnson2013} to show that $\Delta = [0,1]$. 
  That theorem states that, given two $D \times D$ matrices $A$ and $B$ with eigenvalues 
  $a^{\,}_1, \dots, a^{\,}_D$ and $b^{\,}_1, \dots, b^{\,}_D$, respectively, 
  there exists a permutation $\pi$ of the labels $1,\dots,D$ such that
  \begin{equation}
    \label{eq_theorem_D2}
    \max_{1 \leq i \leq D} \left| a^{\vpp}_i - b^{\vpp}_{\pi(i)} \right|  \leq
     2^{2-1/D} \, \big(  \norm{A}^{\vpp}_{\text{op}} + \norm{B}^{\vpp}_{\text{op}} \big)^{1 - 1/D}
    \norm{A-B}^{1/D}_{\text{op}} \, . ~~
  \end{equation}
  Let $\alpha \in \Delta$ and let $A = T_\alpha$. Because the family $\listb{T_\alpha}_\alpha$ is uniformly bounded 
  in operator norm and $D$ is a constant, for any $\varepsilon$, 
  there exists a $\delta'$ such that, for all $\delta \leq \delta'$, 
  setting $B = T_{\alpha+\delta}$ in Eq.~\ref{eq_theorem_D2} ensures that the left-hand side is smaller than $\varepsilon$.
  Thus, for any $\alpha \in \Delta$ there is a neighborhood
  of $\alpha$ whose intersection with $[0,1]$ is contained entirely within $\Delta$. 
  Additionally, by the same argument above, the complement of $\Delta$ in $[0,1]$ (which we denote $\Delta^{\text{c}}$) 
  has the property that, for any $\beta \in \Delta^{\text{c}}$, there is a neighborhood of $\beta$ 
  whose intersection with $[0,1]$ is contained entirely within $\beta \in \Delta^{\text{c}}$. 
  As a result, the supremum of $\Delta$ cannot lie in the interior of $[0,1]$; 
  since $0 \in \Delta$, the supremum of $\Delta$ must be strictly larger than $0$ by the previous neighborhood argument, and it thus follows that $\Delta = [0,1]$. 

  Since all other eigenvalues must have magnitude less than or equal to $\epsilon$, $\lambda$ is the largest-magnitude 
  eigenvalue of $T$. Finally, because $T$ is real, its eigenvalues must come in 
  complex-conjugate pairs. Because $\lambda$ is the only eigenvalue with magnitude within $\epsilon$ of unity, 
  it must be its own complex conjugate. Hence, $\lambda$ is real.
\end{proof}

\begin{lem}
  Let $T$ be a real matrix that can be expressed as $T = \ketbra{\psi} + E$, 
  where $E$ is an operator satisfying $\norm{E}^{\,}_{\mathrm{op}} \leq \epsilon \leq 1/4$. 
  Denote by $\lambda$ the real eigenvalue of $T$ satisfying $\lambda \geq 1-\epsilon$, 
  whose existence is guaranteed by Lemma~\ref{lem_perturbed_eigenvalue},
  and let $\ket{\lambda}$ be the associated eigenvector of $T$. Then, $T$ can be expressed as 
  $T = \lambda \ketbra{\lambda} + E'$, where the perturbing matrix $E'$ satisfies 
  $\norm{E'}^{\,}_{\mathrm{op}} \leq (1+\sqrt{2})\epsilon$ and $E' \ket{\lambda} = 0$.
  \label{lem_perturbation_norm_better}
\end{lem}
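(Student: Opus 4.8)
The plan is to take the obvious candidate $E' \coloneqq T - \lambda\ketbra{\lambda}$, where $\ket{\lambda}$ is the \emph{unit} eigenvector of $T$ with eigenvalue $\lambda$, and verify its two required properties in turn. The annihilation property is immediate: since $\ket{\lambda}$ is normalized, $E'\ket{\lambda} = T\ket{\lambda} - \lambda\ket{\lambda}\inprod{\lambda}{\lambda} = \lambda\ket{\lambda} - \lambda\ket{\lambda} = 0$, and the decomposition $T = \lambda\ketbra{\lambda} + E'$ holds by definition. Thus all the real content of the lemma lies in the operator-norm bound on $E'$.

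First I would reduce the norm computation to the orthogonal complement of $\ket{\lambda}$. Because $E'\ket{\lambda}=0$, writing an arbitrary unit vector as $a\ket{\lambda} + b\ket{w}$ with $\ket{w}\perp\ket{\lambda}$ and $|b|\le 1$ gives $\norm{E'(a\ket{\lambda}+b\ket{w})} = |b|\,\norm{E'\ket{w}}\le\norm{E'\ket{w}}$, so $\norm{E'}_{\mathrm{op}} = \sup\{\,\norm{E'\ket{w}} : \ket{w}\perp\ket{\lambda},\ \norm{\ket{w}}=1\,\}$. On this complement $\ketbra{\lambda}$ annihilates $\ket{w}$, so $E'\ket{w} = T\ket{w} = \inprod{\psi}{w}\ket{\psi} + E\ket{w}$, and the triangle inequality with $\norm{E}_{\mathrm{op}}\le\epsilon$ yields $\norm{E'\ket{w}}\le |\inprod{\psi}{w}| + \epsilon$.

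The crux is then to bound $|\inprod{\psi}{w}|$ for $\ket{w}\perp\ket{\lambda}$, which is at most $s\coloneqq(1-|\inprod{\psi}{\lambda}|^2)^{1/2}$, the sine of the angle between $\ket{\psi}$ and $\ket{\lambda}$ (equivalently, the norm of the component of $\ket{\psi}$ orthogonal to $\ket{\lambda}$, using that $\ket{\psi}$ is a unit vector). To control $s$ I would use the eigenvalue equation in the form $\ket{\lambda} = \lambda^{-1}(\inprod{\psi}{\lambda}\ket{\psi} + E\ket{\lambda})$: projecting off $\ket{\psi}$ shows that the component of $\ket{\lambda}$ orthogonal to $\ket{\psi}$—which also has norm $s$—equals $\lambda^{-1}$ times the orthogonal part of $E\ket{\lambda}$, so $s\le\norm{E\ket{\lambda}}/\lambda\le\epsilon/(1-\epsilon)$, where I invoke $\lambda\ge 1-\epsilon$ from Lemma~\ref{lem_perturbed_eigenvalue}. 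For $\epsilon\le 1/4$ one has $1/(1-\epsilon)\le\sqrt{2}$, hence $s\le\sqrt{2}\,\epsilon$, and combining with the previous paragraph gives $\norm{E'}_{\mathrm{op}}\le s+\epsilon\le(1+\sqrt{2})\epsilon$.

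The main obstacle is exactly the bound on $s$: the naive estimate $|\inprod{\psi}{\lambda}|^2 = \lambda - \matel{\lambda}{E}{\lambda}\ge 1-2\epsilon$ only gives $s\le\sqrt{2\epsilon}$, which scales like $\sqrt{\epsilon}$ and is far too weak. The improvement comes from extracting $s$ directly from the eigen\emph{vector} equation rather than from the eigenvalue, which turns the single factor $E\ket{\lambda}$ into a \emph{linear} dependence on $\epsilon$. A minor point to handle carefully is the reduction of $\norm{E'}_{\mathrm{op}}$ to a supremum over $\ket{\lambda}^{\perp}$, and the fact that the two triangle-inequality terms $|\inprod{\psi}{w}|$ and $\norm{E\ket{w}}$ need not be maximized by the same $\ket{w}$; the bound $s+\epsilon$ is nonetheless valid as a uniform upper bound even if it is not tight.
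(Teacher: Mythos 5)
Your proof is correct and follows essentially the same route as the paper's: the same choice $E' = T - \lambda\ketbra{\lambda}$, the same reduction of $\norm{E'}_{\mathrm{op}}$ to vectors orthogonal to $\ket{\lambda}$, the same triangle-inequality split into $\abs{\inprod{\psi}{w}} + \epsilon$, and the same key idea of extracting a linear-in-$\epsilon$ bound on the angle between $\ket{\psi}$ and $\ket{\lambda}$ from the eigenvector equation. The only difference is cosmetic: you project the eigenvector equation onto $\ket{\psi}^{\perp}$ in one step to get $s \leq \epsilon/\lambda \leq \epsilon/(1-\epsilon)$, whereas the paper projects onto $\bra{\lambda}$ and $\bra{\delta}$ separately to obtain $\delta \leq \epsilon/\sqrt{1-2\epsilon}$; your intermediate bound is in fact marginally tighter, and both yield $\sqrt{2}\,\epsilon$ for $\epsilon \leq 1/4$.
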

\begin{proof}
  Note that $E' = T - \lambda \ketbra\lambda$; because $\ket{\lambda}$ is an eigenvector of $T$ 
  with eigenvalue $\lambda$, it follows that $E'\ket{\lambda} = 0$. Next, we note that 
  $\norm{E'}^{\,}_{\mathrm{op}} = \max_{\ket{\phi}} \norm{E' \ket{\phi}}$, where the maximum is taken over 
  unit vectors $\ket{\phi} \in V$ satisfying $\norm{\ket{\phi}}=1$. We now bound $\norm{E' \ket{\phi}}$. 

  Without loss of generality, suppose that $\ket{\phi} = C_0 \ket{\lambda} + C_1 \ket{g}$, 
  where $\inprod{g}{g} = 1$, $\inprod{\lambda}{g} = 0$, and $\abs{C_0}^2 + \abs{C_1}^2 = 1$. 
  This implies that $\abs{C_{\ell}} \leq 1$ for $\ell=0,1$. Then, because 
  $E' \ket{\phi} = C_1 E'\ket{g} = C_1 (T - \lambda \ketbra{\lambda})\ket{g} = C_1 T\ket{g}$,
  we have that $\norm{E'\ket{\phi}} = \norm{ C_1 T \ket{g}} \leq {\norm{T \ket{g}}}$. 
  Substituting $T = E + \ketbra{\psi} $, we use the triangle inequality to recover 
  \begin{align}
    \norm{E' \ket{\phi}} &\leq \norm{ (E + \ketbra{\psi} ) \ket{g}} \notag \\
    &\leq  \norm{E \ket{g}} +  \norm{\inprod{\psi}{g} \ket{\psi}} \notag \\
    &\leq \epsilon + \abs{\inprod{\psi}{g}} \, . ~~\label{eq_E'_psi_g_bound}
  \end{align}
  To recover an upper bound on $\abs{\inprod{\psi}{g}}$, 
  we write $\ket\psi = \sqrt{1-\delta^2}\ket{\lambda} + \delta\ket{\delta}$, 
  where $\inprod{\delta}{\delta} = 1$, $\inprod{\delta}{\lambda} = 0$, and $0 \leq \delta \leq 1$.
  Then, it follows that $\inprod{\lambda}{\psi} = \sqrt{1-\delta^2}$ and 
  $\inprod{\delta}{\psi} = \delta$. Thus,
  \begin{equation}
    \label{eq_psi_g_bound}
    \abs{\inprod{\psi}{g}} = \delta \abs{\inprod{\delta}{g}} \leq \delta \, .~~
  \end{equation}
  It remains to bound $\delta$ from above. 
  Expanding $ \lambda \ket{\lambda} = (\ketbra{\psi} + E) \ket{\lambda}$ leads to
  \begin{equation}
    \label{eq_lem_perturb_eig_useful}
    \lambda\ket\lambda = \sqrt{1-\delta^2}\ket\psi + E\ket\lambda \, .~~
  \end{equation}
  Applying $\bra{\lambda}$ on the left and using $\inprod{\lambda}{\lambda}=1$, we find that
  \begin{equation*}
    \lambda = (1-\delta^2) + \matel{\lambda}{E}{\lambda} \leq 1  - \delta^2 + \epsilon \, ,~~
  \end{equation*}
  and so $1 - \delta^2 \geq \lambda - \epsilon$. Using the result $\lambda \geq 1 - \epsilon$ from 
  Lemma~\ref{lem_perturbed_eigenvalue} then implies that
  \begin{equation}
    1 - \delta^2 \geq 1 - 2 \epsilon \label{eq_lem_perturb_eig_delta_bound1} \, . ~~
  \end{equation}
  Next, applying $\bra{\delta}$ to Eq.~\ref{eq_lem_perturb_eig_useful} from the left, 
  we find that 
  \begin{equation*}
    0 = \delta \sqrt{1-\delta^2} +\matel{\delta}{E}{\lambda} 
    \implies \abs{\delta} \leq \frac{\abs{\matel{\delta}{E}{\lambda}}}{\sqrt{1-\delta^2}}
    \leq \frac{\epsilon}{\sqrt{1-\delta^2}} \, , ~~
  \end{equation*}
  where we used the fact that $\left| \matel{\delta}{E}{\lambda} \right| \leq \norm{{\ket{\delta}}} \, \norm{E}^{\,}_{\text{op}} \, \norm{\ket{\lambda}} = \norm{E}_{\text{op}} = \epsilon$, 
  since $\ket{\lambda}$ and $\ket{\delta}$ are unit vectors. 
  Combining the inequality above with 
  Eq.~\ref{eq_lem_perturb_eig_delta_bound1}, we find that
  \begin{equation}
    \delta \leq \frac{\epsilon}{\sqrt{1-\delta^2}} \leq \frac{\epsilon}{\sqrt{1-2\epsilon}}.
  \end{equation}
  For $\epsilon \leq 1/4$, we have that $(1-2\epsilon)^{-1/2} \leq \sqrt{2}$, and thus $\delta \leq \sqrt{2} \, \epsilon$.
  Combining the relation above with Eqs.~\ref{eq_E'_psi_g_bound} and \ref{eq_psi_g_bound} leads to $\norm{E'}^{\,}_{\mathrm{op}} \leq (1+\sqrt{2}) \epsilon$.

\end{proof}

\begin{lem}
  Let $\lambda \in (0,1]$ and $\epsilon \in [0,\lambda)$. Let $\ket{\lambda} \in V$ be a unit vector,  
  let $E$ be an operator on $V$ satisfying $\norm{E}^{\,}_{\mathrm{op}} \leq \epsilon$ and $E \ket{\lambda} = 0$, 
  and let  $T = \lambda \ketbra{\lambda} + E$. Then, there exists a vector \(\ket{f}\in V\) 
  with \(\norm{\ket{f}}\leq 1/(1-\epsilon/\lambda)\) such that, for all $\ncyc \in \Nats$ and all $\ket{a},\ket{b} \in V$,
  \begin{align}
    \left| \matel*{a}{T^{\ncyc}}{b} -
    \braket{a}{\lambda}\braket{f}{b}\lambda^{K}\right|
    &\leq
      \left(1+\frac{1}{1-\epsilon/\lambda}\right)
      \norm{\ket{a}}\norm{\ket{b}}\epsilon^{\ncyc} \, .~~
  \end{align}
\label{lem_vector_geo_sum}
\end{lem}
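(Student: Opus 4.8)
The plan is to obtain an exact closed form for $T^{\ncyc}$ by exploiting the single algebraic relation forced by $E\ket{\lambda}=0$, and then read off $\ket{f}$ as a geometric sum. Writing $P=\ketbra{\lambda}$, note that $T\ket{\lambda}=\lambda\ket{\lambda}$ and, crucially, $EP=(E\ket{\lambda})\bra{\lambda}=0$, while $P^2=P$. First I would expand $T^{\ncyc}=(\lambda P+E)^{\ncyc}$ as a sum over length-$\ncyc$ words in the letters $P$ and $E$, with each $P$ carrying a factor $\lambda$. Because $EP=0$, every word containing the substring $EP$ vanishes, so the only surviving words are those of the form $P^{a}E^{\ncyc-a}$ with $0\le a\le\ncyc$; using $P^{a}=P$ for $a\ge 1$, this collapses to
\[
  T^{\ncyc}=E^{\ncyc}+\sum_{a=1}^{\ncyc}\lambda^{a}\,\ket{\lambda}\bra{\lambda}E^{\ncyc-a}.
\]
Reindexing by $j=\ncyc-a$ and factoring out $\lambda^{\ncyc}$ then gives $T^{\ncyc}=E^{\ncyc}+\lambda^{\ncyc}\,\ket{\lambda}\sum_{j=0}^{\ncyc-1}\lambda^{-j}\bra{\lambda}E^{j}$.

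Next I would define the candidate functional by the \emph{full} geometric series, $\bra{f}\coloneqq\sum_{j=0}^{\infty}\lambda^{-j}\bra{\lambda}E^{j}$ (equivalently $\ket{f}=\sum_{j\ge 0}\lambda^{-j}(E^{\dagger})^{j}\ket{\lambda}$), which is manifestly independent of $\ncyc$. Since $\norm{E}_{\mathrm{op}}\le\epsilon$ and $\epsilon/\lambda<1$, the individual term norms are bounded by $(\epsilon/\lambda)^{j}$, so the series converges absolutely and
\[
  \norm{\ket{f}}\le\sum_{j=0}^{\infty}\Big(\tfrac{\epsilon}{\lambda}\Big)^{j}=\frac{1}{1-\epsilon/\lambda},
\]
which is exactly the required bound on $\ket{f}$.

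Finally I would subtract the target from $\matel{a}{T^{\ncyc}}{b}$. The finite sum appearing in $T^{\ncyc}$ is precisely the truncation of $\bra{f}$ to $j\le\ncyc-1$, so those terms cancel and only the error term $E^{\ncyc}$ and the tail of the series survive:
\[
  \matel{a}{T^{\ncyc}}{b}-\braket{a}{\lambda}\braket{f}{b}\lambda^{\ncyc}
  =\matel{a}{E^{\ncyc}}{b}-\lambda^{\ncyc}\braket{a}{\lambda}\sum_{j=\ncyc}^{\infty}\lambda^{-j}\matel{\lambda}{E^{j}}{b}.
\]
For the first term, $\lvert\matel{a}{E^{\ncyc}}{b}\rvert\le\norm{\ket{a}}\norm{\ket{b}}\epsilon^{\ncyc}$. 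For the tail, using $\lvert\braket{a}{\lambda}\rvert\le\norm{\ket{a}}$ and $\lvert\matel{\lambda}{E^{j}}{b}\rvert\le\epsilon^{j}\norm{\ket{b}}$ together with the identity $\lambda^{\ncyc}\sum_{j\ge\ncyc}(\epsilon/\lambda)^{j}=\epsilon^{\ncyc}/(1-\epsilon/\lambda)$ yields a bound of $\norm{\ket{a}}\norm{\ket{b}}\epsilon^{\ncyc}/(1-\epsilon/\lambda)$. Adding the two contributions produces the claimed prefactor $\bigl(1+1/(1-\epsilon/\lambda)\bigr)$. The main obstacle is step one: carrying out the noncommutative word expansion carefully and justifying that $EP=0$ together with $P^{2}=P$ leaves exactly the terms $P^{a}E^{\ncyc-a}$; once this clean expression for $T^{\ncyc}$ is in hand, the rest is elementary geometric-series estimation.
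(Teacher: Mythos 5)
Your proposal is correct and follows essentially the same route as the paper's proof: the same collapse of the noncommutative expansion of $(\lambda\ketbra{\lambda}+E)^{\ncyc}$ to the terms $P^{a}E^{\ncyc-a}$ via $E\ket{\lambda}=0$, the same choice of $\bra{f}$ as the full geometric series $\sum_{j\ge 0}\lambda^{-j}\bra{\lambda}E^{j}$ with the same norm bound, and the same split of the error into $E^{\ncyc}$ plus the series tail. Your explicit justification of the word collapse via $EP=0$ and $P^{2}=P$ is a slightly more careful rendering of the step the paper states without elaboration.
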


  \begin{proof}
    Because \(E\ket{\lambda}=0\),
    when expanding \(T^{\ncyc} = (\lambda\ketbra{\lambda} + E)^{\ncyc}\),
    only terms in the summand of the form \((\lambda\ketbra{\lambda})^{\cycind} E^{\ncyc-\cycind}\)
    survive. Therefore, 
    \begin{align}
      T^{\ncyc} &= \sum_{\cycind=0}^{\ncyc-1}\lambda^{\ncyc-\cycind}\ketbra{\lambda} E^{\cycind}+ E^{\ncyc}
      \notag\\
      &= \lambda^{\ncyc}\left(
        \sum_{\cycind=0}^{\infty}\ketbra{\lambda}E^{\cycind}/\lambda^{\cycind}
        - \sum_{\cycind=\ncyc}^{\infty}\ketbra{\lambda}E^{\cycind}/\lambda^{\cycind}
        \right) + E^{\ncyc} \, .~~
        \label{eq_lem23_a}
    \end{align}
    Let \(\bra{f_{\ell}}= \bra{\lambda} \sum_{\cycind=\ell}^{\infty}\left(E/\lambda\right)^{\cycind}\).
    Because \(\norm{E}^{\,}_{\mathrm{op}}\leq \epsilon\), the norm of the \(\cycind\)th term in the sum that defines 
    \(\bra{f_{\ell}}\) is upper bounded by \((\epsilon/\lambda)^{\cycind}\). Since \(\epsilon<\lambda\),
    the series converges, \(\ket{f_{\ell}}\) is well defined as a vector in 
    \(V\), and its norm satisfies 
    \(\norm{\ket{f_{\ell}}} \leq
    \sum_{\cycind=\ell}^{\infty}(\epsilon/\lambda)^{\cycind} =
    (\epsilon/\lambda)^{\ell}/(1-\epsilon/\lambda)\).

    Rearranging the terms in Eq.~\ref{eq_lem23_a} leads to
    \begin{align}
      T^{\ncyc}- \lambda^{\ncyc}\KBop{\lambda}{f_{0}}
      &= \lambda^{\ncyc}\KBop{\lambda}{f_{\ncyc}}+E^{\ncyc} \, .~~
        \label{eq_lem23_b}
    \end{align}
    To complete the proof, let \(\ket{f} \coloneqq \ket{f_{0}}\), and apply
    \(\bra{a}\) to Eq.~\ref{eq_lem23_b} from the left, apply \(\ket{b}\) from the right, 
    and take the absolute value to recover
    \begin{align}
      \left| \matel{a}{T^{\ncyc}}{b} - \braket{a}{\lambda} \braket{f}{b} \lambda^{\ncyc} \right|
      &= \left| \lambda^{\ncyc} \braket{a}{\lambda} \braket{f_{\ncyc}}{b} + \matel{a}{E^{\ncyc}}{b}\right|
        \notag \\
      &\leq
         \norm{\ket{a}} \, \norm{\ket{b}} \,\left( \lambda^{\ncyc} \norm{\ket{f_{\ncyc}}} 
        + \epsilon^{\ncyc} \right) \notag\\
      &\leq
        \left(1+\frac{1}{1-\epsilon/\lambda}\right)
        \norm{\ket{a}} \, \norm{\ket{b}} \, \epsilon^{\ncyc} \, ,~
    \end{align}
    where we used the triangle equality and the standard bound $\norm{\matel{a}{T}{b}} \leq \norm{\ket{a}} \, \norm{\ket{b}} \norm{T}^{\,}_{\text{op}}$.
  \end{proof}

\section{Proof of Theorem~\ref{thm_main}}
\label{sec_proof_main_thm}

Unless otherwise specified, the norm of any vector is taken to be the Euclidean ($L^2$) norm,
which induces a corresponding operator norm for matrices (where $\norm{A}^{\,}_{\text{op}}$ is the smallest number $c>0$ such 
that $\norm{A v} \leq c \norm{v}$ for all vectors $v$).

Overall, the proof strategy is to show that, for each QEC experimental setting $(p,m) \in \mathsf{Exp}$
and for each logical Pauli $P \in \PauliSet*{L}$,
the Pauli eigenvalue $\lambda_{p,m,P}^{(\ncyc)}$~\eqref{eq_qec_lambda_total_eig} of $\Lambda_{p,m}^{(\ncyc)}$~\eqref{eq_lambda_total} 
for the logical Pauli $P$ is well approximated by a function that decays exponentially in $\ncyc$.
The approximate exponential decay is constructed such that both the decay rate and the extent to which the decay 
approximates $\lambda_{p,m,P}^{(\ncyc)}$ are independent of $(p,m)$.
Then, Prop.~\ref{prop_exp_decay_approx_model} guarantees the existence of a $(G,\epsilon)$-approximate logical Markovian model 
for $\mathsf{Exp}$ (see Def.~\ref{def_markov_approx}). 

\begin{proof}
The starting point is the expression for the total logical error channel $\Lambda_{p,m}^{(\ncyc)}$ (see Cor.~\ref{cor_lambda_tot}); 
because of the \smip{}, we have that
\begin{align*}
  \Lambda^{(\ncyc)}_{p,m} (\sigma_L) = \hspace{-2mm} \sum_{s_0,\dots,s_\ncyc} \hspace{-0.7mm}
  \gamma^{\vpp}_{s_{\ncyc}} \hspace{-0.8mm} \cdots \gamma^{\vpp}_{s_1} 
  \gamma^{\vpp}_{s_0}~
  \widetilde{\Lambda}^{\text{(meas)}}_{s_\ncyc} \hspace{-0.3mm} \circ \widetilde{\Lambda}^{\vpp}_{s_{\ncyc},s_{\ncyc-1}} \hspace{-0.8mm} \circ 
  \cdots \circ \widetilde{\Lambda}^{\vpp}_{s_1,s_0}\hspace{-0.5mm}
  \circ  \widetilde{\Lambda}^{\text{(prep)}}_{s_0} ( \sigma_L)
  \tag{\ref{eq_qec_cor_frame_total}} \, ,~~
\end{align*}
where $\gamma^{\,}_{s_{\cycind}}$ is the syndrome marginal distribution (see Sec.~\ref{sec_smip}).

We now express each Pauli eigenvalue $\lambda_{p,m,P}^{(\ncyc)}$~\eqref{eq_qec_lambda_total_eig} of 
$\Lambda^{(\ncyc)}_{p,m}$~\eqref{eq_qec_cor_frame_total} in terms of a product of square matrices 
with dimension equal to the number of bitstrings of system $S$. Doing so facilitates the application 
of results from matrix perturbation theory (see App.~\ref{app_matrix_lemmas}).
First, for each logical Pauli $P \in \PauliSet*{L}$, we define a matrix $F_P$ with components 
\begin{equation}
  \left( F_P \right)_{\soutput,\sinput} \coloneqq 
  \widetilde{\lambda}^{\vpp}_{\soutput,\sinput,P} 
  \label{eq_F_def} \, ,~~
\end{equation}
where $\widetilde{\lambda}^{\vpp}_{\soutput,\sinput,P}$ is the Pauli eigenvalue of the logical error channel 
$\widetilde{\Lambda}^{\vpp}_{\soutput,\sinput}$~\eqref{eq_cor_frame_qec_chan} for the logical Pauli $P$ 
(see also Eq.~\ref{eq_qec_lambda_total_eig}). Next, we define a diagonal matrix associated 
with the syndrome marginal distribution $\gamma$, given by
\begin{equation}
  \label{eq_Sigma_def}
  \Sigma \coloneqq \operatorname{diag}(\gamma)  ~~ \implies ~~ \Sigma^{\vpp}_{\soutput,\sinput} = \kron{\soutput,\sinput} \gamma^{\vpp}_{\soutput} \, , ~~
\end{equation}
where $\kron*{s,s'}$ is the Kronecker delta. We also define a pair of vectors $\prepvec$ and $\measvec$, where
\begin{subequations}
  \label{eq_Markov_vectors}
  \begin{align}
    \left( \prepvec \right)_{s}  &\coloneqq \lambda^{(\text{prep})}_{s,P} 
    \label{eq_prep_vec} \\
    \left( \measvec \right)_{s}  &\coloneqq \lambda^{(\text{meas})}_{s,P} 
    \label{eq_meas_vec} \, ,~~
  \end{align}
\end{subequations}
for each logical Pauli $P \in \PauliSet*{L}$, where $\lambda^{(\text{prep})}_{s,P}$ and $\lambda^{(\text{meas})}_{s,P}$ 
are the Pauli eigenvalues of the channels $\widetilde{\Lambda}^{(\text{prep})}_{s}$~\eqref{eq_cor_frame_prep_chan} and 
$\widetilde{\Lambda}^{(\text{meas})}_{s}$~\eqref{eq_cor_frame_meas_chan}, respectively (see Eq.~\ref{eq_qec_lambda_total_eig}).
Combining the definitions above, the Pauli eigenvalue $\lambda^{(\ncyc)}_{p,m,P}$ of the total logical error channel 
$\Lambda_{p,m}^{(\ncyc)}$~\eqref{eq_qec_cor_frame_total} can be written
\begin{equation}
  \label{eq_Pauli_eig_matrix_naive}
  \lambda_{p,m,P}^{(\ncyc)} = \measvec^T (\Sigma F_P )^{\ncyc} \Sigma \, \prepvec \, ,~~
\end{equation}
for each logical Pauli $P \in \PauliSet*{L}$. 
It is convenient to rewrite the previous expression as
\begin{equation}
  \lambda_{p,m,P}^{(\ncyc)} = \measvec^T \,  \ssig (\ssig F_P \ssig)^{\ncyc} \ssig \, \prepvec \, , ~~
  \label{eq_qec_lambda_matrix_form}
\end{equation}
in order to highlight the fact that the behavior of $\lambda_{p,m,P}^{(\ncyc)} $ as a function of $\ncyc$ is heavily influenced by the eigenvalue structure of of $\ssig F_P \ssig$.

Next, we express $F_P$ as a perturbation of the matrix $N$ of all ones 
(i.e., $N^{\,}_{\soutput,\sinput}=1$ for all $\sinput$ and $\soutput$). In particular, we define the matrix $E_P$ such that
\begin{equation}
  F_P = N  - 2E_P \, ,~~
  \label{eq_approxmarkov_def_E}
\end{equation}
so that the entries of $E_P$  lie in the interval $[0,1]$ (see Eq.~\ref{eq_F_def}). 
We now use the condition $\epsilon_1 \leq 1/64$ of the theorem to show that the operator norm of 
$2\ssig E_P \ssig $ is small.

As discussed below Prop.~\ref{prop_ssp_smip}, 
the probability of the identity in the mixture of Pauli superoperators that makes up $\Lambda_1$ is equal to 
$f_1$ and the total probability of non-identity superoperators is $\epsilon_{1}$. For any Pauli $P$, we denote the corresponding Pauli eigenvalue of $\Lambda_1$ by $\lambda_{1,P}$.
The most negative contribution a non-identity superoperator can make to $\lambda_{1,P}$ occurs when it maps $P$ to $-P$.
Thus, for any $P$ the corresponding Pauli eigenvalue satisfies

\begin{equation}
  \label{eq_lambda1_bound}
  \lambda^{\vpp}_{1,P} \geq 1 - 2\epsilon_{1} \, . ~~
\end{equation} 
%
This bound is saturated in the worst-case scenario in which all nonidentity superoperators in $\Lambda_1$ 
associated with $\epsilon_1 > 0$ have Pauli eigenvalue $-1$ for the logical Pauli $P$.

Next, we bound the norm of $E_P$~\eqref{eq_approxmarkov_def_E} by relating $\lambda^{\,}_{1,P}$ to $F_P$.
In particular, for any Pauli $P$, we have that $\lambda^{\,}_{1,P} = \gamma^T F_P \gamma$. Because $\gamma$ is a probability 
distribution and $N$ is a matrix of ones, we have that $\gamma^T N \gamma = 1$. Combining this with the bound on  
$\lambda^{\,}_{1,P}$~\eqref{eq_lambda1_bound} leads to 
\begin{equation}
  \label{eq_lambda_EP_bound}
  \lambda^{\vpp}_{1,P} = \gamma^T F_P \gamma = 1 - 2 \gamma^T E_P \gamma \geq 1 - 2 \epsilon_1  \implies \gamma^T E_P \gamma \leq \epsilon_1 \,, ~~
\end{equation}
and combining the foregoing result with Prop.~\ref{prop_perturbation_norm_bound} leads to the bound
\begin{equation}
  \abs{2\ssig E_P \ssig}_\text{op} \leq 2\sqrt{\epsilon_1} \, . ~~
  \label{eq_perturb_norm_small}
\end{equation}
To make further progress, we observe that $\ssig N \ssig$ is a rank-one projector. To see this, 
it is convenient to write $\ssig N \ssig = \ketbra{\sqrt{\gamma}}$, where we define the unit vector
\begin{equation}
  \label{eq_sqrt_gamma_vec}
  \ket{\sqrt{\gamma}} = \sum\limits_{s} \gamma^{1/2}_s \, \ket{s} \, ,~~
\end{equation}
where $s$ runs over bitstrings of $S$, so that $\inprod{\sqrt{\gamma}}{\sqrt{\gamma}} = \sum_s \gamma^{\vpp}_s = 1$. 
It is straightforward to check that $\ketbra{\sqrt{\gamma}}$ is idempotent and has unit trace. 
Using Eq.~\ref{eq_sqrt_gamma_vec}, we write 
\begin{equation}
  \label{eq_FP_expansion}
  \ssig F_P \ssig = \ketbra{\sqrt\gamma} + 2\ssig E_P \ssig \,  ,~~
\end{equation}
which we use to show that $\lambda_{p,m,P}^{(\ncyc)}$~\eqref{eq_qec_lambda_total_eig} is 
well approximated by a function that decays exponentially in $\ncyc$.
We do so by applying Lemma~\ref{lem_perturbation_norm_better} with $T = \ssig F_P \ssig$,
$\ketbra{\psi} = \ketbra{\sqrt{\gamma}}$, and $E = 2 \ssig E_P \ssig$. The Lemma's 
requirement that $\norm{E}^{\,}_{\text{op}} \leq 1/4$ is automatically fulfilled by 
Eq.~\ref{eq_perturb_norm_small} and the assumption $\epsilon_1 \leq 1/64$ entering the theorem. 
As a result of Lemma~\ref{lem_perturbation_norm_better}, 
\begin{equation}
  \label{eq_SigFSig_decomp1}
  \ssig F_P \ssig = \lambda_P \ketbra{\lambda_P} + E' \, ,~~
\end{equation}
for a unit vector $\ket{\lambda_P}$ and a matrix $E'$ satisfying $\norm{E'}^{\,}_{\text{op}} \leq 2 ( 1 + \sqrt{2}) \sqrt{\epsilon_1}$ and $E'\ket{\lambda_P} = 0$, where
$\lambda_P \geq 1 - 2 \sqrt{\epsilon_1}$ by Lemma~\ref{lem_perturbed_eigenvalue}. Because $\epsilon_1 \leq 1/64$, 
we have that $\lambda_P \geq 3/4$ and $\norm{E'}^{\,}_{\text{op}} \leq (1 + \sqrt{2})/4$; 
as a result, $\norm{E'}^{\,}_{\text{op}} \leq \lambda_P$.

Next, we apply Lemma~\ref{lem_vector_geo_sum} to bound the extent to which $\lambda^{(\ncyc)}_{p,m,P}$ can be approximated by a single exponential decay as a function of $K$. 
We take $T = \ssig F_P \ssig $, with $\lambda_P$ and $\ketbra{\lambda_P}$ as in Eq.~\ref{eq_SigFSig_decomp1}, and set the operator $E$ in Lemma~\ref{lem_vector_geo_sum} to be $E'$ in Eq.~\ref{eq_SigFSig_decomp1}, where $\norm{E'}_\text{op} \leq \epsilon$ with $\epsilon \coloneqq 2(1+\sqrt{2})\sqrt{\epsilon_1}$.
We define the vectors 
$\ket{a} = \ssig \, \measvec$ and $\ket{b} = \ssig \, \prepvec$, where
\begin{equation}
  \norm{\ket{a}} = \norm{\ssig \, \measvec} 
  = \Big( \sum_s \gamma^{\vpp}_s \, \big| \lambda^{(\text{meas})}_{s,P} \big|^2 \Big)^{1/2} 
  \leq \Big( \sum_s \gamma^{\vpp}_s \Big)^{1/2} = 1 \, ,~~
\end{equation}
and we similarly have $\norm{\ket{b}} \leq 1$, since 
$\lvert \lambda^{(\text{meas})}_{s,P} \rvert, \lvert \lambda^{(\text{prep})}_{s,P} \rvert \in [0,1]$, 
as discussed above Eq.~\ref{eq_lambda1_bound}.
In terms of $\ket{a}$ and $\ket{b}$, $\lambda^{(\ncyc)}_{p,m,P} = \bra{b} T^K \ket{a}$.
Then,
Lemma~\ref{lem_vector_geo_sum} guarantees the existence of a vector $\ket{f}$ 
such that, after defining $C^{(\text{prep})}_{p,P} = \inprod{a}{\lambda_P}$ and $C^{(\text{meas})}_{m,P} = \inprod{f}{b}$ 
we have that
\begin{align}
  \left| \lambda^{(\ncyc)}_{p,m,P} - C^{(\text{prep})}_{p,P} C^{(\text{meas})}_{m,P} \lambda^{~\ncyc}_{P} \right| 
  &\leq \Big(1 + \frac{1}{1-\epsilon/\lambda}\Big) \, \epsilon^{\ncyc}\, ,~~
  \label{eq_we_did_it}
\end{align}
where all dependence on the experiment setting $(p,m)$ and the Pauli $P$ is shown explicitly.

Finally, we define a real number
\begin{equation}
  G' = 1 + \frac{1}{1-\epsilon/\lambda} \leq 1 + \frac{1}{1-\epsilon/(1-2\sqrt{\epsilon_1})} < 7 \, , ~~
\end{equation}
where the two inequalities above follow from $\lambda \geq 1-2\sqrt{\epsilon_1}$ and $\epsilon_1 \leq 1/64$, respectively.

The vector $\ket{\lambda}$ is real valued in the standard basis because both $T$ and $E$ are real matrices and $\lambda$ is real. 
As a result, the vector $\ket{f}$ in Lemma~\ref{lem_vector_geo_sum} is also real valued.
Moreover, for each logical Pauli $P$, $\lambda$ plays the role of $\chi^{\,}_P$~\eqref{eq_Phis_from_Paulis} in Prop.~\ref{prop_exp_decay_approx_model}, so that the latter is real valued.
Additionally, both $\ket{a}$ and $\ket{b}$ are real because $\lambda^{(\text{prep})}_{s,P}$ and $\lambda^{(\text{prep})}_{s,P}$ are the Pauli eigenvalues of a Pauli channel (see Eq.~\ref{eq_Markov_vectors}). Thus, both $C^{(\text{prep})}_{p,P}$ and $C^{(\text{meas})}_{m,P}$ are also real for all experimental settings $(p,m) \in \mathsf{Exp}$ and all logical Pauli operators $P$.

To conclude the proof we apply Prop.~\ref{prop_exp_decay_approx_model} along with Eq.~\ref{eq_we_did_it}.
This ensures the existence of a ($G,\epsilon$)-approximate logical Markovian model (see Def.~\ref{def_markov_approx}) for $\mathsf{Exp}$ with
$G = G'\sqrt{D}$ and $\epsilon = 2(1+\sqrt{2})\sqrt{\epsilon_1}$, 
where $D = 2^{\nlog} = \operatorname{dim}(\hilbert^{\,}_L)$ is the logical dimension.

\end{proof}

\section{Discussion of Definition~\ref{defn_log_fid}}
\label{sec_discuss_defn_ss_lf}

Here we discuss a concrete scenario in which the total logical error channel 
$\Lambda^{(\ncyc)}_{\text{tot}}$~\eqref{eq_lambda_total} is equal to the single-cycle logical error channel
$\Lambda_1$~\eqref{eq_Lambda_1} in Def.~\ref{defn_log_fid}. 
Thus, the scenario we consider gives an experimental means of probing $\Lambda_1$ directly, in principle.

In particular, we consider a QEC experiment (see Def.~\ref{defn_QEC_experiment}) involving a single QEC cycle (i.e., $\ncyc = 1$).
The main additional requirement is that the state-preparation procedure (see Sec.~\ref{sec_Pauli_noise_state_prep}) 
involves only correctable errors, and that the output state is thus of the from
\begin{equation}
  \label{eq_Lambda1_init_state}
  \rho_0 = \sigma_L \otimes \sum\limits_s \gamma^{\,}_s \, \ZProj{s} \, ,~~
\end{equation}
where $\gamma^{\,}_s$ is the syndrome marginal distribution associated with a QEC cycle 
(see the discussion near Def.~\ref{def_smip}). 
We then apply a single noisy QEC cycle to $\rho_0$, as in Sec.~\ref{sec_noisy_qec_cycles}, 
leading to
\begin{equation*}
  \tag{\ref{eq_Lambda_1}}
  \Lambda_1 (\sigma_L)= \sum\limits_{\soutput,\sinput} \, \gamma_{\soutput} \, \gamma_{\sinput}\, \widetilde{\Lambda}_{\soutput,\sinput} (\sigma_L) \, .~~
\end{equation*}
The single-cycle entanglement fidelity $f_1$ can be determined experimentally by making measurements of this state for different values of $\sigma_L$, provided that Eq.~\ref{eq_Lambda1_init_state} holds.

To provide further insight into 
$\Lambda_1$, we show that it is a first-order approximation to the logical superoperator $\Phi^{\,}_{\text{QEC}}$ 
in the Markovian model $\mathcal{M}$ constructed in the proof of Theorem~\ref{thm_main} (see App.~\ref{sec_proof_main_thm}).
First, we note that Eq.~\ref{eq_approxmarkov_def_E} gives an expression for the matrix $F_P$ whose largest eigenvalue 
determines the Pauli eigenvalue $\lambda^{\,}_P$ of the logical superoperator $\Phi^{\,}_{\text{QEC}}$ in Prop.~\ref{prop_exp_decay_approx_model}. 
Second, in Eq.~\ref{eq_FP_expansion}, we write $F_P$ as a perturbation about the rank-one projector $\ketbra{\sqrt{\gamma}}$,
whose largest eigenvalue is one. 
The perturbing operator in Eq.~\ref{eq_FP_expansion} is $2\ssig E_P \ssig$; 
to first order, the perturbation to the eigenvalue $\lambda=1$ of $\ketbra{\sqrt{\gamma}}$ is 
\begin{equation}
  -\matel{\sqrt{\gamma}}{2\ssig E_P \ssig}{\sqrt{\gamma}} = - 2 \gamma^T E_P \gamma \, , ~~
\end{equation}
where $\gamma$ is a vector whose components give the syndrome marginal distribution (see Sec.~\ref{sec_smip}).
Hence, the first-order approximation to the eigenvalue of $F_P$ is given by $\lambda = 1 - 2 \gamma^T E_P \gamma$, 
which is equal to $\gamma^T F_P \gamma$. This quantity is, in turn, equal to $\lambda^{\,}_{1,P}$~\eqref{eq_lambda_EP_bound}, 
the Pauli eigenvalue of $\Lambda_1$~\eqref{eq_Lambda_1} for the Pauli $P$. 
Hence, the channel $\Lambda_1$ determines a first-order approximation of the dominant eigenvalue of $F_P$, which in turn determines $\Phi^{\,}_{\text{QEC}}$ in Prop.~\ref{prop_exp_decay_approx_model}.

\bibliography{ref.bib}
\end{document}